\definecolor{Csmug}{RGB}{170,220,150}
\tikzset{vsmug/.style={pattern=north east lines,pattern color=Csmug,rounded corners=1}}
\newtheorem{thm}{Theorem}[section]
\newtheorem*{thm*}{Theorem}
\newtheorem{lem}[thm]{Lemma}
\newtheorem{cor}[thm]{Corollary}
\newtheorem{prop}[thm]{Proposition}
\newtheorem*{prop*}{Proposition}
\theoremstyle{definition}
\newtheorem{defi}[thm]{Definition}
\newtheorem{rmk}[thm]{Remark}
\DeclareMathOperator{\sel}{sel}
\DeclareMathOperator{\Pol}{Pol}
\DeclareMathOperator{\PMC}{PMC}
\DeclareMathOperator{\arOperator}{ar}
\renewcommand{\ar}{\arOperator}
\newcommand{\Mm}{\mathcal{M}}
\DeclareMathOperator{\PCSP}{PCSP}
\newcommand{\A}{\mathbf{A}}
\newcommand{\B}{\mathbf{B}}
\newcommand{\C}{\mathbf{C}}
\begin{document}

\author{Alex Brandts\\
University of Oxford\\
\texttt{alex.brandts@cs.ox.ac.uk}
\and
Marcin Wrochna\\
University of Oxford\\
\texttt{marcin.wrochna@cs.ox.ac.uk}
\and
Stanislav \v{Z}ivn\'y\\
University of Oxford\\
\texttt{standa.zivny@cs.ox.ac.uk}
}

\title{The complexity of promise SAT on non-Boolean domains\thanks{
An extended abstract of part of this work appeared in the Proceedings of
ICALP'20~\cite{BWZ20}.
Alex Brandts was supported by a Royal Society Enhancement Award and an NSERC PGS Doctoral Award. Stanislav \v{Z}ivn\'y was supported by a Royal Society University
Research Fellowship. This project has received funding from the European
Research Council (ERC) under the European Union's Horizon 2020 research and
innovation programme (grant agreement No 714532). The paper reflects only the
authors' views and not the views of the ERC or the European Commission. The
European Union is not liable for any use that may be made of the information
contained therein.}}

\maketitle

\begin{abstract}
  
While 3-SAT is NP-hard, 2-SAT is solvable in polynomial time.
Austrin, Guruswami, and H{\aa}stad [FOCS'14/SICOMP'17] proved a result known as ``$(2+\varepsilon)$-SAT is NP-hard''.
They showed that the problem of distinguishing $k$-CNF formulas that are $g$-satisfiable (i.e. some assignment satisfies at least $g$ literals in every clause)
from those that are not even 1-satisfiable is NP-hard if $\frac{g}{k} < \frac{1}{2}$ and is in P otherwise.
We study a generalisation of SAT on arbitrary finite domains,
with clauses that are disjunctions of unary constraints,
and establish analogous behaviour.
Thus we give a dichotomy for a natural fragment of
promise constraint satisfaction problems (PCSPs) on arbitrary finite domains.

The hardness side is proved using the algebraic approach via a new general
NP-hardness criterion on polymorphisms, which is based on a gap version
of the Layered Label Cover problem. We show that previously used criteria are
insufficient -- the problem hence gives an interesting benchmark of algebraic
techniques for proving hardness of approximation in problems such as PCSPs.
\end{abstract}

\section{Introduction}
\label{sec:intro}

It is a classic result that while 3-SAT is NP-hard~\cite{c71,l73}, 2-SAT can be
solved in polynomial-time~\cite{k67}. Austrin, Guruswami, and H{\aa}stad~\cite{agh17}
considered the promise problem $(1,g,k)$-SAT (for integers $1\leq g \leq k$):
given a $k$-CNF formula with the promise that
there is an assignment that satisfies at least $g$ literals in each clause,
find an assignment that satisfies at least one literal in each clause.
They showed that the problem is NP-hard if $\frac{g}{k} < \frac{1}{2}$
and in P otherwise.
Viewing $k$-SAT as $(1,1,k)$-SAT, this shows that, in a natural sense,
the transition from tractability to hardness
occurs just after 2 and not just before 3.

The \emph{set-satisfiability} (SetSAT) problem generalises the Boolean satisfiability problem to larger domains and we prove that it exhibits an analogous hardness transition.
As in $(a,g,k)$-SAT, for integer constants $1 \leq a \leq g \leq k$ and $1 \leq s < d$,
an instance of the $(a,g,k)$-SetSAT problem is a conjunction of clauses,
where each clause is a disjunction of $k$ literals.
However, variables $x_1,\dots,x_n$ can take values in a larger domain $[d]=\{1,\dots,d\}$,
while literals take the form ``$x_i \in S$'', where $S$ is any subset of the domain $[d]$ of size $s$.
As in the Boolean case, an assignment $\sigma \colon \{x_1,\dots,x_n\} \to [d]$ is $g$-satisfying if it satisfies at least $g$ literals in every clause.
In $(a,g,k)$-SetSAT with set size $s$ and domain size $d$,
given an instance promised to be $g$-satisfiable,
we are to find an $a$-satisfying assignment.
When $s=1$ and $d=2$ we recover Boolean promise SAT,
whereas when $a=g=1$ we recover the non-promise version of SetSAT.

The most natural case of SetSAT is when we allow \emph{all} nontrivial unary constraints (sets) as literals,
i.e., the case $s=d-1$.
(While we defined sets specifying literals to have size exactly $s$,
one can simulate sets of size at most $s$ by replacing them with all possible supersets of size $s$; see the proof of Proposition~\ref{setsize}.)
More generally one could consider the problem restricted to any family of literals.
Our work deals with the ``folded'' case: if a set $S$ is available as a literal, then for all permutations of the domain $\pi$, $\pi(S)$ is also available as a literal.
In this case only the cardinality of $S$ matters, and in fact only the maximum available cardinality matters,
so all such problems are equivalent to $(a,g,k)$-SetSAT, for some constants $a,g,k,s,d$.

\subsection{Related work} 

Our main motivation to study SetSAT as a promise problem is the fact that it constitutes a
natural fragment of so-called promise constraint satisfaction problems (PCSPs),
which are problems defined by homomorphisms between relational structures (see
Section~\ref{sec:prelims} for more details).
PCSPs were studied as early as in the classic work of Garey and
Johnson~\cite{Garey76:jacm} on approximate graph colouring, but a systematic
study originated in the paper of Austrin et al.~\cite{agh17}.
In a series of papers~\cite{bg16:ccc,bg18,bg19}, Brakensiek
and Guruswami linked PCSPs to the universal-algebraic methods developed for the
study of non-uniform CSPs~\cite{bkw17:survey}. In particular, the notion of (weak)
polymorphisms, formulated in~\cite{agh17}, allowed some ideas 
developed for CSPs to be used in the context of PCSPs. The algebraic
theory of PCSPs was then lifted to an abstract level by Barto, Bul\'in, Krokhin, and
Opr\v{s}al in~\cite{bko19, bbko19}. Consequently, this theory was used by Ficak,
Kozik, Ol\v{s}\'ak, and Stankiewicz to obtain a dichotomy for symmetric Boolean
PCSPs~\cite{fkos19}, thus improving on an earlier result
from~\cite{bg18}, which gave a dichotomy for symmetric Boolean PCSPs
with folding. Further recent results on PCSPs
include the work of Krokhin and Opr\v{s}al~\cite{ko19:focs}, Brakensiek and
Guruswami~\cite{bg20}, and Austrin, Bhangale, and Potukuchi~\cite{abp20}.

Variants of the Boolean satisfiability problem over larger domains have been
defined using CNFs by Gil, Hermann, Salzer, and Zanuttini~\cite{ghsz08} and DNFs
by Chen and Grohe~\cite{cg10} but, as far as we are
aware, have not been studied as promise problems before.

\subsection{Results}

We completely resolve the complexity of $(a,g,k)$-SetSAT.

\begin{thm}\label{thm:general}
	Let $1\leq s < d$ and $1\leq g \leq k$.
	The problem $(a,g,k)$-SetSAT with set size $s$ and domain size $d$ is solvable
  in polynomial time if $\frac{g-a+1}{k-a+1} \geq \frac{s}{s+1}$ and is
  NP-complete otherwise. 
\end{thm}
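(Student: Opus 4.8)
The plan is to first reduce to $a=1$ and then treat the two sides of the dichotomy separately. I claim $(a,g,k)$-SetSAT with set size $s$ and domain $d$ is polynomial-time equivalent to $(1,g',k')$-SetSAT with the same $s,d$, where $g'=g-a+1$ and $k'=k-a+1$; since $\frac{g'}{k'}=\frac{g-a+1}{k-a+1}$ and $k'-g'=k-g$, the threshold becomes $\frac{g'}{k'}\ge\frac{s}{s+1}$, i.e. $(s+1)(k-g)\le k$. For one direction, replace each clause by the conjunction of all its $(k-a+1)$-element subclauses: a $g$-satisfying assignment satisfies $\ge g-(a-1)=g'$ literals in every subclause, and an assignment satisfying $\ge 1$ literal in every subclause satisfies $\ge a$ literals in each original clause, as otherwise the $\ge k'$ false literals would form an all-false subclause. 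For the other direction, append to each $k'$-clause the literals $z_1\in S_1,\dots,z_{a-1}\in S_{a-1}$ on fresh variables, together with, for each $j$, a clause made of $k$ copies of $z_j\in S_j$ (which forces $z_j\in S_j$, since $a$-satisfying $k$ identical literals makes that literal true); this shifts the satisfied count by exactly $a-1$ on both sides. I would also record, for the hardness direction, that it suffices to treat $d=s+1$: an instance over $\{1,\dots,s+1\}$ is a YES instance over $[d]$ for free, and any $1$-satisfying assignment over $[d]$ stays $1$-satisfying after remapping every value outside $\{1,\dots,s+1\}$ to $1$, since all sets in play lie inside $\{1,\dots,s+1\}$ and so no satisfied literal is destroyed.

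\textbf{The tractable side.}
SetSAT is always in NP (a $g$-satisfying assignment is a short certificate), so I only need a polynomial-time algorithm when $\frac{g}{k}\ge\frac{s}{s+1}$. I would use the basic linear programming relaxation: variables $p_{x,v}\ge 0$ with $\sum_v p_{x,v}=1$ for each variable $x$, and for each clause $\bigvee_{i=1}^k(x_{j_i}\in S_i)$ the constraint $\sum_{i=1}^k\sum_{v\in S_i}p_{x_{j_i},v}\ge g$. On a YES instance this is feasible (read off a $g$-satisfying assignment); if it is infeasible, output NO. Given a feasible point $p$, round each $x$ to an argmax $\hat\sigma(x)$ of $p_{x,\cdot}$. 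If a clause were falsified, then every literal $(x\in S)$ in it has $\hat\sigma(x)\notin S$, hence $\sum_{v\in S}p_{x,v}\le\min\!\bigl(s\cdot\max_v p_{x,v},\,1-\max_v p_{x,v}\bigr)\le\max_{t\in[0,1]}\min(st,1-t)=\frac{s}{s+1}$; summing over the $k$ literals gives $g\le\frac{sk}{s+1}$, contradicting $\frac{g}{k}>\frac{s}{s+1}$. So in the strict case $\hat\sigma$ is $1$-satisfying on a YES instance, while a NO instance already has infeasible LP (otherwise rounding would $1$-satisfy a $1$-unsatisfiable instance). The equality case $\frac{g}{k}=\frac{s}{s+1}$ needs a short separate argument: tightness above forces a rigid structure (every relevant variable carries uniform LP mass on an $(s+1)$-set containing its argmax), after which the remaining tie-breaks can be fixed directly. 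For $s=1$ this is the algorithm of Austrin, Guruswami, and H{\aa}stad~\cite{agh17}.

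\textbf{The hard side.}
Now suppose $\frac{g}{k}<\frac{s}{s+1}$, equivalently $(s+1)(k-g)>k$, and take $d=s+1$ by the above. I would follow the algebraic route: NP-hardness of $(1,g,k)$-SetSAT $=\PCSP(\A,\B)$, with $\A,\B$ the templates encoding the $g$-satisfiable and $1$-satisfiable clauses, reduces to (i) a general criterion on $\Pol(\A,\B)$ and (ii) an analysis verifying it. For (i): if every polymorphism $f\colon[d]^n\to[d]$ — folded, i.e. equivariant under $\mathrm{Sym}([d])$ — admits a bounded \emph{layered selection}, schematically a short descending chain $I_1\supseteq I_2\supseteq\dots\supseteq I_\ell$ of coordinate sets of constant size that pin down $f$ level by level and behave consistently under the minor operations used below, then $\PCSP(\A,\B)$ is NP-hard; this I would prove by reduction from a gap version of the Layered Label Cover problem, NP-hard via the PCP theorem and parallel repetition in the multilayered form of Dinur, Guruswami, Khot, and Regev~\cite{DinurGKR05}, where a vertex's label encodes the ``important coordinates'' of its polymorphism and the several layers of the instance correspond to the levels $I_1,\dots,I_\ell$. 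For (ii): starting from the characterisation that $f$ is a polymorphism of $(1,g,k)$-SetSAT iff it admits no \emph{bad configuration} — $k$ columns on which every coordinate of $f$'s output avoids the corresponding $s$-set, while every input assignment has at most $k-g$ bad columns — I would extract the layered selection of each $f$; the quantitative heart is $(s+1)(k-g)>k$, which is precisely what lets one assemble a bad configuration from about $s+1$ groups of coordinates, each contributing $\le k-g$ bad columns per input, whenever $f$ lacks the required bounded structure at every scale.

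\textbf{Where the difficulty lies.}
The crux is step (ii) together with pinning down the right notion in (i). In the range $\frac12\le\frac{g}{k}<\frac{s}{s+1}$ the polymorphisms of $(1,g,k)$-SetSAT are substantially richer than those that arise for $(2+\varepsilon)$-SAT — for example they include near-unanimity-type operations of every arity, which possess no single bounded ``fixing set'' — so a single-layer Label Cover reduction, and likewise the abstract minion-homomorphism hardness criteria~\cite{bbko19} that suffice for many PCSPs, do not apply. One must instead show that, although there is no global bounded structure, there is bounded structure at each of about $s$ successive scales, and marry this to the multilayered PCP. I expect the real work to be in checking that the layered selection behaves coherently under the identification and minor operations the reduction performs, so that the projection constraints of Layered Label Cover are genuinely satisfied — it is this coherence, rather than the mere existence of the layered structure, that makes the previously used criteria insufficient.
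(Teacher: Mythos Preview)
Your reduction to $a=1$ and to $d=s+1$ is essentially the paper's (Propositions~\ref{prop:shift} and~\ref{domainsize}), and your identification of Layered Label Cover as the right hardness source, together with the observation that the single-layer and bounded-essential-arity / fixing-set criteria fail here, matches the paper's diagnosis in Section~\ref{sec:impos}.

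There is, however, a genuine gap on the tractable side. Your LP-plus-argmax rounding is correct in the strict range $\frac{g}{k}>\frac{s}{s+1}$, but the hand-wave for the boundary $\frac{g}{k}=\frac{s}{s+1}$ cannot be completed: the paper proves (Proposition~\ref{polyarity} together with \cite[Theorem~7.9]{bbko19}) that for $d=s+1$ there are no symmetric polymorphisms of arity divisible by $s{+}1$, and hence the basic LP relaxation \emph{provably does not} solve $(1,g,k)$-SetSAT at the threshold. Concretely, already for $s=1$, $g=1$, $k=2$ (i.e.\ 2-SAT) the unsatisfiable instance $\{x\vee y,\ \bar x\vee y,\ x\vee\bar y,\ \bar x\vee\bar y\}$ has the feasible fractional point $p_x=p_y=\tfrac12$, so ``tightness forces rigid structure and tie-breaks can be fixed directly'' is false. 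The paper instead uses a Papadimitriou-style randomised local search (Proposition~\ref{prop:randalg}), whose walk is exactly unbiased at the threshold; alternatively one can invoke the combined BLP+AIP relaxation of~\cite{bg20}, which works because symmetric polymorphisms of \emph{infinitely many} (though not all) arities exist.

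On the hardness side your outline has the right shape but the wrong internal mechanism. The paper does not attach to each polymorphism a chain $I_1\supseteq\cdots\supseteq I_\ell$ of small sets; it attaches a \emph{single} small set $\sel(f)$, and the layers of Layered Label Cover correspond to positions along a chain of \emph{minors} $f_0\to f_1\to\cdots\to f_\ell$, not to nested scales inside one $f$. The new notion is a \emph{smug set} of $f$: a set $S$ such that for some input $\bar v$, $\{i:v_i=f(\bar v)\}=S$. One shows that every polymorphism has a smug set of size $\le g$ (Lemma~\ref{lem:smallSmug}), has fewer than $\frac{k}{k-g}$ disjoint smug sets (Lemma~\ref{lem:disjoint}), and that smug sets pull back along minor maps; the criterion (Corollary~\ref{cor:smug}) then says these three properties force, in any long enough chain of minors, two selected sets to meet via the composed map, which is exactly what the reduction from $\ell$-Layered Label Cover needs. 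The ``coherence under minors'' you flag as the crux is precisely this preimage property; your picture of $s$ successive scales within a single $f$ is not how the argument runs.
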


Theorem~\ref{thm:general} easily follows, as described in
Section~\ref{sec:general}, from our main result, in which we show that the complexity of
$(1,g,k)$-SetSAT depends only on the ratio $\frac{g}{k}$. 

\begin{thm}\label{thm:main}
$(1,g,k)$-SetSAT with set size $s$ and domain size $s+1$ is solvable in polynomial time if
  $\frac{g}{k}\geq\frac{s}{s+1}$ and is NP-complete otherwise.
\end{thm}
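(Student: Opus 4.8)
The plan is to prove the two directions separately: a polynomial algorithm when $\frac{g}{k}\ge\frac{s}{s+1}$, and NP-hardness when $\frac{g}{k}<\frac{s}{s+1}$ (membership in NP is clear, since a $1$-satisfying assignment is a certificate). Throughout it is convenient to replace a literal ``$y\in S$'' (with $|S|=s$, so $|[s+1]\setminus S|=1$) by ``$y\ne v_S$'', where $v_S$ is the unique element outside $S$; a clause then asserts that the assignment, restricted to its $k$ slots, agrees with a fixed ``forbidden pattern'' $(v_1,\dots,v_k)$ in at most $m:=k-g$ positions, and we must find an assignment agreeing with each forbidden pattern in at most $k-1$ positions.

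\textbf{Tractability.} For each arity $n$ with $s+1\nmid n$ I would exhibit a polymorphism $f_n\colon[s+1]^n\to[s+1]$ of the template that is symmetric in its arguments, namely the ``plurality'' operation: $f_n(z)$ is a value occurring most often in the multiset $z$, ties broken arbitrarily (say by the order on $[s+1]$). To check this, stack $n$ valid tuples of a clause as the rows of an $n\times k$ matrix $Y$ over $[s+1]$; each row agrees with $(v_1,\dots,v_k)$ in at most $m$ positions, so the number of agreements is at most $nm$, and since $\frac mk=1-\frac gk\le\frac1{s+1}$ some column $c_i$ contains $v_i$ at most $\lfloor n/(s+1)\rfloor$ times. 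As $s+1\nmid n$, no value can occur exactly $n/(s+1)$ times, so the most frequent value in $c_i$ occurs strictly more than $\lfloor n/(s+1)\rfloor$ times; hence $f_n(c_i)\ne v_i$, which is exactly what is needed for the image tuple to satisfy the weaker, ``$1$-satisfying'' target relation. Since there are symmetric polymorphisms of arbitrarily large arity, $(1,g,k)$-SetSAT is decided by its basic linear programming relaxation and hence lies in P (the tractability criterion for PCSPs admitting symmetric polymorphisms of all large arities, cf.\ \cite{bg18,fkos19}).

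\textbf{Hardness.} For $\frac gk<\frac s{s+1}$, equivalently $g<sm$ and thus $k<(s+1)m$, I would follow the algebraic approach of \cite{bbko19} and reduce from a gap version of the \emph{Layered Label Cover} problem (with $L$ layers of variables, projection constraints between every two layers, and a $1$-vs-$\varepsilon$ gap), which is NP-hard for every fixed $L$ \cite{DinurGKR05} and which underlies the rainbow-colouring hardness results \cite{abp20}. The reduction is the standard long-code gadget: for each Label Cover vertex $v$ with label set $[\ell_v]$ we introduce a block of variables whose $1$-satisfying colourings correspond to the polymorphisms $f_v\colon[s+1]^{\ell_v}\to[s+1]$ of the template (see \cite{bbko19}), and we impose clauses of the $(1,g,k)$-SetSAT template across blocks joined by a Label Cover constraint so that (i) a labeling satisfying all Label Cover constraints yields, via the ``dictator'' colourings $z\mapsto z_{\lambda(v)}$ (projections are polymorphisms, hence always valid), a $g$-satisfying assignment; and (ii) any $1$-satisfying assignment colours each block by a polymorphism $f_v$, with the satisfied clauses forcing a consistency relation between the $f_v$ of Label-Cover-linked vertices. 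The crux is a structural statement about polymorphisms in the hard regime, which I would phrase via the following reformulation: $f\colon[s+1]^n\to[s+1]$ is a polymorphism iff for every choice of $k$ columns $c_1,\dots,c_k\in[s+1]^n$ there is a row $r\in[n]$ with $c_{i,r}=f(c_i)$ for at least $m+1$ indices $i$ (set $v_i:=f(c_i)$ and read off the clause). Using $k<(s+1)m$ I expect to show — this is precisely where the previously used criteria are too weak — that every such $f$ has a ``core'': a set $C_f\subseteq[n]$ of size bounded by a function of $s$ and $k$ only, which behaves consistently under Label Cover projections, though only up to the bounded ambiguity inherent in a set rather than a single coordinate. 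Decoding each $f_v$ to $C_{f_v}$ and pushing through the projections, a pigeonhole over the $L$ layers (in the style of \cite{DinurGKR05,abp20}) isolates two layers on which the decoded sets are consistent along a constant ($\ge 1/\mathrm{poly}$) fraction of the constraints between them, contradicting $\varepsilon$-soundness once $\varepsilon$ is small enough. Packaged as a general ``core-bounded (smug) polymorphisms $\Rightarrow$ NP-hard'' criterion and then verified for $(1,g,k)$-SetSAT, this gives the hardness half of Theorem~\ref{thm:main}; Theorem~\ref{thm:general} then follows by the padding argument of Section~\ref{sec:general}.

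\textbf{Main obstacle.} The hard part is the structural lemma: pinning down which ``bounded-core'' property (a) provably holds for \emph{every} polymorphism of the template when $\frac gk<\frac s{s+1}$, and (b) is still strong enough to drive a Layered Label Cover reduction. The match-set picture above shows that these polymorphisms need not be close to dictators — they can be genuine plurality-like or threshold-like functions on the non-core coordinates — so arguments built on a single influential coordinate, or on two-layer Label Cover, which suffice in the Boolean case \cite{agh17}, break down, and one is forced both to the layered variant and to a quantitatively tighter core-extraction argument that exploits the strict inequality $k<(s+1)m$.
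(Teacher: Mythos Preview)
Your overall architecture matches the paper's: tractability via plurality polymorphisms, hardness via the algebraic framework plus Layered Label Cover with a bounded ``distinguished set'' per polymorphism.

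\textbf{Tractability.} Your polymorphism computation is correct and is exactly Proposition~\ref{polyarity}, but the conclusion you draw fails at the boundary $\frac{g}{k}=\frac{s}{s+1}$. The basic LP criterion requires symmetric polymorphisms of \emph{every} arity, not merely of arbitrarily large arity; the paper shows (second half of Proposition~\ref{polyarity}) that when $\frac{g}{k}=\frac{s}{s+1}$ and $d=s+1$ there is \emph{no} symmetric polymorphism of any arity divisible by $s+1$, so basic LP provably does not solve the problem there. The fix is either to invoke the BLP$+$AIP criterion of~\cite{bg20}, which only needs infinitely many arities, or --- as the paper primarily does --- to run a Papadimitriou-style random walk (Proposition~\ref{prop:randalg}).

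\textbf{Hardness.} Your sketch has the right shape, and what you flag as the ``main obstacle'' is precisely the technical content of the paper. The notion you call a ``core'' is made concrete as a \emph{smug set}: for an input $\bar v$ to $f$, the set $\{i : v_i = f(\bar v)\}$. Three facts are needed: (a) if $f\xrightarrow{\pi} g$ and $S$ is smug for $g$, then $\pi^{-1}(S)$ is smug for $f$; (b) no polymorphism has $\lceil k/(k-g)\rceil$ pairwise disjoint smug sets (Lemma~\ref{lem:disjoint}, immediate from your reformulation); and (c) every polymorphism has a smug set of size at most $g$ when $\frac{g}{k}<\frac{s}{s+1}$ (Lemma~\ref{lem:smallSmug}). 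These feed into Corollary~\ref{cor:smug}, which is exactly the general criterion you anticipate. Note that smug sets move \emph{contravariantly} under minors (preimages, not images), which is why one is forced to the layered/chain-of-minors argument rather than ordinary Label Cover; your sketch does not address this directionality. The substantive work is~(c): it is an inductive construction, using conservativity (Proposition~\ref{conspoly}), that builds a growing system of overlapping smug sets together with a transversal $T$ of ``pivot'' coordinates until either a small smug set appears or Lemma~\ref{lem:smugPolym} is contradicted. Without a concrete definition of the distinguished set and this construction, your hardness argument remains a plan rather than a proof.
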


Theorem~\ref{thm:main} generalises the case of $(1,g,k)$-SAT studied in~\cite{agh17}, where
$s=1$ and the hardness threshold is $\frac{1}{2}$. 

The positive side of Theorem~\ref{thm:main} is proved in
Section~\ref{sec:tractability} by a simple randomised algorithm
based on classical work of Papadimitriou~\cite{pap1991}, just as in the Boolean
case~\cite{agh17}. We also establish the (non-)applicability
of certain convex relaxations for SetSAT.

The main difficulty is in proving NP-hardness when the ratio $\frac{g}{k}$ is
close to, but below $\frac{s}{s+1}$. In Appendix~\ref{app:simple}, we show that
for certain (but not all) choices of the parameters NP-hardness can be derived
easily by simple gadget constructions and via a result of Guruswami and Lee on
hypergraph colourings~\cite{gl18}.

Following~\cite{agh17} and the more abstract algebraic framework of~\cite{bbko19},
the hardness proof relies on understanding polymorphisms, i.e.,
high-arity functions $f \colon [d]^n \to [d]$ which describe
the symmetries of our computational problem.
In the Boolean case, the proof of \cite{agh17} relies on showing that every polymorphism
depends on only a few variables (in other words, is a junta),
and that this condition suffices for a reduction from the Gap Label Cover problem.
In our case, this condition does not hold, and neither do the various generalisations of it used in later work on PCSPs~\cite{fkos19,bbko19,ko19:focs}.
In fact, we show in Section~\ref{sec:impos} that SetSAT has significantly richer, more robust polymorphisms,
which makes the application of many such conditions impossible.
Our main technical contribution is a new condition that guarantees an NP-hardness
reduction from a multilayered variant of the Gap Label Cover problem.

As in previous work, the combinatorial core of our NP-hardness results for SetSAT
relies on identifying, in every polymorphism $f \colon [d]^n \to [d]$,
a small set of distinguished coordinates.
The rough idea is that a polymorphism encodes a $1$-in-$n$ choice analogously to the long code,
and the reduction relies on being able to decode $f$ with small ambiguity.

The set of distinguished coordinates could be, in the simplest case, those on which $f$ depends
(called \emph{essential coordinates}) and, as shown in \cite{agh17}, a small set
of essential coordinates is sufficient for hardness of $(1,g,k)$-SAT if
$\frac{g}{k}<\frac{1}{2}$.
More generally, the distinguished set $S$ could be such that some partial assignment 
to $S$ makes $f$ constant (as a function of its remaining coordinates), or restricts the range of $f$ (called \emph{fixing}~\cite{agh17,fkos19} and \emph{avoiding}~\cite{bbko19} sets, respectively).
As shown in Section~\ref{sec:impos}, the polymorphisms of SetSAT on non-Boolean
domains do not have small sets of coordinates that are essential, fixing, or
avoiding. Instead, in this paper we introduce the notion of a \emph{smug set} of $f$.
We say that a set $S \subseteq [n]$ is smug if 
for some input $(a_1,\ldots,a_n)$ to $f$,
the coordinates $i$ whose values $a_i$
agree with the output $f(a_1,\ldots,a_n)$ are exactly those in $S$.
We show that every polymorphism of SetSAT has a smug set of constant size (independent of $n$)
and cannot have many disjoint smug sets.

In previous work, it was crucial that essential coordinates respect minors.
We say that (an $m$-ary function) $g$ is a \emph{minor} of (an $n$-ary function) $f$ if $g(x_1,\dots,x_m) \approx f(x_{\pi(1)},\dots,x_{\pi(n)})$ for some $\pi \colon [n] \to [m]$ (that is, $g$ is obtained from $f$ by identifying or permuting coordinates of $f$, or introducing inessential coordinates).
In that case, if $S$ contains all essential coordinates of $f$, then $\pi(S)$ contains all essential coordinates of $g$.
This does not hold for smug sets; instead, if $S$ is a smug set of $g$, then its pre-image $\pi^{-1}(S)$ is a smug set of $f$.
The pre-image may however be much larger.
Still, these properties of smug sets are enough to guarantee that, in any sufficiently long chain of minors, if one chooses a random coordinate in a small smug set from each function in the chain,
then for some two functions in the chain the choices will agree, respecting the minor relation between them
with constant probability.
We show that this condition is sufficient to obtain NP-hardness from a Gap Layered Label Cover problem.
See Section~\ref{sec:labelcover} for details.

We note that several other properties of Label Cover variants were used before in the context of polymorphisms.
Guruswami and Sandeep~\cite{GS20:rainbow} use ``smoothness'' of NP-hard Label Cover instances
(introduced by Khot~\cite{Khot02})
so that a minor relation~$\pi$ needs to be respected only if it is injective on a small set $S$.
This allows them to use sets~$S$ which are \emph{weakly fixing}, i.e. the partial assignment to $S$ which makes $f$ constant does not necessarily have to assign the same value to all coordinates in $S$.
Layered Label Cover was introduced by Dinur, Guruswami, Khot, and Regev~\cite{DinurGKR05} to tighten the approximation hardness for hypergraph vertex cover.
In the proof of hardness of hypergraph colouring by Dinur, Regev, and
Smyth~\cite{Dinur05:combinatorica}, as reinterpreted in~\cite{bbko19}, Layered Label Cover
is used to partition polymorphisms into an arbitrary constant number $L$ of parts, so that only minors within one part need to be respected.
This implies that in any chain of minors with $L+1$ functions, some two functions will be in the same part and hence the minor between them will be respected; our approach is hence similar, though apparently more general, in this aspect.
Another feature used in~\cite{Dinur05:combinatorica,bbko19} is that the bound on the size of a set of special coordinates or on the number of disjoint such sets may be any subpolynomial function in $n$, not necessarily a constant.
These different features of NP-hard Label Cover instances can be combined; however, this is not necessary for our result.

\section{Preliminaries}
\label{sec:prelims}

Let $[n] = \{1,2,\ldots,n\}$. 
For a set $A$, we call $R \subseteq A^k$ a \emph{relation} of arity $\ar(R) =k$
and $f \colon A^k \to B$ a function of arity $\ar(f)=k$. 

We take the domain of the variables in SetSAT to be $[d]$ and
for a fixed $s < d$ we identify each literal with the indicator function of
some $S \subseteq [d], |S|=s$: $S(x)= \mathbbm{1}[x \in S]$. For a SetSAT
instance (or \emph{formula}) with $n$ variables $x_1,\ldots,x_n$, an assignment
to the variables is a function $\sigma : \{x_1,\ldots,x_n\} \to [d]$. An assignment $\sigma$ is called a $g$-satisfying assignment for an instance $\Psi$ if $\sigma$ satisfies at least $g$ literals in every clause of $\Psi$. A 1-satisfying assignment is usually simply called a satisfying assignment. A formula is called $g$-satisfiable if there exists a $g$-satisfying assignment to its variables, and satisfiable if there exists a 1-satisfying assignment. 

The SAT problem corresponds to the SetSAT problem with $d=2$ and $s=1$, so SetSAT does indeed generalise SAT. Note that every SetSAT instance is trivially unsatisfiable when $s=0$ and satisfiable when $s=d$, so we exclude these cases in our analysis. We now give the formal definition of $(a,g,k)$-SetSAT.

\begin{defi}\label{SetSATdef}
Let $1 \leq s < d$ and $1 \leq a \leq g \leq k$. The $(a,g,k)$-SetSAT problem is
  the following promise problem. In the decision version, given a SetSAT
  instance where each clause has $k$ (not necessarily distinct) literals, accept the instance if it is
  $g$-satisfiable and reject it if it is not $a$-satisfiable. In the search
  version, given a $g$-satisfiable SetSAT instance, find an $a$-satisfying assignment.
\end{defi}

We will prove hardness only for the decision version of $(a,g,k)$-SetSAT and
tractability only for the search version. This suffices since the decision
version of $(a,g,k)$-SetSAT is polynomial-time reducible to the corresponding
search problem. This is discussed in the context of PCSPs
in~\cite{bko19,bbko19}. For completeness, we give the reduction. Suppose we are
given a SetSAT formula $\Psi$. We run the search algorithm on $\Psi$, and check
that the output of the algorithm does indeed $a$-satisfy $\Psi$. If it does,
accept $\Psi$; otherwise, reject it. Since in the decision problem we are
guaranteed that the input is either $g$-satisfiable or not even $a$-satisfiable,
the algorithm is correct in both cases. Therefore, the algorithmic result in
Proposition~\ref{prop:randalg}, which solves the search version, applies to the
decision version as well, while our hardness results, which consider the
decision version, apply to the search version as well.

\paragraph{Promise CSPs}
\label{sec:pcsp}

We describe how the SetSAT problem fits into the general framework of promise CSPs (PCSPs). For a more in-depth algebraic study of PCSPs, we refer the reader to~\cite{bbko19}.

A \emph{relational structure} $\A$ is a tuple $(A; R_1,\ldots,R_m)$ where each
$R_i$ is a relation on $A$. We say that two relational structures are
\emph{similar} if their relations have the same sequence of arities. A
\emph{homomorphism} between similar relational structures $\A=(A;
R_1^\A,\ldots,,R_m^\A)$ and $\B=(B; R_1^\B,\ldots,R_m^\B)$ is a function $h:A
\to B$ such that $(a_1,\ldots,a_{\ar(R^\A_i)}) \in R^\A_i$ implies
$(h(a_1),\ldots,h(a_{\ar(R^\A_i)})) \in R^\B_i$ for all $i$. We denote this by
$\A \to \B$. 

\begin{defi}\label{defi:pcsp}
Let $(\A,\B)$ be a pair of similar relational structures such that there is a homomorphism $\A \to \B$. The pair $(\A,\B)$ is called the \emph{template} of the \emph{promise constraint satisfaction problem} $\PCSP(\A,\B)$. The decision version of $\PCSP(\A,\B)$ is as follows: given as input a relational structure $\mathbf{C}$ similar to $\A$ and $\B$, decide whether $\mathbf{C}$ admits a homomorphism to $\A$, or does not even admit a homomorphism to $\B$. The \emph{promise} is that it is never the case that $\mathbf{C} \to \B$ but $\mathbf{C} \not\to \A$. The search problem asks to find a homomorphism $\C \to \B$, given that there exists a homomorphism $\C \to \A$. 
\end{defi}

Since $(a,g,k)$-SetSAT is a PCSP where all relations have fixed arity $k$, it is
possible to transform SetSAT instances from their CNF representation into the
PCSP representation of Definition~\ref{defi:pcsp}. For domain size $d$ and set size $s$, there are $\binom{d}{s}$ different literals, and therefore $L:=\binom{d}{s}^k$ different types of clauses. Suppose that $f$ enumerates the types of clauses. We can represent each SetSAT instance $\Psi$ as a
relational structure $\C=(C; R^\C_1,\ldots,R^\C_L)$, where
$C=\{x_1,\ldots,x_n\}$ is the set of variables appearing in $\Psi$ and $R^\C_i$
is a $k$-ary relation corresponding to the clause $f(i)$. For each clause
$(S_1(x_1) \vee \ldots \vee S_k(x_k))$ of type $f(i)$ in $\Psi$, we add the
tuple $(x_1,\ldots,x_k)$ to $R^\C_i$, so that each $R^\C_i$ collects the tuples of variables appearing in clauses of type $f(i)$.

Now define $R^\A_i$ (respectively $R^\B_i$) to be the $k$-ary relation over
$[d]$ containing $(a_1,\ldots,a_k)$ if and only if $(a_1,\ldots,a_k)$
$g$-satisfies (respectively $a$-satisfies) the clause $f(i)$ when the variable
of the $j$-th literal of $f(i)$ is set to $a_j$, for $1\leq j \leq k$. Let
$\A=([d],R^\A_1,\ldots,R^\A_L)$ and $\B=([d],R^\B_1,\ldots,R^\B_L)$.
Then $(a,g,k)$-SetSAT is precisely $\PCSP(\A,\B)$: the identity function is a homomorphism from
$\A$ to $\B$, a homomorphism $\C \to \A$
represents a $g$-satisfying assignment to $\Psi$, and a homomorphism $\C \to \B$
represents an $a$-satisfying assignment to $\Psi$.

\paragraph{Polymorphisms}

The following concept from the algebraic study of PCSPs is central to our hardness result.

Let $f:A^m\to B$ be a function. We say that $f$ is a \emph{polymorphism} of the
template $(\A,\B)$ if, for $\bar{a}^1,\ldots,\bar{a}^m\in R^\A_i$, we have that
$f(\bar{a}^1,\ldots,\bar{a}^m)\in R^\B_i$; here $f$ is applied componentwise. We
will denote by $\Pol(\A,\B)$ the set of all polymorphisms of the template
$(\A,\B)$.
For every template, trivial polymorphisms are given by \emph{dictators},
which are functions $p$ of the form $p(x_1,\ldots,x_m)=f(x_i)$,
where $f$ is a homomorphism from $A$ to $B$.

In particular, $f:[d]^m\to[d]$ is a polymorphism of $(a,g,k)$-SetSAT if for every
SetSAT clause $C$ of width $k$ and for every tuple
$\bar{v}^1,\ldots,\bar{v}^m \in [d]^k$ of $g$-satisfying assignments to
$C$, we have that $f(\bar{v}^1,\ldots,\bar{v}^m)$  is an $a$-satisfying assignment to $C$.

\section{Tractability}
\label{sec:tractability}

How big must one make the fraction of satisfied literals in order for the SetSAT problem to become tractable? The following proposition shows that $\frac{s}{s+1}$ is sufficient. 

\begin{prop}\label{prop:randalg}
For $1\leq s < d$ and $\frac{g}{k} \geq \frac{s}{s+1}$, $(1,g,k)$-SetSAT is solvable in expected polynomial time.
\end{prop}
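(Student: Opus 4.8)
The plan is to adapt Papadimitriou's random-walk / resampling idea for 2-SAT to this larger-domain setting. Concretely, given a $g$-satisfiable instance $\Psi$ with $\frac{g}{k}\geq\frac{s}{s+1}$, I would run the following randomised local-search procedure: start from an arbitrary assignment $\sigma\colon\{x_1,\dots,x_n\}\to[d]$; while some clause $C$ is not $1$-satisfied (i.e.\ every literal of $C$ is currently falsified), pick such a clause, pick one of its $k$ literals uniformly at random, say ``$x_i\in S$'', and reassign $x_i$ to a uniformly random element of $S$. Repeat for a polynomial number of steps and output the current assignment if it ever becomes $1$-satisfying; otherwise report failure and restart. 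Since the decision-to-search reduction described after Definition~\ref{SetSATdef} only needs us to verify the output, a Las Vegas guarantee of expected polynomial time suffices.

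The key step is to analyse this walk via a potential argument. Fix a $g$-satisfying assignment $\tau$ (which exists by the promise) and let $\Phi(\sigma)=|\{i : \sigma(x_i)=\tau(x_i)\}|$, the number of coordinates on which $\sigma$ agrees with $\tau$; note $0\leq\Phi\leq n$. When we pick an unsatisfied clause $C$ and a random literal ``$x_j\in S_j$'' of it, I claim the expected change in $\Phi$ is nonnegative, in fact bounded below by a positive constant. The point is: because $\tau$ is $g$-satisfying, at least $g$ of the $k$ literals of $C$ are satisfied by $\tau$, so for at least $g$ choices of the random literal we have $\tau(x_j)\in S_j$; conditioned on such a choice, reassigning $x_j$ to a uniform element of $S_j$ lands on $\tau(x_j)$ with probability $\frac1s$, so (since $C$ is currently falsified, $\sigma(x_j)\notin S_j$, in particular $\sigma(x_j)\neq\tau(x_j)$, so agreement strictly increases by $1$ in this event) the expected gain from these literals is at least $\frac{g}{k}\cdot\frac1s$. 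For the remaining at most $k-g$ literals, $\tau(x_j)\notin S_j$, so after resampling $x_j\in S_j$ we still have $\sigma(x_j)\neq\tau(x_j)$; whether or not $\sigma$ previously agreed with $\tau$ at $x_j$, the agreement count drops by at most $1$, contributing at least $-\frac{k-g}{k}$. Hence $\mathbb{E}[\Delta\Phi]\geq \frac{g}{k}\cdot\frac1s - \frac{k-g}{k} = \frac{1}{k}\bigl(\tfrac{g}{s} - (k-g)\bigr) = \frac{1}{ks}\bigl(g - s(k-g)\bigr) = \frac{1}{ks}\bigl((s+1)g - sk\bigr)$, which is $\geq 0$ exactly when $\frac{g}{k}\geq\frac{s}{s+1}$.

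To turn a nonnegative drift into a polynomial running-time bound one has to be a little careful, since zero drift alone only gives a martingale, not hitting time bounds. The cleanest fix is to observe that the promise is an open condition in $g/k$ for the hardness threshold but here we are \emph{on} the tractable side, so I would instead argue that whenever $C$ is falsified, with probability at least $\frac{1}{ks}$ the step increases $\Phi$ by $1$ and never does $\Phi$ decrease by more than... — hmm, that is false as stated. A robust route: treat $\Phi$ as a walk on $\{0,\dots,n\}$ and couple it with a lazy biased walk whose probability of a $+1$ step (from the ``good'' literals) dominates the probability of a $-1$ step (from the ``bad'' literals) at every falsified clause, by the inequality above. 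By a standard gambler's-ruin / coupling estimate (as in Papadimitriou's analysis, or Sch\"oning's for $k$-SAT), starting anywhere the walk reaches $\Phi=n$ within $O(n^2)$ steps in expectation when the drift is nonnegative and the step sizes are $O(1)$; and $\Phi=n$ means $\sigma=\tau$, which is $g$-satisfying hence $1$-satisfying, so the walk terminates. I expect the main obstacle to be this last point — making the gambler's-ruin bound fully rigorous in the zero-drift borderline case $\frac{g}{k}=\frac{s}{s+1}$, where one needs the $O(n^2)$ expected-hitting-time estimate for an unbiased bounded-increment walk rather than the easy exponentially-decaying bound available when the drift is strictly positive. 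Everything else — the clause/literal bookkeeping and the reduction from decision to search — is routine.
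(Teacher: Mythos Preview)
Your proposal is correct and follows essentially the same route as the paper: both use Papadimitriou's random-walk algorithm (pick an unsatisfied clause, pick a random literal, resample its variable into the literal's set), analyse the Hamming distance to a fixed $g$-satisfying assignment, derive the identical drift bound $\frac{g}{ks}-\frac{k-g}{k}\geq 0$ from the hypothesis $\frac{g}{k}\geq\frac{s}{s+1}$, and then invoke the standard gambler's-ruin $O(n^2)$ hitting-time bound (with restarts) for the borderline zero-drift case. Your explicit worry about making the zero-drift case rigorous is exactly the one point the paper also treats informally; the needed fact is that $P(\Delta\Phi=+1)\geq\frac{g}{ks}>0$ uniformly, so the walk has variance bounded away from zero, which together with nonnegative drift gives the $O(n^2)$ bound via the usual quadratic potential / optional-stopping argument.
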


\begin{algorithm}
\caption{Randomised algorithm for $(1,g,k)$-SetSAT with $\frac{g}{k} \geq \frac{s}{s+1}$}\label{randalg}
\begin{algorithmic}[1]
\State $x \gets \text{arbitrary assignment}$
\While{$x$ does not satisfy input formula $\Psi$}
\State Arbitrarily pick a falsified clause $C$
\State Randomly choose from $C$ a literal $S(x_i)$ 
\State\label{satlit} Randomly choose a value for $x_i$ so that $S(x_i)$ is satisfied 
\EndWhile
\Return $x$

\end{algorithmic}
\end{algorithm}

\begin{proof}

Algorithm~\ref{randalg} finds a satisfying assignment to a $g$-satisfiable formula in expected polynomial time. The algorithm and its analysis are based on~\cite[Proposition~6.1]{agh17}, which in turn is based on Papadimitriou's randomised algorithm for 2-SAT~\cite{pap1991}.  

Suppose that $\Psi$ has a $g$-satisfying assignment $x^*$. Let $x^t$ be the
  assignment obtained in iteration $t$ of the algorithm, and let
  $D_t=\text{dist}(x^t,x^*)$, where dist$(x,y)$ is the Hamming distance between
  $x$ and $y$. Since $D_{t}-D_{t-1} \in \{-1,0,1\}$ for every $t$,\footnote{In
  the special case $d=2$, we have $D_{t}-D_{t-1}\in\{-1,1\}$.} we have 

\begin{align*}
\mathbb{E}(D_t - D_{t-1}) &=  \mathbb{P}(D_t - D_{t-1} = 1) -\mathbb{P}(D_t - D_{t-1} = -1)\\
&\leq \frac{k-g}{k} - \frac{g}{k}\frac{1}{s} \leq 0 \text{\quad if and only if \quad} \frac{g}{k} \geq \frac{s}{s+1}.
\end{align*}

The sequence $D_0,D_1,\ldots$ is a random walk starting between $0$ and $n$ with
each step either unbiased or biased towards $0$. This is a ``gambler's ruin''
chain with reflecting barrier (because the distance cannot increase beyond
$n$). With constant probability, such a walk hits $0$ (``the gambler is
broke'') within $n^2$ steps and the probability that the algorithm fails to
find a satisfying assignment within $crn^2$ steps is at most $2^{-r}$ for some
constant $c$. 
\end{proof}

\begin{rmk}
	The proof of Proposition~\ref{prop:randalg} can be modified to show that Algorithm~\ref{randalg} also finds a satisfying assignment when each literal corresponds to a set of size \emph{at most} $s$. This makes sense intuitively, as smaller literals give the algorithm a better chance of setting $x_i$ equal to $x^*_i$ in Step~\ref{satlit}.
\end{rmk}

We show that if $\frac{g}{k}\geq\frac{s}{s+1}$ then $(1,g,k)$-SetSAT has
a specific family of polymorphisms that leads to a \emph{deterministic} algorithm based
on linear programming.

A function $f:A^m \to B$ is \emph{symmetric} if $f(a_1,\ldots,a_m)=f(a_{\pi(1)},\ldots,a_{\pi(m)})$ for all $a_1,\ldots,a_m \in A$ and all permutations $\pi$ on $[m]$.

\begin{defi}
A symmetric function $f:[d]^m \to [d]$ is a \emph{plurality} if 
  \[f(x_1,\ldots,x_m)=\text{argmax}_{a \in [d]}\{ \text{\# of occurrences of
  }a\text{ in }(x_1,\ldots,x_m)\},\]
with ties broken in such a way that $f$ is symmetric.
\end{defi}

When $d=s+1$, all polymorphisms of SetSAT are
\emph{conservative}; i.e., they always return one of their input values, as the following proposition shows. This will be important when showing hardness, and also has implications for solvability by linear programming algorithms.

\begin{prop}\label{conspoly}
If $d=s+1$, then all polymorphisms of $(1,g,k)$-SetSAT are conservative.
\end{prop}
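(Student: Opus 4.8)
The plan is to argue by contradiction: from any non-conservative polymorphism I will build a single clause that witnesses a violation of the polymorphism condition.

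Suppose $f \colon [d]^m \to [d]$ is a polymorphism of $(1,g,k)$-SetSAT that is not conservative, so there is $\bar a = (a_1,\dots,a_m) \in [d]^m$ with $b := f(\bar a) \notin \{a_1,\dots,a_m\}$. The first step is to exploit $d = s+1$: a literal is a subset of $[d]$ of size $s = d-1$, i.e.\ the complement of a single domain element, so $S := [d]\setminus\{b\}$ is a legal literal of size $s$, and since no $a_t$ equals $b$ we have $a_t \in S$ for every $t$. Next, consider the clause type all of whose $k$ literals are $S$, with associated relations $R^\A, R^\B \subseteq [d]^k$. For each $t$ the constant tuple $\bar v^t := (a_t,\dots,a_t)$ satisfies all $k$ literals (as $a_t \in S$), hence lies in $R^\A$ since $g \le k$. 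Applying $f$ coordinatewise to $\bar v^1,\dots,\bar v^m$ yields $(f(\bar a),\dots,f(\bar a)) = (b,\dots,b)$, which satisfies no literal and so is not in $R^\B$. This contradicts $f \in \Pol(\A,\B)$; hence every polymorphism is conservative.

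I do not expect a genuine obstacle here; the proof is short once the right clause is spotted. The single point that needs care is exactly where the hypothesis $d = s+1$ enters: it is precisely what makes ``forbid the one value $b$, allow everything else'' expressible by a single literal of size $s$, and it also forces $\{a_1,\dots,a_m\} \subseteq [d]\setminus\{b\}$ to have at most $s$ elements, so that this set fits inside one literal. For $d > s+1$ the construction breaks down, which is consistent with the statement being specific to the case $d = s+1$.
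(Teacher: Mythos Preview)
Your proof is correct and follows essentially the same approach as the paper's: both assume a non-conservative value $f(\bar a)=b\notin\{a_1,\dots,a_m\}$, take the clause whose $k$ literals are all $S=[d]\setminus\{b\}$, feed in the constant $g$-satisfying tuples $(a_j,\dots,a_j)$, and obtain the non-$1$-satisfying output $(b,\dots,b)$. Your write-up is slightly more explicit about why $d=s+1$ is needed, but the argument is the same.
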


\begin{proof}
Let $f:[d]^m \to [d]$ be such that $f(a_1,\ldots,a_m) = b$ and $b \notin
  \{a_1,\ldots,a_m\}$. If $S$ is a literal not containing $b$, then the clause
  $(S(x_1) \vee \ldots \vee S(x_k))$ is $g$-satisfied (even $k$-satisfied) by setting all $x_i$ equal to any one of the $a_j$. Thus taking the $m$ assignments $(x_1=\dots=x_k=a_j)_{1 \leq j \leq m}$ and applying $f$ to each component, we get the assignment $x_1 = \dots = x_k = b$ which clearly does not 1-satisfy the clause, and so $f$ cannot be a polymorphism.
\end{proof}

\begin{prop}\label{polyarity}
Let $1 \leq s < d$. If $\frac{g}{k} > \frac{s}{s+1}$ then every plurality 
  function is a polymorphism of $(1,g,k)$-SetSAT. If $\frac{g}{k} = \frac{s}{s+1}$
  then every plurality function of arity $m \not\equiv 0 \mod s+1$ is
  a polymorphism of $(1,g,k)$-SetSAT, and no symmetric function of arity $m \equiv 0 \mod s+1$ is a polymorphism of $(1,g,k)$-SetSAT if we also have $d=s+1$.
\end{prop}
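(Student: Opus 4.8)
The plan is to settle the two positive (``is a polymorphism'') claims by a single double-counting argument and the negative claim by an explicit Latin-square gadget. For the positive claims, fix a width-$k$ clause $C = S_1(x_1)\vee\dots\vee S_k(x_k)$ with $|S_i|=s$ for all $i$, fix $m$ assignments $\bar v^1,\dots,\bar v^m\in[d]^k$ that each $g$-satisfy $C$, and let $f$ be a plurality of arity $m$. I want to show the coordinatewise image $w=f(\bar v^1,\dots,\bar v^m)$ $1$-satisfies $C$; suppose not, so $w_i\notin S_i$ for every $i\in[k]$. Set $n_i = |\{\ell\in[m] : v^\ell_i\in S_i\}|$ and count the set $I = \{(\ell,i) : v^\ell_i\in S_i\}$ two ways. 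Counting by $\ell$, each $\bar v^\ell$ contributes at least $g$ (it is $g$-satisfying), so $\sum_{i=1}^k n_i \ge gm$.

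Counting by $i$ is where the plurality property enters. Fix $i$. The $n_i$ entries of column $i$ lying in $S_i$ take at most $|S_i|=s$ distinct values, so by pigeonhole some value $a^\ast\in S_i$ occurs at least $n_i/s$ times in column $i$. Since $f$ is a plurality, $w_i$ occurs in column $i$ at least as often as $a^\ast$, hence at least $n_i/s$ times; and because $w_i\notin S_i$, those occurrences sit at indices $\ell$ disjoint from the $n_i$ indices witnessing $v^\ell_i\in S_i$. Therefore $m \ge n_i + n_i/s$, i.e.\ $n_i \le \tfrac{ms}{s+1}$. Summing over $i$ and combining with $\sum_i n_i\ge gm$ gives $\tfrac gk \le \tfrac s{s+1}$. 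When $\tfrac gk > \tfrac s{s+1}$ this is an immediate contradiction, so $w$ $1$-satisfies $C$ and $f$ is a polymorphism. When $\tfrac gk = \tfrac s{s+1}$ every inequality above is tight, forcing $n_i = \tfrac{ms}{s+1}$ for all $i$; since $\gcd(s,s+1)=1$, integrality of $n_i$ then forces $s+1\mid m$, so if $m\not\equiv 0\pmod{s+1}$ we again reach a contradiction and $f$ is a polymorphism.

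For the third statement assume $d=s+1$, so each literal is $[s+1]\setminus\{c\}$ for some $c\in[s+1]$, i.e.\ asserts that $x\neq c$, and $\tfrac gk=\tfrac s{s+1}$ forces $s+1\mid k$; write $k=r(s+1)$ (hence $g=rs$) and $m=t(s+1)$. Given an arbitrary symmetric $f\colon[s+1]^m\to[s+1]$, let $v_0$ be the value $f$ takes on (any arrangement of) the multiset $\bar u$ in which every element of $[s+1]$ occurs exactly $t$ times. I would use the single clause $C=\bigvee_{i=1}^k(x_i\neq v_0)$ and, as the $m$ assignments, the rows of an $m\times k$ matrix partitioned into a grid of $t$ row-blocks and $r$ column-blocks, each block being one fixed Latin square $L$ on $[s+1]$. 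Then every column of the matrix is a rearrangement of $\bar u$ (running down a column, each row-block contributes a column of $L$, a permutation of $[s+1]$), so by symmetry $f$ outputs $v_0$ on it and the image is $w=(v_0,\dots,v_0)$, which violates every literal of $C$. Every row, by contrast, takes the value $v_0$ exactly once per column-block (each row of $L$ is a permutation of $[s+1]$), hence agrees with the forbidden tuple in exactly $r=k-g$ coordinates and so $g$-satisfies $C$. Thus $f$ maps $g$-satisfying tuples of $C$ to a non-$1$-satisfying tuple and is not a polymorphism.

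The genuinely routine parts are the pigeonhole estimate and the block-matrix bookkeeping. The two steps needing care are: in the boundary case, that it is the \emph{integrality} of $n_i$ (through $\gcd(s,s+1)=1$) rather than a rational inequality that closes the argument; and, in the negative direction, the design principle that making every column of the matrix carry the \emph{same} multiset is exactly what lets one fixed gadget defeat all symmetric functions at once — the only thing the gadget must be tuned to is the value $v_0$, which depends on $f$ alone.
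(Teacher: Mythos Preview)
Your proof is correct and follows essentially the same approach as the paper. For the positive claims you phrase the counting as a global contradiction argument (bounding every $n_i$ and summing) whereas the paper phrases it directly (averaging to find one coordinate with at least $\lceil mg/k\rceil$ satisfying entries and arguing the plurality there is satisfying), but the underlying pigeonhole and arithmetic are identical, including the use of $\gcd(s,s+1)=1$ in the boundary case. For the negative claim your Latin-square gadget is exactly the paper's construction: the paper uses the specific circulant Latin square (cyclic shifts of $1\,2\cdots(s{+}1)$) and tiles it into a $k\times m$ matrix, but any Latin square on $[s{+}1]$ works for the same reason, and the choice of the clause from the value $v_0=f(\bar u)$ is the same.
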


\begin{proof}
Let $f$ be a plurality function of arity $m$. Given $m$ $g$-satisfying assignments to a clause of width $k$, we are guaranteed to have at least $mg$ satisfying values among the $mk$ total values. Therefore there is a coordinate $i$, $1\leq i \leq k$, containing at least $\left \lceil \frac{mg}{k} \right\rceil$ satisfying values, that is, at least $\left \lceil \frac{mg}{k} \right\rceil$ values not equal to any of the values $b_1,\ldots,b_{d-s}$ forbidden by the $i$-th literal of the clause. For $f$ to be a polymorphism, we require that the most frequent satisfying value in coordinate $i$ appears more often than all the $b_1,\ldots,b_{d-s}$ combined, for which it suffices that $\left\lceil \frac{mg}{k} \right\rceil/s > m-\left\lceil \frac{mg}{k} \right\rceil$. This is equivalent to $\left\lceil \frac{mg}{k} \right\rceil > \frac{s}{s+1}m$, which follows from $\frac{g}{k} > \frac{s}{s+1}$, so $f$ is a polymorphism.

In the case $\frac{g}{k} = \frac{s}{s+1}$, the same argument works so long as $\frac{mg}{k}$ is not an integer, since by taking the ceiling we obtain a value strictly greater than $\frac{s}{s+1}m$. Since $\frac{g}{k} = \frac{s}{s+1}$, we have $\frac{g}{k}m = \frac{s}{s+1}m$ and this is an integer only if $m$ is a multiple of $s+1$. 

To show that there are no symmetric polymorphisms when $\frac{g}{k}=\frac{s}{s+1}$, $m$ is a multiple of $s+1$, and $d=s+1$, note that $\frac{g}{k}=\frac{s}{s+1}$ implies that $k$ is divisible by $s+1$. Let $M$ be the $(s+1) \times (s+1)$ matrix whose first row is $12\cdots s+1$ and whose $i$-th row for $2 \leq i \leq s+1$ is obtained from the $(i-1)$-st row by shifting it cyclically to the left by one coordinate. We stack $ \frac{k}{s+1} $ copies of $M$ on top of each other and take $\frac{m}{s+1}$ copies of this stack side-by-side to form the $k \times m$ matrix $M'$. If $f$ is symmetric, it returns the same value $b \in \{1,\ldots,s+1\}$ when applied to each row of $M'$. Every column of $M'$ satisfies exactly an $\frac{s}{s+1}$-fraction of the literals in the clause whose $k$ literals are all $S_{\{1,\ldots,s+1\} \setminus \{b\}}$.
On the other hand, the assignment produced by applying $f$ to each row of $M'$ does not even 1-satisfy this clause, so $f$ is not a polymorphism.
\end{proof}

Proposition~\ref{polyarity} has interesting consequences for solvability of
$(1,g,k)$-SetSAT via convex relaxations. By~\cite[Theorem~7.9]{bbko19},
$(1,g,k)$-SetSAT is solvable by the basic linear programming relaxation if
$\frac{g}{k}>\frac{s}{s+1}$ (since there exist symmetric polymorphisms of all
arities) but not solvable by the basic linear programming relaxation if
$\frac{g}{k}=\frac{s}{s+1}$ and $d=s+1$ (since there do not exist symmetric polymorphisms of
all arities). By~\cite[Theorem~3.1]{bg20}, $(1,g,k)$-SetSAT is solvable by the
combined basic linear programming and affine integer programming relaxation if
$\frac{g}{k}\geq\frac{s}{s+1}$ (since there exist symmetric polymorphisms of
infinitely many arities). We note that iterative rounding of the basic linear
programming relaxation could also be used to get a deterministic algorithm as
in~\cite{agh17}.

\section{Layered Label Cover and smug sets}
\label{sec:labelcover}

In this section we define a variant of the Label Cover problem, which reduces the task of showing hardness to showing that all polymorphisms satisfy a certain combinatorial property, and then in Section~\ref{sec:smallsmug}, we show that the polymorphisms of SetSAT satisfy this property. From now on we assume that $d=s+1$, as in Theorem~\ref{thm:main}.

An \emph{$\ell$-Layered Label Cover} instance is a sequence of $\ell+1$ sets $X_0,\dots,X_\ell$ (called \emph{layers}) of variables with range $[m]$, for some \emph{domain size} $m\in\mathbb{N}$, and a set of constraints $\Phi$.
Each constraint is a function (often called a projection constraint) from a variable $x \in X_i$ to a variable in a further layer $y \in X_j$, $i < j$: that is, a function denoted $\phi_{x\to y}$ which is satisfied by an assignment $\sigma\colon X_0 \cup \dots \cup X_\ell \to [m]$ if $\sigma(y) = \phi_{x\to y}(\sigma(x))$.
A \emph{chain} is a sequence of variables $x_i \in X_i$ for $i=0,\dots,\ell$ such that there are constraints $\phi_{x_i \to x_j}$ between them, for $i < j$.
A chain is \emph{weakly satisfied} if at least one of these constraints is satisfied.

The basis for our hardness result is the hardness of distinguishing fully satisfiable instances from those where no constant fraction of chains can be weakly satisfied.
This follows by a simple adaptation of a reduction from the work of Dinur,
Guruswami, Khot, and Regev~\cite{DinurGKR05}.

\begin{thm}\label{thm:layeredGLC}
	For every $\ell \in \mathbb{N}$ and $\varepsilon>0$, there is an $m\in\mathbb{N}$ such that
	it is NP-hard to distinguish $\ell$-Layered Label Cover instances with domain size $m$ that are fully satisfiable
	from those where not even an $\varepsilon$-fraction of all chains is weakly satisfied.
\end{thm}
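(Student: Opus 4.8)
The plan is to reduce from the standard (non-layered) Gap Label Cover problem, whose NP-hardness is a consequence of the PCP Theorem together with Raz's Parallel Repetition Theorem~\cite{Arora1998:jacm-proof,Raz98:sicomp}, following the ``layering'' trick introduced by Dinur, Guruswami, Khot, and Regev~\cite{DinurGKR05}. Recall that Gap Label Cover is: given a bipartite constraint graph with variable sets $U,V$, ranges $[m_U],[m_V]$, and projection constraints $\pi_{v\to u}\colon [m_V]\to[m_U]$ on the edges, distinguish the case where all constraints can be simultaneously satisfied from the case where every assignment satisfies at most an $\eta$-fraction of constraints; this is NP-hard for every $\eta>0$ provided $m=m(\eta)$ is chosen large enough. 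By taking a direct product of $t$ independent copies of such an instance (equivalently, by applying parallel repetition once more), one obtains, for every $L\in\mathbb N$ and $\eta>0$, an instance where the soundness guarantee is strong enough to survive the union-bound losses incurred below.

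First I would build the $\ell$ layers. Take the NP-hard Gap Label Cover instance above with soundness parameter $\eta$ to be fixed later, and for $i=0,\dots,\ell$ let $X_i$ be a fresh copy of the vertex set $V$ of ``long'' variables, all with range $[m]:=[m_V]$. For each pair $i<j$ we need constraints $\phi_{x\to y}$ from $X_i$ to $X_j$. The natural choice is to route through the common ``short'' side $U$: for each short variable $u\in U$ and each pair of its neighbours $v_i\in V$ (for layer $i$) and $v_j\in V$ (for layer $j$), one would like a constraint from the copy of $v_i$ in $X_i$ to the copy of $v_j$ in $X_j$ — but $\pi_{v_i\to u}$ and $\pi_{v_j\to u}$ are not composable since they point the wrong way. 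The standard fix, exactly as in~\cite{DinurGKR05}, is to work with the \emph{smooth} or (more simply) with a ``label-extended'' version: place in each layer $X_i$ a variable for every edge of the Label Cover instance, with range $[m_V]$, and for $i<j$ put a projection constraint between two such variables $(u,v_i)\in X_i$ and $(u,v_j)\in X_j$ whenever they share the same short endpoint $u$; the projection constraint is ``$\sigma(u,v_j)$ must be consistent with $\sigma(u,v_i)$ via $u$'', i.e. it first applies $\pi_{v_i\to u}$ and then demands equality after $\pi_{v_j\to u}$ — formally a partial function, which one totalises by fixing an arbitrary default value so that it is defined everywhere, absorbing the resulting slack into the soundness. (Since our notion of weak satisfaction of a chain only requires \emph{one} constraint on the chain to hold, this totalisation costs nothing on the completeness side and only a bounded factor on the soundness side.)

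The key steps are then: (i) \textbf{Completeness.} If the original Label Cover instance is satisfiable by some $\sigma^*\colon V\to[m_V]$ (together with the induced $U$-assignment), assign to the copy of $v$ in each layer $X_i$ the value $\sigma^*(v)$. Every individual projection constraint is then satisfied, hence every chain is satisfied — not just weakly. So the constructed instance is fully satisfiable. (ii) \textbf{Soundness.} Suppose some assignment $\tau$ to $X_0\cup\dots\cup X_\ell$ weakly satisfies more than an $\varepsilon$-fraction of chains. A chain is determined by a choice of short variable $u$ together with one neighbour of $u$ in each layer; if an $\varepsilon$-fraction of these are weakly satisfied, then by averaging there is a layer-pair $(i,j)$, $i<j$, such that an $\Omega(\varepsilon/\ell^2)$-fraction of the $(u,v_i,v_j)$-triples have the constraint from $X_i$ to $X_j$ satisfied. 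Reading off from $\tau$ restricted to $X_i$ and $X_j$ a randomised assignment to $V$ (pick, for each $v$, a uniformly random copy among the layers where it appears, or just use layers $i$ and $j$), a satisfied constraint between $(u,v_i)$ and $(u,v_j)$ certifies that the two induced labels for $v_i$ and $v_j$ project to the same value under $\pi_{\cdot\to u}$, i.e. they satisfy the corresponding Label Cover edges. Hence $\tau$ yields an assignment to the original instance satisfying an $\Omega(\varepsilon/\ell^2)$-fraction of constraints. Choosing $\eta:=c\,\varepsilon/\ell^2$ for the appropriate absolute constant $c$ and then $m=m(\eta)$ from the Gap Label Cover hardness contradicts NP-hardness of the original instance unless $\mathrm{P}=\mathrm{NP}$.

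The main obstacle is the bookkeeping in step (ii): making sure that the totalisation of the projection constraints, the averaging over $\binom{\ell+1}{2}$ layer-pairs, and the randomised decoding to an assignment on $V$ each lose only a bounded (in $\ell$, $\varepsilon$) multiplicative factor, so that the final fraction one recovers on the original instance is a fixed positive quantity $\delta(\ell,\varepsilon)>0$; then one simply invokes the Gap Label Cover hardness with gap parameter $\eta<\delta(\ell,\varepsilon)$. None of this is deep — it is the argument of~\cite[\S]{DinurGKR05} with ``layers of long variables hanging off a common short variable'' — but it must be checked that our weaker notion of \emph{weakly} satisfied chain (only one of the $\binom{\ell+1}{2}$ constraints needs to hold) is exactly what the Layered Label Cover formulation of~\cite{DinurGKR05} provides, so the adaptation really is routine. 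I would therefore present step (ii) in full and state completeness and the construction briefly, citing~\cite{DinurGKR05} for the core counting.
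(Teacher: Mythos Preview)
Your reduction has a genuine gap at the point where you define the layer-to-layer constraints. In the paper's definition an $\ell$-Layered Label Cover constraint is a \emph{function} $\phi_{x\to y}\colon [m]\to[m]$; this is essential for the downstream Corollary~\ref{cor:sel}, where each constraint is turned into a minor identity $f_x\xrightarrow{\phi_{x\to y}} f_y$. Your constraint ``$\pi_{v_j\to u}(\sigma(u,v_j))=\pi_{v_i\to u}(\sigma(u,v_i))$'' is not a partial function of $\sigma(u,v_i)$: for a given value $a$ on the left there are typically many $b$ with $\pi_{v_j\to u}(b)=\pi_{v_i\to u}(a)$, since $\pi_{v_j\to u}$ is a non-injective projection. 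So what you are ``totalising'' is a genuine relation, not a partial function. If you force it into a function by picking one preimage, the problem is completeness, not soundness (you have this reversed): a satisfying assignment $\sigma^*$ of the original instance gives $\sigma^*(v_j)\in\pi_{v_j\to u}^{-1}(\sigma^*(u))$, but there is no reason it equals the particular preimage you hard-wired when defining $\phi$. Hence your constructed instance need not be fully satisfiable even when the original one is.

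The paper avoids this by making the layers genuinely asymmetric: $X_i:=Z^i\times Y^{\ell-i}$ with domain $[m]^\ell$, so that moving from layer $i$ to layer $j>i$ replaces $Y$-coordinates by $Z$-coordinates and the constraint applies the original projection $\phi_{y_k\to z_k}$ coordinatewise. This is a bona fide function, completeness is immediate, and chains correspond bijectively to $\ell$-tuples of original constraints. Your soundness sketch (pigeonhole over $\binom{\ell+1}{2}$ layer-pairs, then decode) is the right shape and matches the paper once bi-regularity is used to pass from ``fraction of chains'' to ``fraction of constraints between $X_i$ and $X_j$''; but the construction itself needs to be replaced by this tuple-based one (which is in fact what \cite{DinurGKR05} do).
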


\begin{proof}
	For $\ell=1$ a chain consists of just one constraint,
	so weakly satisfying the chain is the same as satisfying its constraint.
	The claim is then equivalent to the hardness of the standard Bipartite Gap Label Cover problem,
	which holds even for \emph{bi-regular} instances: that is, instances $(Y,Z)$ such that
	every variable in $Y$ occurs in constraints with exactly $d_{+}$ variables in $Z$ and
	every variable in $Z$ occurs in constraints with exactly $d_{-}$ variables in $Y$,
	for some $d_{+},d_{-} \in \mathbb{N}$. (This hardness follows from the PCP
	theorem~\cite{Arora1998:jacm-proof,Arora98:jacm} and Raz's parallel repetition
  theorem~\cite{Raz98:sicomp}, cf.~\cite{DinurGKR05} and~\cite{Arora09:book}.)
	
	For $\ell>1$ we reduce from a bi-regular instance of Bipartite Gap Label Cover with variable sets $Y$ and $Z$, domain size $m$, constraints $\Gamma$ and gap $\varepsilon' := \varepsilon/\binom{\ell+1}{2}$.
	Let the domain size of the constructed instance $\Phi$ be $m^\ell$.
	Let the variable sets be $X_i := Z^i \times Y^{\ell-i}$ for $i=0,\dots,\ell$
	(that is, $\ell$-tuples of $i$ variables from $Z$ followed by $\ell-i$ variables from $Y$;
	this makes indices notationally more convenient than the other way around).
	Let the constraints between $X_i$ and $X_{j}$ (for $0\leq i<j \leq \ell$) be defined for pairs of tuples $\bar{x}$ and $\bar{x}'$ of the form:
	\begin{align*}
	\bar{x}=(z_1,\dots,z_i,\ \ &y_{i+1},\dots,y_j,\ \ y_{j+1},\dots,y_\ell)\in X_i
	\text{\quad and}\\
	\bar{x}'=(z_1,\dots,z_i,\ \ &z_{i+1},\dots,z_j,\ \ y_{j+1},\dots,y_\ell)\in X_j
	\end{align*}
	such that the original instance has a constraint $\phi_{y_{k}\to z_{k}} \in \Gamma$ for $k= i+1,\dots,j$.
	Let the new projection constraint $\phi_{\bar{x}\to\bar{x}'}$ map $(a_1,\dots,a_\ell)$ to $(b_1,\dots,b_\ell)$
	where $b_k := \phi_{y_k\to z_k}(a_k)$ for $k= i+1,\dots,j$ and $b_k := a_k$ otherwise.
	This concludes the construction.
	
	Note that chains in this instance are in bijection with $\ell$-tuples of original constraints in $\Gamma$.
	Indeed, a chain $\bar{x}_i \in X_i$ ($i=0,\dots,\ell$) is determined by $\bar{x}_0=(y_1,\dots,y_\ell)$ and $\bar{x}_\ell=(z_1,\dots,z_\ell)$
	such that $\Gamma$ has constraints $\phi_{y_k\to z_k}$ for $k=1,\dots,\ell$.
	Moreover, for each $i<j$, every constraint $\phi_{\bar{x}\to\bar{x}'}$ between $\bar{x}\in X_i$ and $\bar{x}'\in X_j$ appears in the same number of chains
	(namely $d_{-}^{\ \,i}\cdot d_{+}^{\ \ell-j}$).
	
	If the original instance $\Gamma$ was fully satisfiable then so is the new one $\Phi$:
	indeed, if $\sigma$ is a satisfying assignment for $\Gamma$,
	then $\bar{x} \mapsto(\sigma(x_1),\dots,\sigma(x_\ell))$ is a satisfying assignment for $\Phi$.
	
	Suppose now that in $\Phi$, an assignment $\sigma\colon X_0\cup\dots\cup X_\ell \to [m]^\ell$
	weakly satisfies at least $\varepsilon$ of all chains.
	Then there exists $0\leq i < j \leq \ell$ such that
	at least $\varepsilon/\binom{\ell+1}{2}=\varepsilon'$ of all chains
	are weakly satisfied at a constraint between $X_i$ and $X_j$.
	Every constraint between $X_i$ and $X_j$ is contained in the same number of chains, say $C$,
	hence at least $\varepsilon'$ of the constraints between $X_i$ and $X_j$ are satisfied
	(indeed, the number of thus satisfied chains is exactly $C$ times the number of satisfied constraints; similarly, the number of all chains is exactly $C$ times the number of all constraints between $X_i$ and $X_j$).
	
	Choose an arbitrary coordinate $k$ in $i+1,\dots,j$.
	Partition $X_i$ into equivalence classes such that $\bar{x},\bar{x}'$ are in the same class if they are identical on all coordinates except possibly coordinate $k$.
	Partition $X_j$ in the same way.
	There exists a pair of classes between which constraints exist and at least $\varepsilon'$ of them are satisfied.
	That is, there are
	\begin{align*}
	x_1,\dots,x_{k-1},\ \ &x_{k+1},\dots,x_\ell \in Y \cup Z\text{\quad and}\\
	x'_1,\dots,x'_{k-1},\ \ &x'_{k+1},\dots,x'_\ell \in Y \cup Z
	\end{align*}	
	such that $\sigma$ satisfies at least $\varepsilon'$ of the constraints between pairs of the form
	\begin{align*}
	(x_1,\dots,x_{k-1},\ y,\ x_{k+1},\dots,x_\ell) \in X_i \quad\ \quad\\
	(x'_1,\dots,x'_{k-1},\ z,\ x'_{k+1},\dots,x'_\ell) \in X_j \quad\ \quad
	\end{align*}
	where a constraint $\phi_{y\to z}$ exists in $\Gamma$.
	Therefore, one can define an assignment $\sigma'\colon Y\cup Z \to [m]$ 
	by letting $\sigma'(y)$ and $\sigma'(z)$ be the $k$-th element of the value in $[m]^\ell$ resulting from applying $\sigma$ to the above tuples, respectively for $y \in Y$ and $z \in Z$.
	This assignment then satisfies at least $\varepsilon'$ of all the constraints $\phi_{y\to z}$ of the original instance $\Gamma$.
\end{proof}

In order to use Theorem~\ref{thm:layeredGLC} to derive hardness for PCSPs,
we use the algebraic approach:
every PCSP is equivalent to a promise problem about satisfying minor conditions with polymorphisms.
We give definitions first, following~\cite{bbko19}, to where we refer the reader for a more
detailed exposition.

For $f\colon A^n \to B$, $g\colon A^m \to B$, and $\pi\colon [n]\to [m]$,
we say that $g$ is the \emph{minor} of $f$ obtained from $\pi$ if
\begin{equation}\label{eq:minor}
  g(x_1,\dots,x_m) \approx f(x_{\pi(1)}, \dots, x_{\pi(n)}),
\end{equation}
where $g\approx f$ means that the values of $g$ and $f$ agree on every
input in $A^m$.
We write $f \xrightarrow{\pi} g$ as a shorthand for \eqref{eq:minor}.
For $\pi\colon [n]\to [m]$, the expression $f \xrightarrow{\pi} g$  is called a \emph{minor identity}.

A \emph{minion} on a pair of sets $(A,B)$ is a non-empty set of functions from $A^n$ to $B$ (for $n \in \mathbb{N}$)
that is closed under taking minors.

A \emph{bipartite minor condition} is a finite set $\Sigma$ of minor identities
where the sets of function symbols used on the left- and right-hand sides are
disjoint. More precisely, $\Sigma$ is
a pair of disjoint sets $U$ and $V$ of function symbols of arity $n$ and $m$, respectively,
and a set of minor identities of the form $f \xrightarrow{\pi} g$,
where $g\in U$, $f\in V$ and $\pi:[n]\to[m]$.
A bipartite minor condition $\Sigma$ is \emph{satisfied} in a minion $\Mm$
if there is an assignment $\xi:U\cup V\to\Mm$ that assigns
to each function symbol a function from $\Mm$ of the corresponding arity so that
for every identity $f \xrightarrow{\pi} g$ in $\Sigma$,
we have $\xi(f) \xrightarrow{\pi} \xi(g)$ in $\Mm$.
A bipartite minor condition is called \emph{trivial} if it is satisfied in every
minion, or equivalently, in the minion consisting of all projections on $\{0,1\}$.
Since choosing a projection of arity $n$ is the same as choosing an element of
$[n]$, deciding whether a bipartite minor condition is trivial is the same as
the standard Label Cover~\cite{Arora09:book}.

We can now define the \emph{promise satisfaction of a minor condition}
problem. For a minion $\Mm$ and an integer $m$, $\PMC_\Mm(m)$ is the following promise problem: given a bipartite minor
condition $\Sigma$ that involves only symbols of arity at most $m$, the answer
should be YES if $\Sigma$ is trivial and \textsc{NO} if $\Sigma$ is not satisfiable in
$\Mm$ (the promise is that either of those two cases holds, i.e. an algorithm can behave arbitrarily otherwise).
Barto et al.~\cite{bbko19} show that $\PCSP(\A,\B)$ is log-space equivalent to $\PMC_\Mm(m)$, for $\Mm = \Pol(\A,\B)$ and $m$ a constant depending on $\A$ only.

A final piece of notation before we prove a corollary of Theorem~\ref{thm:layeredGLC}. 
A \emph{chain of minors} is a sequence of the form $f_0 \xrightarrow{\pi_{0,1}} f_1 \xrightarrow{\pi_{1,2}} \dots \xrightarrow{\pi_{\ell-1,\ell}} f_\ell$.
We shall then write $\pi_{i,j} \colon [\ar(f_i)]\to[\ar(f_j)]$ for the composition of $\pi_{i,i+1}, \dots, \pi_{j-1,j}$, for any $0\leq i < j\leq \ell$.
Note that $f_i \xrightarrow{\pi_{i,j}} f_j$.

\begin{cor}[of Theorem~\ref{thm:layeredGLC}]
  \label{cor:sel}
	Let $\Mm$ be a minion.
	Suppose there are constants $k,\ell\in\mathbb{N}$ and an assignment of a set of at most $k$ coordinates $\sel(f)\subseteq[\ar(f)]$ to every $f\in\Mm$ such that for every chain of minors $f_0 \xrightarrow{\pi_{0,1}} f_1 \xrightarrow{\pi_{1,2}} \dots \xrightarrow{\pi_{\ell-1,\ell}} f_\ell$,
	there are $0\leq i < j \leq \ell$ such that $\pi_{i,j}(\sel(f_i))\cap \sel(f_j) \neq \emptyset$.
	Then $\PMC_\Mm(m)$ is NP-hard, for $m$ large enough.
  In particular, if $\Mm = \Pol(\A,\B)$, then $\PCSP(\A,\B)$ is NP-hard.
\end{cor}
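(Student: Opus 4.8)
The goal is to reduce the Gap $\ell$-Layered Label Cover problem of Theorem~\ref{thm:layeredGLC} to $\PMC_\Mm(m)$, using the hypothesised $\sel$ assignment to decode a consistent labelling on the YES side. First I would fix the parameters: given the minion $\Mm$ with its constants $k$ and $\ell$, I apply Theorem~\ref{thm:layeredGLC} with this same $\ell$ and with $\varepsilon := 1/k^2$ (or any constant below $1/\binom{\ell+1}{2}\cdot\frac{1}{k^2}$ once the probabilistic argument is pinned down) to obtain a domain size $m$. Then from an $\ell$-Layered Label Cover instance with layers $X_0,\dots,X_\ell$, domain $[m]$, and projection constraints $\Phi$, I build a bipartite minor condition $\Sigma$: introduce one function symbol $F_x$ of arity $m$ for every variable $x$ in every layer, and for every constraint $\phi_{x\to y}\in\Phi$ (with $x\in X_i$, $y\in X_j$, $i<j$) add the minor identity $F_x \xrightarrow{\phi_{x\to y}} F_y$, treating $\phi_{x\to y}\colon[m]\to[m]$ as the minor map. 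Function symbols in even-indexed layers go on the left ($U$) and odd-indexed layers on the right ($V$) — or, since a single bipartite condition only has two sides, one instead works layer-pair by layer-pair, or appeals directly to the minor-condition formulation; I would simply use the general form of $\PMC$ that allows a chain-structured condition, as the reduction of Barto et al.~\cite{bbko19} does. The instance $\Sigma$ has size polynomial in the Label Cover instance since $m,\ell,k$ are constants.

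\textbf{Completeness (YES case).}
If the Label Cover instance is fully satisfiable by $\tau\colon X_0\cup\dots\cup X_\ell\to[m]$, then $\Sigma$ is trivial: assign to each $F_x$ the dictator projection $\mathrm{pr}_{\tau(x)}$ of arity $m$ onto the $\tau(x)$-th coordinate. For each added identity $F_x\xrightarrow{\phi_{x\to y}}F_y$, we need $\mathrm{pr}_{\tau(x)}\xrightarrow{\phi_{x\to y}}\mathrm{pr}_{\tau(y)}$, i.e. $\mathrm{pr}_{\tau(y)}(x_1,\dots,x_m)=\mathrm{pr}_{\tau(x)}(x_{\phi_{x\to y}(1)},\dots,x_{\phi_{x\to y}(m)})$, which holds exactly because $\phi_{x\to y}(\tau(x))=\tau(y)$. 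Hence $\Sigma$ is satisfied in the minion of projections on $\{0,1\}$, so it is trivial.

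\textbf{Soundness (NO case).}
Suppose $\Sigma$ is satisfiable in $\Mm$ via an assignment $\xi$, so each $F_x$ gets a function $f_x:=\xi(F_x)\in\Mm$ of arity $m$, and each constraint identity is respected in $\Mm$. I want to recover a labelling of the original instance weakly satisfying at least an $\varepsilon$-fraction of chains, contradicting the NO promise. Define a random labelling $\rho$ of the layered instance by, for each variable $x$, picking $\rho(x)$ uniformly at random from $\sel(f_x)\subseteq[\ar(f_x)]=[m]$ — these are independent choices, one per variable. Now take any chain $x_0\in X_0,\dots,x_\ell\in X_\ell$ in the Label Cover instance. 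The minor identities along this chain give a chain of minors $f_{x_0}\xrightarrow{\pi_{0,1}}f_{x_1}\xrightarrow{\pi_{1,2}}\cdots\xrightarrow{\pi_{\ell-1,\ell}}f_{x_\ell}$ in $\Mm$, where $\pi_{i,i+1}=\phi_{x_i\to x_{i+1}}$ and $\pi_{i,j}=\phi_{x_i\to x_j}$ for all $i<j$ (the Label Cover instances produced by Theorem~\ref{thm:layeredGLC} have constraints between \emph{all} pairs of layers along a chain, and they compose consistently). By the hypothesised property of $\sel$, there exist $i<j$ with $\pi_{i,j}(\sel(f_{x_i}))\cap\sel(f_{x_j})\neq\emptyset$; fix such a pair and some witness element $c\in\sel(f_{x_i})$ with $\pi_{i,j}(c)\in\sel(f_{x_j})$. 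The probability that $\rho(x_i)=c$ is at least $1/|\sel(f_{x_i})|\geq 1/k$, and \emph{conditioned} on that, the probability that $\rho(x_j)=\pi_{i,j}(c)=\phi_{x_i\to x_j}(\rho(x_i))$ is at least $1/|\sel(f_{x_j})|\geq 1/k$ (independence of the two choices). So with probability at least $1/k^2$ the constraint $\phi_{x_i\to x_j}$ is satisfied by $\rho$, which weakly satisfies the chain. By linearity of expectation, the expected fraction of weakly satisfied chains is at least $1/k^2=\varepsilon$, hence some labelling $\rho$ achieves this — contradicting the NO case. Therefore the NO case of Label Cover maps to an unsatisfiable-in-$\Mm$ instance, completing the reduction; NP-hardness of $\PMC_\Mm(m)$ follows, and the statement for $\PCSP(\A,\B)$ follows from the equivalence $\PCSP(\A,\B)\equiv\PMC_{\Pol(\A,\B)}(m)$ of Barto et al.~\cite{bbko19}.

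\textbf{Main obstacle.}
The delicate point is the bookkeeping in the soundness argument: one must be careful that the minor maps along a chain genuinely compose (so that $f_{x_i}\xrightarrow{\pi_{i,j}}f_{x_j}$ holds with $\pi_{i,j}$ the composite of the $\pi$'s and equal to the Label Cover projection $\phi_{x_i\to x_j}$), which requires the layered instance to carry constraints between every pair of layers in a chain, consistently — this is exactly what the construction in the proof of Theorem~\ref{thm:layeredGLC} provides, but it needs to be invoked explicitly. A secondary subtlety is matching the two-sidedness of a ``bipartite'' minor condition to the $\ell{+}1$ layers; this is handled either by the standard trick of subdividing/relabelling so that consecutive layers alternate sides, or by working directly with the chain-of-minors formulation as in~\cite{bbko19}, and it does not affect the quantitative bound. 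Everything else is a routine union/expectation computation with the constant $\varepsilon=1/k^2$.
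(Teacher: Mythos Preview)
Your proposal is correct and follows essentially the same approach as the paper: set $\varepsilon=1/k^2$, translate the layered Label Cover instance into a minor condition symbol-for-variable and identity-for-constraint, verify triviality from a satisfying labelling via projections, and in the soundness direction decode by picking a uniformly random element of $\sel(f_x)$ for each $x$ and use the hypothesised chain property plus linearity of expectation to weakly satisfy an $\varepsilon$-fraction of chains. You are in fact more explicit than the paper about two points it glosses over --- that the $\pi_{i,j}$ obtained by composition must coincide with the actual Label Cover constraint $\phi_{x_i\to x_j}$ (a feature of the construction in the proof of Theorem~\ref{thm:layeredGLC}, not of its statement), and the formal packaging of a multi-layer condition as a bipartite one --- and your handling of both is adequate.
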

\begin{proof}[Proof]
	For $\ell,k$ as in the assumption, let $\varepsilon := \frac{1}{k^2}$ and let $m$ be as given by Theorem~\ref{thm:layeredGLC}.
	We reduce an $\ell$-Layered Label Cover instance by replacing each variable $x$ with a symbol $f_x$ of arity $m$ and each constraint $\phi_{x \to y} \colon [m] \to [m]$ by the minor condition $f_x \xrightarrow{\phi_{x\to y}} f_y$.
	If the original instance was fully satisfiable, the new instance is
  trivial (i.e., fully satisfiable by projections).
	
	If the constructed instance is satisfied by functions in the minion $\Mm$, we define an assignment to the original instance by selecting, for each variable $x$, a random coordinate from $\sel(f_x)\subseteq [m]$ (uniformly, independently).
	The assumption guarantees a set of constraints $\phi_{x\to y}$ such that (1)~each chain contains at least one and (2) for each such constraint  $\phi_{x\to y}$, we have $\phi_{x\to y}(\sel(f_x))\cap \sel(f_{y}) \neq \emptyset$.
	The random choice then satisfies each of these constraints, and hence weakly satisfies each chain, with probability at least $\frac{1}{k^2}=\varepsilon$.
	The expected fraction of weakly satisfied chains is thus at least $\varepsilon$ and a standard maximisation-of-expectation procedure deterministically finds an assignment which certifies this.
\end{proof}

We now introduce the combinatorial property of polymorphisms which is crucial for our results.

\begin{defi}
	For a function $f\colon A^{\ar(f)} \to B$ we say that a set of coordinates $S \subseteq [\ar(f)]$ is a \emph{smug} set if there is an input vector $\bar{v}\in A^{\ar(f)}$ such that $S=\{i \mid v_i = f(\bar{v})\}$.
\end{defi}

Since $d=s+1$, Proposition~\ref{conspoly} applies and so for every polymorphism $f$ of SetSAT and every $\bar{v}$, the set $\{i \mid v_i = f(\bar{v})\}$ is nonempty. The following result connects Layered Label Cover to smug sets and will be used to prove hardness of $(1,g,k)$-SetSAT.

\begin{cor}\label{cor:smug}
	Let $\Mm$ be a minion.
	Suppose there are constants $k,\ell \in \mathbb{N}$ such that the following holds, for every $f\in \Mm$:
	\begin{itemize}
		\item $f$ has a smug set of at most $k$ coordinates,
		\item $f$ has no family of more than $\ell$ (pairwise) disjoint smug sets,
		\item if $f\xrightarrow{\pi}g$ and $S$ is a smug set of $g$, then $\pi^{-1}(S)$ is a smug set of $f$.
	\end{itemize}
	Then $\PMC_\Mm(m)$ is NP-hard, for $m$ large enough.
  In particular, if $\Mm = \Pol(\A,\B)$, then $\PCSP(\A,\B)$ is NP-hard.
\end{cor}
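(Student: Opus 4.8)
The plan is to deduce Corollary~\ref{cor:smug} from Corollary~\ref{cor:sel} by constructing, for each $f \in \Mm$, a suitable selected set $\sel(f)$ of bounded size out of the smug sets guaranteed by the hypotheses. First I would fix, for every $f \in \Mm$, a maximal family $\mathcal{F}_f$ of pairwise disjoint smug sets, each of size at most $k$; such a family exists because $f$ has \emph{some} smug set of size $\leq k$ (first bullet), and it has at most $\ell$ members by the second bullet. Define $\sel(f) := \bigcup \mathcal{F}_f$, so that $|\sel(f)| \leq k\ell$, a constant. This choice of $\sel$ will be fed into Corollary~\ref{cor:sel} with parameter $k' := k\ell$ and the same $\ell$.

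Next I would verify the combinatorial condition required by Corollary~\ref{cor:sel}: for every chain of minors $f_0 \xrightarrow{\pi_{0,1}} f_1 \xrightarrow{\pi_{1,2}} \dots \xrightarrow{\pi_{\ell-1,\ell}} f_\ell$, there exist $0 \leq i < j \leq \ell$ with $\pi_{i,j}(\sel(f_i)) \cap \sel(f_j) \neq \emptyset$. The key tool is the third bullet (preimages of smug sets under minors are smug), used along the chain. Consider the last function $f_\ell$ and pull its selected set back along the chain: for each $i$, the set $\pi_{i,\ell}^{-1}(\sel(f_\ell))$ is a union of preimages of smug sets of $f_\ell$, hence (by the third bullet applied to each smug set in $\mathcal{F}_{f_\ell}$) a union of smug sets of $f_i$; in particular it is nonempty (smug sets are nonempty, being of the form $\{i : v_i = f(\bar v)\}$ with $f$ conservative — indeed $\sel(f) \neq \emptyset$ for all $f \in \Mm$ here since each $\mathcal F_f$ is nonempty). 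Now argue by contradiction: suppose that for \emph{all} $i<j$ we have $\pi_{i,j}(\sel(f_i)) \cap \sel(f_j) = \emptyset$; equivalently $\sel(f_i) \cap \pi_{i,j}^{-1}(\sel(f_j)) = \emptyset$ for all $i < j$. Fix the chain and look at the sets $T_i := \pi_{i,\ell}^{-1}(\sel(f_\ell)) \subseteq [\ar(f_i)]$ for $i = 0,\dots,\ell$; each $T_i$ is a nonempty smug set (as a union of smug sets, one may pick any single $\mathcal F$-member's preimage) of $f_i$, and $\pi_{i,j}(T_i) \subseteq T_j$ by functoriality of preimages along $\pi_{i,j}$ composed with $\pi_{j,\ell}$. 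The failure of the condition means each $T_i$ is disjoint from $\sel(f_i)$ — but that only tells us $T_i$ avoids the \emph{chosen} maximal family, not that it is small. To get a genuine contradiction I would instead count disjoint smug sets of a single $f_i$: the right framing is to run the argument at the level of a fixed index, say $i=0$, producing $\ell+1$ smug sets of $f_0$ that are forced to be pairwise disjoint, contradicting the second bullet.

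Concretely, here is the argument I expect to work. For $j = 0,1,\dots,\ell$, let $S_j$ be any member of $\mathcal{F}_{f_j}$ (a smug set of $f_j$ of size $\leq k$), and let $U_j := \pi_{0,j}^{-1}(S_j) \subseteq [\ar(f_0)]$, which is a nonempty smug set of $f_0$ by the third bullet (with $U_0 := S_0$). I claim the $U_j$ are pairwise disjoint. If some $c \in U_i \cap U_j$ with $i < j$, then $\pi_{0,i}(c) \in S_i$ and $\pi_{0,j}(c) = \pi_{i,j}(\pi_{0,i}(c)) \in S_j$, so $\pi_{i,j}(S_i) \cap S_j \neq \emptyset$; since $S_i \subseteq \sel(f_i)$ and $S_j \subseteq \sel(f_j)$, this gives $\pi_{i,j}(\sel(f_i)) \cap \sel(f_j) \neq \emptyset$, which is exactly the conclusion we want. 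So either the desired conclusion already holds for some pair, or the $U_0,\dots,U_\ell$ form $\ell+1$ pairwise disjoint nonempty smug sets of $f_0$, contradicting the second bullet. Hence the hypothesis of Corollary~\ref{cor:sel} is met with $k' = k\ell$, and NP-hardness of $\PMC_\Mm(m)$ follows; the final sentence about $\Mm = \Pol(\A,\B)$ is immediate from Corollary~\ref{cor:sel}.

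The main obstacle, and the part that needs the most care, is exactly the disjointness bookkeeping in the last paragraph: making sure that a single coordinate $c \in [\ar(f_0)]$ lying in two preimages $U_i, U_j$ really does witness $\pi_{i,j}(\sel(f_i)) \cap \sel(f_j) \neq \emptyset$, which hinges on the compositional identity $\pi_{0,j} = \pi_{i,j} \circ \pi_{0,i}$ recorded in the definition of a chain of minors. One should also double-check that ``smug set'' is closed under the operations used — in particular that the empty set is never forced, which is where conservativity (Proposition~\ref{conspoly}, hence $\sel(f) \neq \emptyset$) is used, and that a union of smug sets need not itself be smug, so the argument is phrased in terms of individual members $S_j$ of the maximal families rather than the unions $\sel(f_j)$.
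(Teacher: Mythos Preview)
Your proof is correct and takes essentially the same route as the paper: pull the selected smug sets back along the chain to $f_0$, obtaining $\ell+1$ pairwise disjoint smug sets of $f_0$ in contradiction with the second hypothesis. The paper's version is a touch simpler---it takes $\sel(f)$ to be any single smug set of size at most $k$ (your maximal family $\mathcal{F}_f$ is never actually used, since your own argument only invokes one member $S_j \in \mathcal{F}_{f_j}$ per index), and it does not invoke conservativity, which is unnecessary here since the corollary is stated for general minions and nonemptiness of the $U_j$ plays no role in the contradiction.
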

\begin{proof}
	For each $f \in \Mm$, we define $\sel(f)$ as a smug set of at most $k$ coordinates, arbitrarily chosen (some such set exists by the first condition).
	Consider a chain $f_0 \xrightarrow{\pi_{0,1}} f_1 \xrightarrow{\pi_{1,2}} \dots \xrightarrow{\pi_{\ell-1,\ell}} f_\ell$.
	Suppose to the contrary that for each $0\leq i < j \leq \ell$, $\pi_{i,j}(\sel(f_i))$ is disjoint from $\sel(f_{j})$, or equivalently, that $\sel(f_i)$ is disjoint from $\pi_{i,j}^{\ -1}(\sel(f_{j}))$.
	This implies that $\pi_{0,i}^{\ -1}(\sel(f_i))$ is disjoint from $\pi_{0,i}^{\ -1}(\pi_{i,j}^{\ -1}(\sel(f_{j}))) = \pi_{0,j}^{\ -1}(\sel(f_j))$.
	That is, the sets $\pi_{0,i}^{\ -1}(\sel(f_i))$ for $i=0\dots \ell$ are pairwise disjoint.
	By the third condition they are smug sets of $f_0$.
	But by the second condition this is impossible.
\end{proof}

We note that in the proof of Corollary~\ref{cor:smug}, the exact definition of ``smug'' is irrelevant, as long as it satisfies the above three conditions.
	
It is easy to check that the definition of ``smug'' satisfies the third condition for any functions $f \xrightarrow{\pi} g$, not necessarily polymorphisms.
Indeed, if an input $\bar{v} \in A^{\ar(g)}$ to $g$ gives a smug set $S=\{j \mid v_j = g(\bar{v})\}$,
then the corresponding input $\bar{u} \in A^{\ar(f)}$ to $f$ defined as $u_i := v_{\pi(i)}$ satisfies $f(\bar{u})=g(\bar{v})$ and hence gives a smug set $\{i \mid u_i = f(\bar{u})\} = \{i \mid v_{\pi(i)} =g(\bar{v})\} = \{i \mid \pi(i) \in S\} = \pi^{-1}(S)$.

The definition of ``smug'' is particularly well-suited to our problem, because whether $f$ is a polymorphisms or not depends only on its family of smug sets. 

\begin{lem}\label{lem:smugPolym}
	Let $1\leq s$ and $1\leq g < k$.
	A function $f \colon [s+1]^m \to [s+1]$ is a polymorphism of $(1,g,k)$-SetSAT
	if and only if there is no multiset $S_1,\dots,S_k$ of smug sets of $f$,
	such that each coordinate $\ell\in [m]$ is contained in at most $k-g$ of them.
\end{lem}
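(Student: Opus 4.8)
The plan is to note that, under the standing assumption $d=s+1$, the lemma is essentially a restatement of the definition of ``polymorphism of $(1,g,k)$-SetSAT'' in the language of smug sets, so the only real work is the bookkeeping between the arity $[m]$ of $f$ and the width $[k]$ of a clause.

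First I would reformulate the polymorphism condition. Since $d=s+1$, a literal ``$x\in S$'' with $|S|=s$ is the same as ``$x\neq b$'' for the unique $b\in[s+1]\setminus S$, and every $b\in[s+1]$ arises this way; hence a clause of width $k$ is, for the purpose of the relations $R^\A,R^\B$, determined by a tuple of forbidden values $\bar b=(b_1,\dots,b_k)\in[s+1]^k$, a tuple $\bar a\in[s+1]^k$ assigned to its literals is $g$-satisfying iff $|\{j\in[k]:a_j=b_j\}|\leq k-g$, and it is $1$-satisfying iff $\bar a\neq\bar b$. Consequently $f$ \emph{fails} to be a polymorphism of $(1,g,k)$-SetSAT exactly when there exist $\bar b\in[s+1]^k$ and vectors $\bar v^1,\dots,\bar v^m\in[s+1]^k$ with $|\{j:v^t_j=b_j\}|\leq k-g$ for every $t\in[m]$ such that the componentwise value $f(\bar v^1,\dots,\bar v^m)$ equals $\bar b$, i.e.\ $f(v^1_j,\dots,v^m_j)=b_j$ for every $j\in[k]$.

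Then I would exhibit the translation between such data and a multiset of smug sets. Given $\bar b$ and $\bar v^1,\dots,\bar v^m$ as above, set $\bar u^j:=(v^1_j,\dots,v^m_j)\in[s+1]^m$ for $j\in[k]$; then $f(\bar u^j)=b_j$, so $S_j:=\{t\in[m]:u^j_t=f(\bar u^j)\}=\{t:v^t_j=b_j\}$ is a smug set of $f$, and the hypothesis on the $\bar v^t$ says precisely that every coordinate $t\in[m]$ lies in at most $k-g$ of $S_1,\dots,S_k$. Conversely, from a multiset $S_1,\dots,S_k$ of smug sets with every coordinate in at most $k-g$ of them, pick witnesses $\bar u^j\in[s+1]^m$ with $S_j=\{t:u^j_t=f(\bar u^j)\}$, put $b_j:=f(\bar u^j)$ and $\bar v^t:=(u^1_t,\dots,u^k_t)$ for $t\in[m]$; one checks $\{j:v^t_j=b_j\}=\{j:t\in S_j\}$ has size at most $k-g$ (so every $\bar v^t$ is a $g$-satisfying assignment to the clause $\bar b$) while $f(v^1_j,\dots,v^m_j)=f(\bar u^j)=b_j$, violating the polymorphism property. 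Composing the two directions yields the stated biconditional.

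I do not expect a genuine obstacle here; the statement is close to a tautology once the reformulation is in place, and the point that makes it clean is exactly the hypothesis $d=s+1$, which turns literals into single-forbidden-value constraints. The two things to be careful about are: the correspondence never uses conservativity of $f$, so it applies to an arbitrary $f\colon[s+1]^m\to[s+1]$ (if $f$ is not conservative, some input has empty smug set, and $S_1=\dots=S_k=\emptyset$ witnesses the right-hand side, matching the fact that such $f$ fails the polymorphism test on a clause all of whose literals forbid the same value, cf.\ Proposition~\ref{conspoly}); and the $S_j$ must be allowed to repeat --- hence ``multiset'' --- since distinct literals of a clause may forbid the same value and distinct witnesses $\bar u^j$ may coincide.
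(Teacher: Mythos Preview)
Your proposal is correct and takes essentially the same approach as the paper: both unfold the definition of polymorphism via the $k\times m$ matrix of values, identify each row with a smug set of $f$, and translate the column-wise ``$g$-satisfying'' condition into ``each coordinate lies in at most $k-g$ of the smug sets''. Your write-up is slightly more explicit about the converse direction and about why conservativity is not needed, but the argument is the same bookkeeping.
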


Before setting out to prove Lemma~\ref{lem:smugPolym}, we give an example in Figure~\ref{fig:smug} of a function that is not a polymorphism of $(1,3,5)$-SetSAT with set size 2 and domain size 3.

\begin{figure}[h!]
	\centering
	\vspace*{-\baselineskip}
	\begin{tikzpicture}[yscale=-0.4, xscale=0.37]
		\def\ww{10};
		\draw[rounded corners=1] (0,0) rectangle ++(\ww, 5);
		\draw[vsmug] (0,0) rectangle ++(5, 1);
		\draw[vsmug] (2,1) rectangle ++(2, 1); \draw[vsmug] (6,1) rectangle ++(2, 1);
		\draw[vsmug] (4,2) rectangle ++(3, 1);
		\draw[vsmug] (8,3) rectangle ++(2, 1);
		\draw[vsmug] (8,4) rectangle ++(2, 1);
		\matrix (m) [matrix of math nodes,matrix anchor=m-1-1.center,column sep={0.96em,between origins},row sep={1.04em,between origins}] at (0.5,0.5) {
			3 & 3 & 3 & 3 & 3 & 1 & 1 & 2 & 1 & 2\\
			3 & 3 & 2 & 2 & 1 & 1 & 2 & 2 & 3 & 3\\
			3 & 3 & 2 & 2 & 1 & 1 & 1 & 2 & 3 & 3\\
			1 & 2 & 1 & 2 & 1 & 2 & 1 & 2 & 3 & 3\\
			1 & 2 & 1 & 2 & 1 & 2 & 1 & 2 & 3 & 3\\
		};
		\draw[<->] (-0.4,0) --node[auto,swap]{$k=5$} (-0.4,5);
		\draw[<->] (0,-0.4) --node[auto]{$m$} (\ww,-0.4);
		\node[anchor=west,label={above:$f\phantom{ 0}$}]
		                   at (\ww, 0.5) {$\longrightarrow 3$};
		\node[anchor=west] at (\ww, 1.5) {$\longrightarrow 2$};
		\node[anchor=west] at (\ww, 2.5) {$\longrightarrow 1$};
		\node[anchor=west] at (\ww, 3.5) {$\longrightarrow 3$};
		\node[anchor=west] at (\ww, 4.5) {$\longrightarrow 3$};
		\node[anchor=west] at (\ww, 5.7) {$\phantom{\longrightarrow {}}\bar{o}$};
		\begin{scope}[shift={(\ww+7,0)}]
			\node[anchor=south] at (0,-0.3) {clause};
			\node at (0,0.5) {\small$x_1 \neq 3$};
			\node at (0,1.5) {\small$x_2 \neq 2$};
			\node at (0,2.5) {\small$x_3 \neq 1$};
			\node at (0,3.5) {\small$x_4 \neq 3$};
			\node at (0,4.5) {\small$x_5 \neq 3$};
			\node at (-1.7,1.15) {\small$\vee$};
			\node at (-1.7,2.15) {\small$\vee$};
			\node at (-1.7,3.15) {\small$\vee$};
			\node at (-1.7,4.15) {\small$\vee$};
			\node at (0,5.7) {\small$\phantom{x_5 \neq {}}\bar{b}$};
		\end{scope}
	\end{tikzpicture}
	\vspace*{-\baselineskip}	
	\caption{Illustration of Lemma~\ref{lem:smugPolym}. Smug sets $S \subseteq [m]$ are highlighted in each row.}
	\label{fig:smug}
\end{figure}

\vspace*{-\baselineskip}
\begin{proof}[Proof of Lemma~\ref{lem:smugPolym}]
	
	A function $f \colon [s+1]^m \to [s+1]$ is not a polymorphism if and only if there is a 
	clause of the form $x_1 \neq b_1 \vee \dots \vee x_k \neq b_k$ (for some column vector $\bar{b} \in [s+1]^k$)
	and a sequence of $m$ column vectors $\bar{v}^{1},\dots,\bar{v}^m \in [s+1]^k$
	each of which $g$-satisfies the clause,
	but for which the vector $\bar{o} = f(\bar{v}^{1}, \dots, \bar{v}^{m})$ (with $f$ applied coordinatewise) does not even 1-satisfy the clause.
	The latter is equivalent to saying that $o_i = b_i$ for $i\in[k]$, that is,
	applying $f$ to the $i$-th row gives $f(v^1_i,\dots,v^m_i) = b_i$.
	The former is equivalent to saying that for each column $\bar{v}$ in $\bar{v}^1,\dots,\bar{v}^m$, the condition $v_i \neq b_i$ holds for at least $g$ indices $i\in [k]$ of that column.
	The two are hence equivalent to saying that for each column $\bar{v}^\ell$, $\ell\in[m]$,
	the condition $v^\ell_i = f(v^1_i,\dots,v^m_i)$ holds for at most $k-g$ indices $i \in [k]$ in that column.
	In other words,
	the $k$ row vectors $(v^1_i,\dots,v^m_i)$ for $i\in[k]$ have smug sets such that $\ell$ is contained in at most $k-g$ of these sets, for each coordinate $\ell \in [m]$.
\end{proof}

Checking the second condition for polymorphisms of our SetSAT problem is easy.
\begin{lem}\label{lem:disjoint}
	For every polymorphism $f$ of $(1,g,k)$-SetSAT,
	if $S_1,\dots,S_t$ are disjoint smug sets of $f$, then $t < \frac{k}{k-g}$.
\end{lem}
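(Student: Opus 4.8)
The plan is to derive this immediately from the characterization in Lemma~\ref{lem:smugPolym}. We may assume $g<k$, since otherwise $\frac{k}{k-g}$ is undefined and the statement is vacuous. Suppose, for contradiction, that $f$ is a polymorphism of $(1,g,k)$-SetSAT and that $S_1,\dots,S_t$ are pairwise disjoint smug sets of $f$ with $t \geq \frac{k}{k-g}$. The goal is to build a multiset of exactly $k$ smug sets of $f$ in which every coordinate $\ell\in[m]$ is contained in at most $k-g$ members; by Lemma~\ref{lem:smugPolym} this contradicts $f$ being a polymorphism.

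The construction is a one-line counting step. Since $t \geq \frac{k}{k-g}$ and $k-g\geq 1$, we have $t(k-g)\geq k$, so $k$ can be written as a sum $c_1+\dots+c_t$ of $t$ non-negative integers with each $c_i \leq k-g$. Form the multiset consisting of $c_i$ copies of $S_i$ for each $i\in[t]$; it has exactly $k$ elements. Because $S_1,\dots,S_t$ are pairwise disjoint, any coordinate $\ell\in[m]$ lies in at most one of them, say $S_i$, and hence in at most $c_i \leq k-g$ members of the multiset. By Lemma~\ref{lem:smugPolym}, $f$ is then not a polymorphism of $(1,g,k)$-SetSAT, a contradiction; therefore $t < \frac{k}{k-g}$.

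There is essentially no genuine obstacle: the argument is just a partitioning argument layered on top of Lemma~\ref{lem:smugPolym}. The only point requiring a moment's care is that the witnessing multiset must have size \emph{exactly} $k$ while keeping each multiplicity at most $k-g$, and this is possible precisely because $t(k-g)\geq k$ — which is exactly what the threshold $\frac{k}{k-g}$ encodes. (One may also recall that, since $d=s+1$, Proposition~\ref{conspoly} guarantees all smug sets of $f$ are non-empty, although this is not needed for the present direction.)
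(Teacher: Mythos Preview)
Your proof is correct and follows exactly the same approach as the paper: assume $t \geq \frac{k}{k-g}$, build a multiset of $k$ smug sets by taking at most $k-g$ copies of each $S_i$, and invoke Lemma~\ref{lem:smugPolym} to contradict that $f$ is a polymorphism. Your write-up is slightly more explicit about the counting (writing $k=c_1+\dots+c_t$ with $c_i\leq k-g$), but the argument is the same.
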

\begin{proof}
	Suppose to the contrary that $t \geq \frac{k}{k-g}$.
	Then we can build a multiset containing each $S_i$ up to ${k-g}$ times until we have exactly $k$ in total.
	We thus obtain a multiset of $k$ smug sets such that every coordinate is contained in at most $k-g$ of them.
\end{proof}

We have now shown that the second and third  conditions of Corollary~\ref{cor:smug} hold, so it remains only to show that SetSAT has small smug sets.

\section{Finding small smug sets}
\label{sec:smallsmug}

It is easy to show NP-hardness when $\frac{g}{k}\leq \frac{1}{2}$ (cf.
Proposition~\ref{padding}).
We now show a general reduction by finding a small smug set for
$(1,g,k)$-SetSAT whenever $\frac{g}{k}<\frac{s}{s+1}$. Again we assume $d=s+1$.

\begin{lem}\label{lem:sDisjoint}
	Let $f \colon [s+1]^m \to [s+1]$ be a polymorphism of $(1,g,k)$-SetSAT with set size $s$ and domain size $s+1$. There exists a smug set of $f$ of size at most $s-1$, or a family of $s$ disjoint minimal smug sets $S_1, \dots, S_{s}$.
\end{lem}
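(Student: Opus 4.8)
The plan is to argue by contradiction: suppose $f$ has no smug set of size at most $s-1$, yet also admits no family of $s$ pairwise disjoint minimal smug sets. Since $d=s+1$ and $f$ is conservative (Proposition~\ref{conspoly}), every input produces a nonempty smug set, so the minimal smug sets of $f$ all have size at least $s$ by the first assumption. The key quantity to control is how these minimal smug sets can overlap: if we cannot pack $s$ of them disjointly, then by a suitable pigeonhole/sunflower-type argument there must be a small "core" of coordinates — of size at most $s-1$ — that is hit by every minimal smug set. Concretely, I would pick a maximal family $S_1,\dots,S_t$ of pairwise disjoint minimal smug sets; by assumption $t \leq s-1$, and by maximality every minimal smug set meets $S_1\cup\dots\cup S_t$. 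The goal is then to massage this into an actual smug set of size $\leq s-1$, contradicting the first assumption.

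The main work is showing that the existence of many overlapping minimal smug sets — all of size $\geq s$, all meeting a bounded region — forces a small smug set into existence. Here I expect to use the combinatorial characterisation from Lemma~\ref{lem:smugPolym} together with the conservativity of $f$: the values of $f$ on cleverly chosen inputs let one "intersect" two smug sets. The natural tool is: given two inputs $\bar u, \bar v$ with smug sets $S, T$ and $f(\bar u)=f(\bar v)=b$, one can try to build a new input (e.g. by mixing coordinates according to whether they lie in $S$, in $T$, or outside) whose smug set is contained in $S\cap T$ or some related small set. Since the domain has only $s+1$ values, there is little room, and iterating such intersections starting from enough minimal smug sets of size $\geq s$ that all pass through a region of size $\leq s-1$ should drive the size strictly below $s$. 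One must be careful that the intersection operation actually yields a \emph{smug} set (i.e. realised by a genuine input of $f$), which is where conservativity and a case analysis on the output value $b\in[s+1]$ enter.

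I would structure the proof as follows. First, record that minimal smug sets have size $\geq s$ (from the first assumption) and that, assuming no $s$ disjoint ones exist, a maximal disjoint family has size $t\le s-1$ and its union $U$ of size $\le s(s-1)$ meets every minimal smug set. Second, prove an "intersection lemma": if $S$ and $T$ are smug sets of $f$ witnessed by inputs with the \emph{same} output value $b$, then $S\cap T$ contains a smug set of $f$ (again with output $b$); the witnessing input is obtained by taking, in each coordinate $i$, the value $b$ if $i\in S\cap T$ and otherwise a value in $[s+1]\setminus\{b\}$ chosen to match one of $\bar u,\bar v$ appropriately, then invoking conservativity to see the output is still $b$ and hence the new smug set lies in $S\cap T$. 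Third, among the minimal smug sets, by pigeonhole on the output value (only $s+1$ possibilities) and on which coordinate of $U$ they hit, find enough of them — with a common output value and passing through a common small set of coordinates — to intersect down to a smug set of size $\le s-1$, the desired contradiction. The hardest step is the intersection lemma and making the coordinate-mixing argument genuinely produce a smaller smug set rather than just a smaller \emph{subset hitting} a smug set; controlling this will require exploiting that $d=s+1$ is exactly one more than the set size, so that "forbidding $b$" is a literal available in SetSAT and Lemma~\ref{lem:smugPolym} applies directly.
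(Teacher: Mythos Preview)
Your proposal has a genuine gap: the ``intersection lemma'' you sketch is not true, and conservativity alone cannot rescue it. Concretely, take smug sets $S$ and $T$ both with output value $b$, witnessed by inputs $\bar u$ and $\bar v$. You build $\bar w$ with $w_i=b$ exactly on $S\cap T$ and $w_i\neq b$ elsewhere (borrowing values from $\bar u$ or $\bar v$). Conservativity tells you only that $f(\bar w)$ is \emph{some} value occurring in $\bar w$; it does \emph{not} force $f(\bar w)=b$. Since $\bar w$ will in general contain several values other than $b$, nothing prevents $f$ from outputting one of those, so the new smug set need not lie inside $S\cap T$. Without this lemma, the rest of your plan (pigeonhole on output value and on the hitting set $U$, then intersect down below size $s$) has no engine and collapses.

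The paper's argument is both simpler and goes in the opposite logical direction. Instead of assuming both alternatives fail and seeking a contradiction, it assumes only that every smug set has size $\geq s$ and then \emph{constructs} $s$ disjoint minimal smug sets inductively. Having built $S_1,\dots,S_t$ with $t<s$, pick one coordinate from each to form a set $T$ of size $t<s$, and define an input $\bar v$ taking value $t{+}2$ on $T$, value $i$ on $S_i\setminus T$, and value $t{+}1$ on the remainder $R$. Now the output is forced \emph{by elimination}: $T$ is too small to be smug, each $S_i\setminus T$ is a proper subset of a minimal smug set and hence not smug, so by conservativity $f(\bar v)=t{+}1$ and $R$ is smug and disjoint from all $S_i$. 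The key trick you were missing is not to intersect smug sets, but to design an input so that every value \emph{except} the desired one corresponds to a set already known \emph{not} to be smug, leaving conservativity no choice.
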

\begin{proof}
	Suppose that every smug set has size at least $s$.
	We show by induction on $t$ that there is a family of $t$ disjoint minimal smug sets $S_1,\dots,S_{t}$.
	Suppose we found $S_1,\dots,S_{t}$ for some $0 \leq t < s$ and we want to find $S_{t+1}$.
	Let $T$ be a set containing one arbitrary coordinate from each $S_i$, $i=1\dots t$.
	Let $\bar{v}\in [s+1]^m$ be the input vector with values $t+2$ on $T$, $i$ on $S_i \setminus T$ (for $i=1\dots t$) and $t + 1$ on the remaining coordinates $R := [m] \setminus (S_1 \cup \dots \cup S_t)$.
	Since $|T| \leq t < s$, $T$ is not smug, so $f(\bar{v}) \neq t+2$.
	By minimality, $S_i \setminus T$ are not smug for $i=1\dots t$, so $f(\bar{v}) \neq i$.
	Therefore, by conservativity of $f$ (Proposition \ref{conspoly}), the only remaining option is $f(\bar{v}) = t+1$.
	Thus $R$ is smug and disjoint from $S_i$.
	Taking $S_{t+1}$ to be a minimal smug set contained in $R$ proves the induction step.
\end{proof}

Together with Lemma~\ref{lem:disjoint}, Lemma~\ref{lem:sDisjoint} already
establishes (via Corollary~\ref{cor:smug}) NP-hardness when $s \geq \frac{k}{k-g} = \frac{g}{k-g}+1$ (equivalently, $\frac{g}{k} \leq \frac{s-1}{s}$):
 since there cannot be $s$ disjoint smug sets, every polymorphism has a smug set of size at most $s-1$.
The proof in the general case, when $\frac{g}{k}<\frac{s}{s+1}$, extends this approach by first finding (assuming there are no small smug sets) disjoint minimal smug sets $S_1,\dots,S_s$, then
exploiting the fact that each has a special coordinate whose removal makes it not smug,
and using these coordinates to find further variants of each $S_i$ with new special coordinates.

\begin{lem}\label{lem:smallSmug}
	Let $\frac{g}{k} < \frac{s}{s+1}$ and  
	let $f \colon [s+1]^m \to [s+1]$ be a polymorphism of $(1,g,k)$-SetSAT with set size $s$ and domain size $s+1$. Then $f$ has a smug set of size at most $g$.
\end{lem}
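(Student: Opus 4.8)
The plan is to argue by contradiction: I assume that every smug set of $f$ has size at least $g+1$, and I derive a contradiction either with Lemma~\ref{lem:disjoint} (by producing too many pairwise disjoint smug sets) or with Lemma~\ref{lem:smugPolym} (by producing a multiset of $k$ smug sets in which every coordinate lies in at most $k-g$ of them, which would witness that $f$ is not a polymorphism).

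The first step is to run the inductive construction from the proof of Lemma~\ref{lem:sDisjoint}, continued as far as the domain permits. Starting from the (trivially smug) set $R_0=[m]$, I repeatedly choose a minimal smug set $S_{t+1}\subseteq R_t$ and apply the vector assigning a fresh value to a transversal $T$ of $S_1,\dots,S_t$, value $i$ to $S_i\setminus T$, and value $t+1$ to $R_t:=[m]\setminus(S_1\cup\dots\cup S_t)$; since $f$ is conservative (Proposition~\ref{conspoly}), since $T$ has size $t\le g$ and hence is not smug, and since each $S_i\setminus T$ is not smug by minimality, $f$ must output $t+1$ there, so $R_t$ is smug — in particular $R_t\neq\emptyset$ and, having no smug set of size $\le g$, satisfies $|R_t|\ge g+1$, so it contains a minimal smug set. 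This step is valid as long as $t\le g$ and $t\le s-1$, so I obtain pairwise disjoint minimal smug sets $S_1,\dots,S_p$ with $p=\min(g+1,s)$. If $g\le s-1$ then $p=g+1$, and $g+1$ disjoint smug sets contradict Lemma~\ref{lem:disjoint} (it would force $k\le g$). If $g\ge s$ but $\frac{g}{k}\le\frac{s-1}{s}$, then already the $s$ disjoint smug sets contradict Lemma~\ref{lem:disjoint}. The only case left — and the heart of the argument — is $g\ge s$ together with $\frac{s-1}{s}<\frac{g}{k}<\frac{s}{s+1}$, in which $s$ disjoint minimal smug sets $S_1,\dots,S_s$ are consistent with Lemma~\ref{lem:disjoint} (which only forbids $s+1$).

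For this case I would exploit the special coordinates: pick $t_i\in S_i$ with $S_i\setminus\{t_i\}$ not smug. Feeding $f$ the vector that puts value $i$ on $S_i\setminus\{t_i\}$ and the remaining value on $\{t_1,\dots,t_s\}\cup R_s$ forces $f$ to land there, so $\{t_1,\dots,t_s\}\cup R_s$ is smug; since its size would be only $s\le g$ if $R_s$ were empty, we conclude $R_s\neq\emptyset$. Similar value juggling shows that $(S_i\setminus\{t_i\})\cup R_s$ and $S_i\cup R_s$ are smug for each $i$. Passing to a minimal smug subset $T_i$ of $(S_i\setminus\{t_i\})\cup R_s$, either $T_i\subseteq R_s$ — giving an $(s+1)$-st smug set disjoint from $S_1,\dots,S_s$ and hence the desired contradiction — or (using minimality of $S_i$) $T_i$ necessarily meets both $S_i$ and $R_s$, i.e. it is a minimal smug set ``spanning'' two of the blocks and carrying its own new special coordinate. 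Iterating this with freshly designated special coordinates produces a rich family of minimal smug sets spread across the blocks $S_1,\dots,S_s,R_s$. For each minimal smug set $S$ in this family I would use the observation (immediate from conservativity and minimality) that $[m]\setminus P$ is smug for every nonempty proper subset $P\subsetneq S$; this supplies many ``medium-sized'' smug sets, each omitting a large, controllable set of coordinates. Finally I choose multiplicities for these sets, by a transportation/covering-type count using $|S_i|\ge g+1$ and $\frac{g}{k}<\frac{s}{s+1}$, so that every coordinate is omitted by at least $g$ of the chosen $k$ smug sets, contradicting Lemma~\ref{lem:smugPolym}.

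The main obstacle is exactly this last case. The difficulty is that the ``expensive'' coordinates — the special coordinates $t_i$ and the coordinates of $R_s$ — can only be omitted by rather coarse smug sets (essentially the other $S_j$'s), and any count using only $S_1,\dots,S_s$ together with singleton-deletions gets stuck precisely at the threshold $\frac{g}{k}\le\frac{s-1}{s}$. Getting past it requires the spanning smug sets produced by the iteration, and the crux is to keep generating variants with new special coordinates until the family of available smug sets is diverse enough — genuinely spread across all of $S_1,\dots,S_s$ and $R_s$ — for the covering count to close. This is the point where the flexibility of smug sets is essential, and where the essential, fixing, or avoiding sets used in earlier work would not suffice.
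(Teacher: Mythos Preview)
Your plan is sound through the easy cases and the setup of the hard case, and your intermediate observations (that $\{t_1,\dots,t_s\}\cup R_s$ and $(S_i\setminus\{t_i\})\cup R_s$ are smug, that $[m]\setminus P$ is smug whenever $P$ is a nonempty proper subset of a minimal smug set) are all correct. But the argument stops exactly where it gets hard: you describe ``iterating with freshly designated special coordinates'' and a ``transportation/covering-type count'' without specifying either the inductive invariant that the iteration maintains or the actual multiset of $k$ smug sets to feed into Lemma~\ref{lem:smugPolym}, and you yourself flag this as ``the crux''. That is a genuine gap: nothing in your description explains why the iteration produces a family in which the count can close, and the auxiliary sets $[m]\setminus P$ you propose are far too coarse (they overlap almost everywhere) to be useful in a covering argument on their own.

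The paper fills this gap with a precise invariant and a concrete count. The iteration builds, for every $t$ up to $g+1$, smug sets $S_1,\dots,S_t$ together with a set $T$ of size $t$ such that: $|T\cap S_i|=1$; each $S_i\setminus T$ is not smug; $S_i\cap S_{i'}=\emptyset$ whenever $i\not\equiv i'\pmod s$; and $S_i\supseteq S_{i-s}\setminus T$ for $i>s$. In other words the $S_i$ sit in $s$ ``columns'' indexed by residues mod $s$, nested within each column, with $T$ recording one special coordinate per set; each inductive step grows one column into the current remainder and adds one coordinate to $T$. At $t=g+1$ a single further vector shows that $R:=\big([m]\setminus(S_t\cup\dots\cup S_{t-s+1})\big)\cup T$ is smug, and the contradicting multiset is simply $\{S_1,\dots,S_{g+1}\}$ together with $k-g-1$ copies of $R$. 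The count is then immediate: a coordinate in $R$ is outside every $S_i\setminus T$ (by nesting), so lies in at most one $S_i$ (through $T$), hence in at most $(k-g-1)+1=k-g$ sets; a coordinate outside $R$ lies only in $S_i$'s from a single residue class mod $s$, of which there are at most $\lceil (g+1)/s\rceil\le k-g$ because $\frac{g}{k}<\frac{s}{s+1}$ gives $g+1\le s(k-g)$. Your spanning sets $T_i$ and the complements $[m]\setminus P$ turn out to be unnecessary; the mod-$s$ organisation of the $S_i$'s is the missing idea that makes the covering count close.
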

\begin{proof}
	Consider a polymorphism $f \colon [s+1]^m \to [s+1]$ of $(1,g,k)$-SetSAT.
	We prove by induction on $t$ that there is a smug set of size at most $t-1$,
	or there is a sequence of smug sets $S_1,\dots,S_t$ and a set $T$ such that
	(see Figure~\ref{fig:smallSmug}):
	\begin{enumerate}[label={(\roman*)}]
		\item $|T|=t$ and $|T \cap S_i| = 1$ for $i=1\dots t$ (hence $S_i \cap T \neq S_{i'} \cap T$ for $i \neq i'$);
		\item $S_i \setminus T $ is not smug for $i=1\dots t$;		
		\item $S_i \cap S_{i'} = \emptyset$ if $i \not\equiv i' \mod s$;
		\item $S_i \supseteq S_{i-s} \setminus T$ for $i > s$.		
	\end{enumerate}
	By Lemma~\ref{lem:sDisjoint} we can start with $t=s$ (by taking any $T$ containing one coordinate from each $S_i$).
	Suppose the above is true for $t \geq s$ and let us prove the same for $t+1$.
	If there is a smug set of size at most $t$ then we are done, so assume that $T$ is not smug.
	Let $\bar{v}\in [s+1]^m$ be the input vector with value $s+1$ on $T$ and different values from $\{1,\dots,s\}$ on $S_{t-i}\setminus T$ for $i=0 \dots s-2$ and on the set of remaining coordinates
	$R := [m] \setminus (S_t \cup \dots \cup S_{t-s+2} \cup T)$.
	Then by (ii), $R$ is smug.
	
	Observe that $R$ contains $S_{t-s+1} \setminus T$, because $S_t, \dots, S_{t-s+2},T$ are disjoint from that set by~(iii).
	We define $S_{t+1}$ to be a minimal subset of $R$ among smug sets containing $S_{t-s+1} \setminus T$.
	By (ii) $S_{t-s+1} \setminus T$ itself is not smug,
	so there exists some coordinate $\ell$ in $S_{t+1} \setminus S_{t-s+1}$.
	We choose it arbitrarily and set $T' := T \cup \{\ell\}$.
	
	We claim that the sequence of smug sets $S_1,\dots,S_{t+1}$ and the set $T'$ satisfy the above conditions.
	By minimality $S_{t+1} \setminus T'$ is not smug, so it satisfies (ii)
	and by definition it satisfies (iv).
	The set $S_{t+1}$ is disjoint from $S_t, \dots, S_{t-s+2},T$, because $R$ was.
	It is also disjoint from $S_i$ for $i \not\equiv t+1 \mod s$,  because for every such $i$, $S_i\setminus T$ is contained in one of $S_t, \dots, S_{t-s+2}$; this proves (iii).
	In particular $\ell$ is not contained in any of these sets,
	and since it is not contained in $S_{t-s+1}$,
	it is in fact not contained in any $S_i$ with $i < t+1$.
	Hence $|T'| = t+1$ and $|T' \cap S_i|  = |T \cap S_i| = 1$ for $i < t+1$.
	Clearly also $|T' \cap S_{t+1}| = |\{\ell\}| = 1$.
	Therefore, (i) is satisfied, concluding the inductive proof.
	
	\medskip
	Let us now consider sets as guaranteed above for $t=g+1$ (assuming there is no smug set of size at most $g$).
	Let $\bar{v}\in [s+1]^m$ be the input vector with value $i+1$ on $S_{t-i}\setminus T$ for $i=0 \dots s-1$,
	and	value $s+1$ on the remaining coordinates $R := \left([m] \setminus (S_t \cup \dots \cup S_{t-s+1}) \right) \cup T$.
	By (ii) the sets $S_{t-i} \setminus T$ are not smug, so $R$ is smug.
	We claim that the multiset obtained from $\{S_1,\dots,S_t\}$ by adding $(k-g-1)$ copies of the set $R$ contradicts Lemma~\ref{lem:smugPolym}:
	that is, each coordinate in $[m]$ is covered at most $k-g$ times by this multiset.
	
	Consider first the coordinates contained in $R$.
	By definition of $R$, they are disjoint from $S_{t-i} \setminus T$ for $i=0 \dots s-1$.
	By (iv), they are also disjoint from all sets $S_i\setminus T$ for $i=0\dots t$,
	because every such set is contained in one of the former.
	Hence if a coordinate in $R$ is also contained in one of $S_1,\dots,S_t$,
	then it is contained in $T$ and therefore in at most one of $S_1,\dots,S_t$, by (i).
	In total, it is thus covered at most $(k-g-1)+1 = k-g$ times.
	
	Consider now coordinates outside of $R$.
	By (iii), they can be covered only by sets $S_i$ with congruent indices $i \mod s$.
	Since $s > \frac{g}{k-g}$, we have $s(k-g) > g$, so there are $t = g+1 \leq s(k-g)$
	distinct indices in total in $\{1,\dots,t\}$.
	Hence at most $k-g$ of them can be pairwise congruent to each other $\hspace{-2mm} \mod s$.
	Thus coordinates outside of $R$ are also covered at most $k-g$ times.
\end{proof}

\begin{figure}[b!]
	\centering
	\def\ggap{0.8}
	\begin{tikzpicture}[yscale=-0.4, xscale=0.45]
	\draw[<->] (-0.4,0) --node[auto,swap]{$t$} (-0.4,9);
	\draw[<->] (0,-0.5) --node[auto]{$m$} (32,-0.5);	
	\draw [rounded corners=1] (0,0) rectangle ++(32, 9);
	\draw [rounded corners=1,fill=Csmug]
	(0,0)     rectangle ++(4, 1) node[midway] {$S_1$}
	++(0,0) rectangle ++(3, 1) node[midway] {}
	++(0,0) rectangle ++(3, 1) node[midway] {}
	++(0,0) rectangle ++(4, 1) node[midway] {$S_{s}$};
	\draw [rounded corners=1,fill=Csmug]
	(0,4)     rectangle ++(4-\ggap, 1) node[midway] {$S_{s+1}$}
	++(\ggap,0) rectangle ++(3-\ggap, 1) node[midway] {}
	++(\ggap,0) rectangle ++(3-\ggap, 1) node[midway] {}
	++(\ggap,0) rectangle ++(4-\ggap, 1) node[midway] {$S_{2s}$};
	\draw [rounded corners=1,fill=Csmug]
	(14,4)     rectangle ++(4, 1) node[midway] {$S_{s+1}$}
	++(0,0) rectangle ++(3, 1) node[midway] {}
	++(0,0) rectangle ++(2, 1) node[midway] {}
	++(0,0) rectangle ++(2, 1) node[midway] {$S_{2s}$};	
	\draw [rounded corners=1,fill=Csmug]
	(0,8)     rectangle ++(4-\ggap, 1) node[midway] {$S_{2s+1}$};
	\draw [rounded corners=1,fill=Csmug]
	(14,8)     rectangle ++(4-\ggap, 1) node[midway] {$S_{2s+1}$};
	\draw [rounded corners=1,fill=Csmug]
	(25,8)     rectangle ++(3, 1) node[midway] {$S_{2s+1}$};	
	
	\begin{scope}[shift={(0,9.5)}]
	\node[anchor=east] at (-0,0.5) {$\bar{v}=$};
	\draw [rounded corners=1] (0,0) rectangle ++(4-\ggap, 1);
	\node (a) at (0.5,0.5) {$1$}; \node (b) at (3.5-\ggap,0.5) {$1$}; \draw[loosely dotted] (a)--(b);
	\draw [rounded corners=1] (4-\ggap,0) rectangle ++(\ggap, 1);	
	\node at (4-0.5*\ggap,0.5) {T};	
	
	\draw [rounded corners=1,pattern=north east lines,pattern color=Csmug] (4,0) rectangle ++(3-\ggap, 1);
	\node (a) at (4.5,0.5) {$2$}; \node (b) at (6.5-\ggap,0.5) {$2$}; \draw[dotted] (a)--(b);
	\draw [rounded corners=1] (7-\ggap,0) rectangle ++(\ggap, 1);		
	\node at (7-0.5*\ggap,0.5) {T};
	
	\draw [rounded corners=1] (7,0) rectangle ++(3-\ggap, 1);
	\node (a) at (7.5,0.5) {}; \node (b) at (9.5-\ggap,0.5) {}; \draw[loosely dotted] (a)--(b);
	\draw [rounded corners=1] (10-\ggap,0) rectangle ++(\ggap, 1);	
	\node at (10-0.5*\ggap,0.5) {T};
	
	\draw [rounded corners=1] (10,0) rectangle ++(4-\ggap, 1);
	\node (a) at (10.5,0.5) {$s$}; \node (b) at (13.5-\ggap,0.5) {$s$}; \draw[loosely dotted] (a)--(b);
	\draw [rounded corners=1] (14-\ggap,0) rectangle ++(\ggap, 1);	
	\node at (14-0.5*\ggap,0.5) {T};	
	
	\draw [rounded corners=1] (14,0) rectangle ++(4-\ggap, 1);
	\node (a) at (14.5,0.5) {$1$}; \node (b) at (17.5-\ggap,0.5) {$1$}; \draw[loosely dotted] (a)--(b);	
	\draw [rounded corners=1] (18-\ggap,0) rectangle ++(\ggap, 1);	
	\node at (18-0.5*\ggap,0.5) {T};

	\draw [rounded corners=1,pattern=north east lines,pattern color=Csmug] (18,0) rectangle ++(3-\ggap, 1);
	\node (a) at (18.5,0.5) {$2$}; \node (b) at (20.5-\ggap,0.5) {$2$}; \draw[dotted] (a)--(b);
	\draw [rounded corners=1] (21-\ggap,0) rectangle ++(\ggap, 1);		
	\node at (21-0.5*\ggap,0.5) {T};
	
	\draw [rounded corners=1] (21,0) rectangle ++(2-\ggap, 1);
	\node (a) at (21.5,0.5) {}; \node (b) at (22.5-\ggap,0.5) {}; \draw[dotted] (a.center)--(b.center);
	\draw [rounded corners=1] (23-\ggap,0) rectangle ++(\ggap, 1);	
	\node at (23-0.5*\ggap,0.5) {T};	
	
	\draw [rounded corners=1] (23,0) rectangle ++(2-\ggap, 1);
	\node at (24-0.5*\ggap,0.5) {$s$};
	\draw [rounded corners=1] (25-\ggap,0) rectangle ++(\ggap, 1);	
	\node at (25-0.5*\ggap,0.5) {T};		

	\draw [rounded corners=1] (25,0) rectangle ++(3-\ggap, 1);
	\node (a) at (25.5,0.5) {$1$}; \node (b) at (27.5-\ggap,0.5) {$1$}; \draw[loosely dotted] (a)--(b);	
	\draw [rounded corners=1] (28-\ggap,0) rectangle ++(\ggap, 1);	
	\node at (28-0.5*\ggap,0.5) {T};
	
	\draw [rounded corners=1,pattern=north east lines,pattern color=Csmug] (28,0) rectangle ++(4, 1);
	\node (a) at (28.5,0.5) {$2$}; \node (b) at (31.5,0.5) {$2$}; \draw[dotted] (a)--(b);
	\end{scope}
	\end{tikzpicture}
	\caption{Illustration of smug sets obtained in the proof of Lemma~\ref{lem:smallSmug}.
	Each row represents one of the sets in the sequence $S_1,\dots,S_t$. The set
  $T$ is formed by coordinates with a T and get values $s+1$. The vector $\bar{v}$ is used to find the next row $S_{t+1}$.\vspace*{-\baselineskip}}
	\label{fig:smallSmug}
\end{figure}
	
This concludes the proof that smug sets satisfy the first condition of Corollary~\ref{cor:smug}
for polymorphisms of $(1,g,k)$-SetSAT with set size $s$ and domain size $s+1$,
assuming $\frac{g}{k} < \frac{s}{s+1}$.
Therefore, the problem is NP-hard and we have established the hardness part of
Theorem~\ref{thm:main}.

\section{The general case}
\label{sec:general}

In this section we will show how Theorem~\ref{thm:main} implies
Theorem~\ref{thm:general}. This amounts to arguing that a classification of
$(a,g,k)$-SetSAT can be obtained from the special case with $a=1$ and $d=s+1$.
We start with two easy reductions.

\begin{prop}\label{prop:shift}
	For any $1 \leq s < d$, the problems $(a,g,k)$-SetSAT and $(a+1,g+1,k+1)$-SetSAT are polynomial-time reducible to each other.
\end{prop}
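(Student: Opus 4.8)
The plan is to exhibit two padding reductions, one in each direction; both are essentially linear-time and leave $s$ and $d$ untouched. Throughout, fix once and for all a set $T \subseteq [d]$ with $|T| = s$, which exists since $1 \le s < d$, so that ``$T(x)$'' is a legal literal.

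First, to reduce $(a,g,k)$-SetSAT to $(a+1,g+1,k+1)$-SetSAT: given an instance $\Psi$, introduce a single fresh variable $z$ and replace each clause $C = (S_1(x_1) \vee \dots \vee S_k(x_k))$ by the width-$(k+1)$ clause $C' = (S_1(x_1) \vee \dots \vee S_k(x_k) \vee T(z))$, obtaining $\Psi'$. I would then verify the two implications a promise reduction needs. If $\Psi$ has a $g$-satisfying assignment $\sigma$, extend it by $z \mapsto t$ for any $t \in T$; every $C'$ then has at least $g+1$ satisfied literals, so $\Psi'$ is $(g+1)$-satisfiable. Conversely, if $\Psi'$ has an $(a+1)$-satisfying assignment $\sigma'$, then since the single extra literal $T(z)$ contributes at most one satisfied literal to each $C'$, at least $a$ of the original literals of each $C$ are satisfied by $\sigma'$ restricted to $\{x_1,\dots,x_n\}$; hence $\Psi$ is $a$-satisfiable. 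Taking contrapositives, YES instances map to YES instances and NO instances to NO instances. (Notably, no auxiliary clause forcing $z \in T$ is required.)

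Second, to reduce $(a+1,g+1,k+1)$-SetSAT to $(a,g,k)$-SetSAT: given an instance $\Psi'$ with width-$(k+1)$ clauses, replace each clause $C' = (S_1(x_1) \vee \dots \vee S_{k+1}(x_{k+1}))$ by the $k+1$ width-$k$ clauses $C'_i$, $i \in [k+1]$, where $C'_i$ is $C'$ with its $i$-th literal deleted; call the result $\Psi$. If $\sigma$ is a $(g+1)$-satisfying assignment of $\Psi'$, then for each $C'$ and each $i$, deleting one literal lowers the number of satisfied literals by at most one, so $\sigma$ satisfies at least $g$ literals of each $C'_i$ and hence $g$-satisfies $\Psi$. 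Conversely, suppose $\sigma$ is an $a$-satisfying assignment of $\Psi$; fix a clause $C'$ of $\Psi'$ and let $t$ be the number of its literals satisfied by $\sigma$. Since every $C'_i$ needs at least $a \ge 1$ satisfied literals, $t \ge 1$, so there is an index $i_0$ with the $i_0$-th literal of $C'$ satisfied; then $C'_{i_0}$ has exactly $t - 1 \ge a$ satisfied literals, so $t \ge a+1$, i.e., $\sigma$ $(a+1)$-satisfies $C'$. Thus $\Psi'$ is $(a+1)$-satisfiable, and again contraposition yields the required implications.

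Both maps are computable in linear time, so this completes the proof. I do not anticipate a genuine obstacle: these are routine padding constructions. The only points that require a little care are bookkeeping which of the ``YES $\to$ YES'' / ``NO $\to$ NO'' implications is actually needed at each step, and, in the second reduction, the small observation that $a \ge 1$ guarantees $t \ge 1$ so that a satisfied literal is available to delete -- without which the argument that $t \ge a+1$ would break.
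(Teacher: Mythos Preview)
Your proof is correct and follows essentially the same two padding constructions as the paper: add a single always-satisfiable literal on a fresh variable to go from $(a,g,k)$ to $(a+1,g+1,k+1)$, and take all $k$-literal sub-clauses of each width-$(k+1)$ clause for the converse. Your handling of the second direction is in fact slightly more careful than the paper's, since you explicitly use $a\ge 1$ to guarantee a satisfied literal exists before deleting it.
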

\begin{proof}
	To reduce $(a,g,k)$-SetSAT to $(a+1,g+1,k+1)$-SetSAT, introduce a new variable $y$ and add $S(y)$ to each existing clause, where $S$ is any literal. If the original instance has a $g$-satisfying assignment, then the same assignment, extended by assigning $y$ to a value satisfying $S$, is a $(g+1)$-satisfying assignment to the new instance. Conversely, if the old instance is not $a$-satisfiable, then the new instance cannot be $(a+1)$-satisfiable, as each new clause contains at most one additional satisfied literal.
	
	In the other direction, from $(a+1,g+1,k+1)$-SetSAT to $(a,g,k)$-SetSAT, let $\Psi$ be the orignal instance. For each clause $C$ of $\Psi$ we make $k+1$ new clauses by taking all subsets of $k$ literals of $C$. If $\Psi$ has a $(g+1)$-satisfying assignment, then the same assignment is $g$-satisfying for the new instance since we have removed only one literal from each clause of $\Psi$. Conversely, if every assignment to $\Psi$ is not $(a+1)$-satisfying, then every assignment is at most $a$-satisfying. Removing one of the satisfied literals from a clause $C$ of $\Psi$ creates a new clause that is at most $(a-1)$-satisfiable. Therefore, in the new instance, every assignment is at most $(a-1)$-satisfying.
\end{proof}

\begin{prop}\label{domainsize}
	There is a polynomial-time reduction from $(1,g,k)$-SetSAT with set size $s$ and domain size $d$ to $(1,g,k)$-SetSAT with set size $s$ and domain size $d+1$.
\end{prop}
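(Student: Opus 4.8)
The plan is to show that the \emph{identity} map is already a correct reduction. Given an instance $\Psi$ of $(1,g,k)$-SetSAT with set size $s$ over the domain $[d]$, we output the very same formula $\Psi$, now regarded as an instance of $(1,g,k)$-SetSAT with set size $s$ over the domain $[d+1]$. This is well-defined: every literal $S(x)$ occurring in $\Psi$ has $S\subseteq[d]\subseteq[d+1]$ with $|S|=s$, so it is a legal literal of set size $s$ over $[d+1]$ as well; and the transformation is trivially computable in linear time. Completeness is immediate: if $\sigma\colon\{x_1,\dots,x_n\}\to[d]$ is a $g$-satisfying assignment of $\Psi$, then, viewing $\sigma$ as a map into $[d]\subseteq[d+1]$, it is still a $g$-satisfying assignment, since neither the clauses nor the literals have changed.

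For soundness I would argue the contrapositive: if $\Psi$ is $1$-satisfiable over $[d+1]$, then it is $1$-satisfiable over $[d]$. Let $\sigma'\colon\{x_1,\dots,x_n\}\to[d+1]$ be a satisfying assignment, and define $\sigma(x):=\sigma'(x)$ whenever $\sigma'(x)\in[d]$ and $\sigma(x):=1$ (any fixed element of $[d]$) whenever $\sigma'(x)=d+1$. The point is that the value $d+1$ can never satisfy a literal of $\Psi$: if a literal $S(x)$ of $\Psi$ is satisfied by $\sigma'$, then $\sigma'(x)\in S\subseteq[d]$, so $\sigma'(x)\neq d+1$ and hence $\sigma(x)=\sigma'(x)\in S$, i.e. $S(x)$ is still satisfied by $\sigma$. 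Since every clause of $\Psi$ has at least one literal satisfied by $\sigma'$, that same literal is satisfied by $\sigma$, so $\sigma$ is a $1$-satisfying assignment of $\Psi$ over $[d]$.

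Putting the two directions together: a $g$-satisfiable input is mapped to a $g$-satisfiable (hence YES) instance, and an input that is not $1$-satisfiable is mapped to an instance that is not $1$-satisfiable (hence NO); in particular the promise is preserved, which is exactly what a correct reduction requires. I do not expect any genuine obstacle here — the only thing one needs to observe is that, in the larger domain, the extra value $d+1$ is useless for satisfying any literal inherited from $[d]$, so an arbitrary $[d+1]$-valued solution can always be pushed back down to $[d]$ by collapsing every occurrence of $d+1$ to a fixed old value.
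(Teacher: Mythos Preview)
Your proposal is correct and is essentially the same argument as the paper's: the reduction is the identity, completeness is immediate, and soundness follows by replacing every occurrence of the new value $d+1$ with an arbitrary fixed element of $[d]$, which cannot falsify any literal since all literals are subsets of $[d]$.
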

\begin{proof}
	The new instance produced by the reduction is the same as the old instance $\Psi$. If
	$\Psi$ is $g$-satisfiable, then it is again $g$-satisfiable by the same assignment and we ignore the new domain value. Conversely, a satisfying assignment over
	$[d+1]$ to $\Psi$ restricts to a satisfying
	assignment over $[d]$ by replacing $d+1$ with any value from $[d]$. This does not falsify any literals since all the literals of $\Psi$ range over $[d]$ only.
\end{proof}

\begin{proof}[Proof of Theorem~\ref{thm:general}] 
By Proposition~\ref{prop:shift}, we can assume $a=1$. The algorithm in
Proposition~\ref{prop:randalg} solves the problem in polynomial time as long as
$\frac{g}{k} \geq \frac{s}{s+1}$ (independent of $d$). Theorem~\ref{thm:main} states that $(1,g,k)$-SetSAT is NP-hard when $\frac{g}{k} < \frac{s}{s+1}$ and $d=s+1$. Proposition~\ref{domainsize} then extends this to larger $d$. 
\end{proof} 

We finish this section with proving the claim from Section~\ref{sec:intro} that literals described by sets of size less than $s$ can be emulated by literals of size exactly $s$.

\begin{prop}\label{setsize}
If $s \leq d-2$, there is a polynomial-time reduction from $(1,g,k)$-SetSAT with set size $s$ and domain size $d$ to $(1,g,k)$-SetSAT with set size $s+1$ and domain size $d$.
\end{prop}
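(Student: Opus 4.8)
The plan is to emulate each size-$s$ literal by the family of all size-$(s+1)$ sets containing it, taking a product over the $k$ coordinates of each clause. Concretely, given an instance $\Psi$ of $(1,g,k)$-SetSAT with set size $s$ and domain size $d$, I would build an instance $\Psi'$ with set size $s+1$ and domain size $d$ as follows: for every clause $C = (S_1(x_1) \vee \dots \vee S_k(x_k))$ of $\Psi$, and for every tuple $(S_1',\dots,S_k')$ with $S_i' \supseteq S_i$ and $|S_i'| = s+1$, put the clause $(S_1'(x_1) \vee \dots \vee S_k'(x_k))$ into $\Psi'$. For each $i$ there are exactly $d - s$ choices of $S_i'$ (add one element of $[d] \setminus S_i$), so $C$ gives rise to at most $(d-s)^k$ clauses; since $d,s,k$ are constants, $\Psi'$ is computable in linear time, and since $s \leq d-2$ we have $s+1 \leq d-1 < d$, so $\Psi'$ is a legitimate $(1,g,k)$-SetSAT instance with set size $s+1$.

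For completeness I would argue: if $\sigma$ is a $g$-satisfying assignment of $\Psi$, then for each clause $C$ it satisfies at least $g$ literals $S_i(x_i)$, i.e.\ $\sigma(x_i) \in S_i$; as $S_i \subseteq S_i'$ this gives $\sigma(x_i) \in S_i'$, so $\sigma$ satisfies at least $g$ literals in every clause of $\Psi'$ coming from $C$. Hence $\sigma$ is a $g$-satisfying assignment of $\Psi'$, and \emph{yes} instances map to \emph{yes} instances.

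For soundness I would show the contrapositive, that any $1$-satisfying assignment $\sigma$ of $\Psi'$ is already a $1$-satisfying assignment of $\Psi$. Fix a clause $C = (S_1(x_1) \vee \dots \vee S_k(x_k))$ of $\Psi$ and suppose, for contradiction, that $\sigma(x_i) \notin S_i$ for every $i$. Here is where the hypothesis $s \leq d-2$ enters: $[d] \setminus S_i$ has size $d - s \geq 2$ and contains $\sigma(x_i)$, so there is an element $y_i \in ([d] \setminus S_i) \setminus \{\sigma(x_i)\}$; setting $S_i' := S_i \cup \{y_i\}$ we get $S_i' \supseteq S_i$, $|S_i'| = s+1$, and $\sigma(x_i) \notin S_i'$. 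The clause $(S_1'(x_1) \vee \dots \vee S_k'(x_k))$ belongs to $\Psi'$ but is falsified by $\sigma$, a contradiction. Thus $\sigma$ satisfies at least one literal of every clause of $\Psi$; equivalently, if $\Psi$ is not $1$-satisfiable then neither is $\Psi'$, so \emph{no} instances map to \emph{no} instances. Together with completeness, this makes the map a valid polynomial-time reduction.

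I do not expect a genuine obstacle here: the construction and both directions are routine once one sees that the right move is the coordinatewise product of supersets rather than, say, a single superset or a conjunction (which cannot be expressed inside one disjunctive clause). The only point that needs care is the soundness direction, where one must exhibit, for each coordinate simultaneously, a size-$(s+1)$ superset that still excludes the assigned value — and this is exactly what the assumption $s \leq d-2$ buys.
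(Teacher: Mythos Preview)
Your proof is correct and matches the paper's own argument essentially step for step: the same $(d-s)^k$-fold superset blow-up, the same completeness via $S_i \subseteq S_i'$, and the same soundness by choosing $S_i' = S_i \cup \{y_i\}$ with $y_i \in [d]\setminus(S_i\cup\{\sigma(x_i)\})$, where $s \leq d-2$ guarantees such a $y_i$ exists.
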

\begin{proof}
	We replace each clause $S_1(x_1) \vee \dots \vee S_k(x_k)$
	with a set of $(d-s)^k$ clauses $S_1'(x_1) \vee \dots \vee S_k'(x_k)$,
	where $S_i'$ ranges over all supersets of $S_i$ of size $s+1$.
	Any $g$-satisfying assignment to the former clearly satisfies the latter.
	For a 1-satisfying assignment $\sigma$ to the latter,
	we claim that for every new clause $S_1'(x_1) \vee \dots \vee S_k'(x_k)$,
	at least one of the literals in $S_1(x_1) \vee \dots \vee S_k(x_k)$
	must be satisfied by $\sigma$. Suppose to the contrary that $\sigma(x_i)=a_i$ where 
	$a_i \not\in S_i$ for all $1\leq i \leq k$. Then the clause formed by the literals $S_i'=S_i \cup \{b_i\}$, where $b_i \in [d]\setminus (S_i \cup \{a_i\})$,
	would not be satisfied by $\sigma$, a contradiction. Note that as $d-s \geq 2$, the set $[d]\setminus (S_i \cup \{a_i\})$ is non-empty.
\end{proof}

\section{Impossibility results}
\label{sec:impos}

Here we show that $(1,g,k)$-SetSAT has rich polymorphisms (satisfying many non-trivial minor conditions),
even in the NP-hard range of parameters.
We thus demonstrate that certain sufficient condition for NP-hardness
from~\cite{bbko19} and earlier work cannot be
used to establish hardness of $(1,g,k)$-SetSAT for non-Boolean domains. We start with a few definitions.

Given an $n$-ary function $f:A^n\to B$, the first coordinate is called
\emph{essential} if there exist $a,a'\in A$ and $\bar{a}\in A^{n-1}$ such that
$f(a,\bar{a})\neq f(a',\bar{a})$; otherwise, the first coordinate is called
\emph{inessential}. Analogously, one defines the $i$-th
coordinate to be (in)essential.
The \emph{essential arity} of $f$ is the number of essential coordinates.
A minion has \emph{bounded essential arity} if there is some $k$
such that every function in the minion has essential arity at most $k$.

Let $\mathscr M $ and $\mathscr N$ be two minions.
A map $\xi:\mathscr M\to\mathscr N$ is called a \emph{minion
homomorphism} if (1) it preserves arities; i.e., maps $n$-ary functions to
$n$-ary functions, for all $n$; and (2) it preserves taking minors; i.e., for
each $\pi:[n]\to[m]$ and each $n$-ary $g\in\mathscr M$, we have
$\xi(g)(x_{\pi(1)},\ldots,x_{\pi(n)})=\xi(g(x_{\pi(1)},\ldots,x_{\pi(n)}))$.
A minion homomorphism $\mathscr M\to\mathscr N$ implies that
all minor conditions satisfied in $\mathscr M$ are also satisfied in $\mathscr N$;
this provides an algebraic way to give reductions between
PCSPs~\cite{bbko19}.
The basic example of this is the following theorem, based on the techniques used for promise SAT~\cite{agh17}.

\begin{thm}[\protect{\cite[Proposition~5.15]{bbko19}}]\label{thm:minbndarity}
  Let $(\A,\B)$ be a template. Assume that there exists a minion
  homomorphism $\xi:\Pol(\A,\B)\to\mathscr M$ for some minion $\mathscr M$ of bounded essential
  arity which does not contain a constant function (i.e., a function without
  essential coordinates). Then $\PCSP(\A,\B)$ is NP-hard.
\end{thm}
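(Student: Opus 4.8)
The plan is to apply Corollary~\ref{cor:sel} with $\ell=1$ and $\Mm=\Pol(\A,\B)$, taking as the distinguished set of each polymorphism the set of essential coordinates of its image under $\xi$. Write $k$ for the bound on the essential arity of $\mathscr M$, and let $E(h)$ denote the set of essential coordinates of a function $h$. For $f\in\Pol(\A,\B)$, define $\sel(f):=E(\xi(f))$. Since a minion homomorphism preserves arities, $\xi(f)$ has arity $\ar(f)$, so $\sel(f)\subseteq[\ar(f)]$; moreover $|\sel(f)|\leq k$ by the bounded essential arity assumption, and $\sel(f)\neq\emptyset$ because $\mathscr M$ contains no constant function.

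The key step is a lemma stating that essential coordinates are transported \emph{forward} along minors: if $h\xrightarrow{\pi}h'$ with $\pi\colon[\ar(h)]\to[\ar(h')]$, then $E(h')\subseteq\pi(E(h))$. To see this, observe first that changing the value at an inessential coordinate of a function never changes its output, and hence changing several inessential coordinates simultaneously, one at a time, does not change the output either. Now suppose $j\in[\ar(h')]$ with $\pi^{-1}(j)\cap E(h)=\emptyset$. In the identity $h'(x_1,\dots)=h(x_{\pi(1)},\dots)$, varying $x_j$ varies exactly the inputs of $h$ indexed by $\pi^{-1}(j)$, all of which are inessential in $h$; therefore $h'$ does not depend on $x_j$, i.e. $j\notin E(h')$. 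This proves the inclusion.

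With this in hand, consider any minor identity $f\xrightarrow{\pi}g$ with $f,g\in\Mm$. Since $\xi$ is a minion homomorphism, $\xi(f)\xrightarrow{\pi}\xi(g)$, so by the lemma $\sel(g)=E(\xi(g))\subseteq\pi(E(\xi(f)))=\pi(\sel(f))$, whence $\pi(\sel(f))\cap\sel(g)=\sel(g)\neq\emptyset$. In particular, for every chain of minors $f_0\xrightarrow{\pi_{0,1}}f_1$ of length $\ell=1$, the pair $i=0$, $j=1$ satisfies $\pi_{0,1}(\sel(f_0))\cap\sel(f_1)\neq\emptyset$. Thus the hypotheses of Corollary~\ref{cor:sel} hold with the constants $k$ and $\ell=1$, and we conclude that $\PCSP(\A,\B)$ is NP-hard.

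I do not expect a genuine obstacle: the single point requiring care is the direction of the minor relation — essential coordinates travel forward under $\pi$, rather than via preimage as smug sets do, which is exactly why the non-layered case $\ell=1$ of Corollary~\ref{cor:sel} suffices — and one should note that the nonemptiness of $\sel(\cdot)$, needed both to invoke Corollary~\ref{cor:sel} and in the displayed chain of inclusions, is precisely what the ``no constant function'' hypothesis buys. Alternatively one could bypass Corollary~\ref{cor:sel} and spell out the reduction from a bi-regular Gap Label Cover instance directly: replace each variable $x$ by a symbol $f_x$ of arity $m$ and each constraint $\phi_{x\to y}$ by the identity $f_x\xrightarrow{\phi_{x\to y}}f_y$; a satisfying labelling yields dictators, while any solution in $\Pol(\A,\B)$, composed with $\xi$, gives via uniform independent choices from the sets $\sel(\cdot)$ an assignment satisfying a $\tfrac{1}{k}$-fraction of the original constraints, contradicting the gap.
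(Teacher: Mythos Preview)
Your proof is correct. The essential-coordinate lemma you prove (that $E(h')\subseteq\pi(E(h))$ whenever $h\xrightarrow{\pi}h'$) is exactly what is needed, and your verification of the hypotheses of Corollary~\ref{cor:sel} with $\ell=1$ is clean; the nonemptiness of $\sel(\cdot)$ indeed comes precisely from the no-constant-function hypothesis, as you note.

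The paper takes a slightly different packaging: rather than invoking Corollary~\ref{cor:sel}, it derives Theorem~\ref{thm:minbndarity} from Theorem~\ref{thm:robust} (the $\varepsilon$-robust condition criterion, cited from~\cite{bbko19}) via an auxiliary lemma showing that a minion of essential arity bounded by $k$ cannot satisfy any $\tfrac{1}{k}$-robust bipartite minor condition. The proof of that lemma is essentially the same randomised-selection argument you sketch in your final paragraph, and a minion homomorphism $\xi\colon\Pol(\A,\B)\to\mathscr{M}$ transfers satisfiability of minor conditions forward, so $\Pol(\A,\B)$ also fails all $\tfrac{1}{k}$-robust conditions. Underneath, both routes are the same Label Cover decoding via essential coordinates; your route through Corollary~\ref{cor:sel} is arguably more self-contained within this paper, since that corollary is proved here in full, while Theorem~\ref{thm:robust} is only quoted.
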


In fact, this follows from a slightly more general condition.

\begin{defi}
	Let $\epsilon > 0$. We say that a bipartite minor condition $\Sigma$ is $\epsilon$-\emph{robust} if no $\epsilon$-fraction of identities from $\Sigma$ is trivial (i.e. satisfiable by projections).
\end{defi}

\begin{thm}[\protect{\cite[Corollary~5.11]{bbko19}}]\label{thm:robust}
	If there exists an $\epsilon > 0$ such that $\Pol(\A,\B)$ does not satisfy any $\epsilon$-robust minor condition, then $\PCSP(\A,\B)$ is NP-hard.
\end{thm}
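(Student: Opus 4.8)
The plan is to reduce from Bipartite Gap Label Cover, which is exactly the $\ell=1$ case of Theorem~\ref{thm:layeredGLC}: for the $\epsilon>0$ promised by the hypothesis there is a domain size $m$ such that it is NP-hard to distinguish fully satisfiable Label Cover instances over $[m]$ from those in which no $\epsilon$-fraction of the constraints can be satisfied simultaneously (for $\ell=1$ a chain is a single constraint, so ``weakly satisfying an $\epsilon$-fraction of chains'' is just ``satisfying an $\epsilon$-fraction of constraints''). The key observation is that such a Label Cover instance $\mathcal{L}$ \emph{is} a bipartite minor condition $\Sigma_\mathcal{L}$: introduce an arity-$m$ symbol $f_w$ for each variable $w$ (on the left or on the right according to the bipartition of $\mathcal{L}$), and for each projection constraint $\phi_{x\to y}\colon[m]\to[m]$ add the identity $f_x\xrightarrow{\phi_{x\to y}}f_y$. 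Assigning a projection to $f_w$ is the same as assigning a label to $w$, and a set of identities is trivial (satisfiable by projections on $\{0,1\}$) exactly when the corresponding set of constraints is simultaneously satisfiable; hence $\mathcal{L}$ is fully satisfiable iff $\Sigma_\mathcal{L}$ is trivial, and $\mathcal{L}$ has no simultaneously satisfiable $\epsilon$-fraction of constraints iff $\Sigma_\mathcal{L}$ is $\epsilon$-robust.

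Next I would turn $\Sigma_\mathcal{L}$ into an instance $\mathbf{C}_\mathcal{L}$ of $\PCSP(\A,\B)$ by the standard construction from~\cite{bbko19} that realises bipartite minor conditions as $\PCSP(\A,\B)$ instances (the one underlying the equivalence with $\PMC$): place one copy of the $m$-th power $\A^m$ per symbol $f_w$, and glue these copies according to the identities, identifying $(a_1,\dots,a_m)$ in the copy for $f_y$ with $(a_{\phi(1)},\dots,a_{\phi(m)})$ in the copy for $f_x$ for each identity $f_x\xrightarrow{\phi}f_y$. Since the relations of $\A^m$ are defined coordinatewise, the homomorphisms $\A^m\to\mathbf{D}$ are exactly the $m$-ary polymorphisms of $(\A,\mathbf{D})$, and the gluings force these polymorphisms to obey the minor identities; thus homomorphisms $\mathbf{C}_\mathcal{L}\to\mathbf{D}$ are in bijection with the assignments satisfying $\Sigma_\mathcal{L}$ in $\Pol(\A,\mathbf{D})$. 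Consequently, if $\mathcal{L}$ is fully satisfiable then $\Sigma_\mathcal{L}$ is trivial, hence satisfiable by projections, which are polymorphisms of $(\A,\A)$, so $\mathbf{C}_\mathcal{L}\to\A$; and if $\mathcal{L}$ is $\epsilon$-far then $\Sigma_\mathcal{L}$ is $\epsilon$-robust, so by the hypothesis it is not satisfiable in $\Mm=\Pol(\A,\B)$, whence $\mathbf{C}_\mathcal{L}\not\to\B$. As $m$ is a constant, $\mathbf{C}_\mathcal{L}$ has size $O(|\mathcal{L}|\cdot|A|^m)$ and is computable in log-space, and the promise of $\PCSP(\A,\B)$ holds on all instances produced from promise instances of Gap Label Cover, so this is a valid reduction and $\PCSP(\A,\B)$ is NP-hard.

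The step I expect to be the main obstacle is the quantitative bookkeeping that makes ``$\epsilon$-robust'' coincide exactly with the soundness side of Theorem~\ref{thm:layeredGLC}: one has to check that each identity of $\Sigma_\mathcal{L}$ is a single Label Cover constraint and, crucially, that a \emph{trivial} sub-condition (satisfiable by projections on $\{0,1\}$) is precisely a \emph{satisfiable} subset of constraints, so that ``no $\epsilon$-fraction of the identities is trivial'' is literally ``no $\epsilon$-fraction of the constraints is satisfiable''. The power-structure construction and the identification of homomorphisms out of $\mathbf{C}_\mathcal{L}$ with satisfying assignments in the polymorphism minion are routine, and in fact already implicit in the $\PCSP(\A,\B)\equiv\PMC_\Mm(m)$ correspondence recalled before the statement.
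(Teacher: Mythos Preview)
The paper does not give its own proof of Theorem~\ref{thm:robust}; it is quoted verbatim from~\cite[Corollary~5.11]{bbko19} and used as a black box (the lemma that follows it in the paper only shows how Theorem~\ref{thm:minbndarity} is a consequence of Theorem~\ref{thm:robust}, not a proof of the latter). Your proposal is essentially the argument from~\cite{bbko19}: view a Gap Label Cover instance as a bipartite minor condition so that triviality coincides with full satisfiability and $\epsilon$-robustness coincides with the soundness gap, then realise the condition as a $\PCSP(\A,\B)$ instance via the free/power-structure construction underlying the $\PMC$ equivalence. The reasoning is correct, and the step you flag as the main obstacle is exactly the tautology that a subset of identities is trivial iff some projection (i.e., label) assignment satisfies the corresponding subset of constraints.
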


Theorem~\ref{thm:minbndarity} follows from Theorem~\ref{thm:robust} by observing that
minions of bounded essential arity cannot satisfy any sufficiently robust condition.
We give a proof for completeness.

\begin{lem}
	Let $\mathscr M$ be a minion where every function of arity $m$ has essential arity at most $f(m)$.
	Then $\mathscr M$ cannot satisfy any $\frac{1}{f(m)}$-robust bipartite minor condition involving symbols of arity at most~$m$.
\end{lem}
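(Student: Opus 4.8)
The plan is to argue by contradiction. Suppose $\mathscr{M}$ satisfies a $\frac{1}{f(m)}$-robust bipartite minor condition $\Sigma$ whose function symbols all have arity at most $m$, via an assignment $\xi$ sending each symbol $h$ to some $\xi(h)\in\mathscr{M}$ of arity $\ar(h)\le m$. (We also assume $\mathscr{M}$ contains no constant function, as in Theorem~\ref{thm:minbndarity}; this is needed below to ensure the relevant sets are non-empty, and it matches all the applications.) First I would normalise arities: padding $\xi(h)$ with inessential coordinates up to arity exactly $m$ is a minor operation, so the padded function lies in $\mathscr{M}$ and hence has essential arity at most $f(m)$; since padding does not change which coordinates are essential, $\xi(h)$ itself has at least one and at most $f(m)$ essential coordinates. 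Write $E_h\subseteq[\ar(h)]$ for this set, so $1\le|E_h|\le f(m)$.

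The key combinatorial observation is that essential coordinates propagate backwards along minor identities: if $f\xrightarrow{\pi}g$ is an identity of $\Sigma$, then $E_{g}\subseteq\pi(E_{f})$. Indeed, $\xi(g)(x_1,\dots,x_{\ar(g)})\approx\xi(f)(x_{\pi(1)},\dots,x_{\pi(\ar(f))})$; if a coordinate $j$ of $\xi(g)$ is essential, then $\xi(f)$ must depend on the block of coordinates $\pi^{-1}(j)$ jointly (since fixing all other inputs and varying only $x_j$ changes the output), hence on at least one coordinate in that block, so $\pi^{-1}(j)\cap E_f\neq\emptyset$ and $j\in\pi(E_f)$.

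Next I would run the standard probabilistic decoding, exactly as in the proof of Corollary~\ref{cor:sel}: independently for each symbol $h$, pick a uniformly random coordinate $c(h)\in E_h$, and assign to $h$ the $c(h)$-th projection of the appropriate arity. For a fixed identity $f\xrightarrow{\pi}g$ of $\Sigma$, this projection assignment satisfies it precisely when $\pi(c(f))=c(g)$. Conditioning on any value $c(g)=j\in E_g$, the observation above gives some $i\in E_f$ with $\pi(i)=j$, and since $c(f)$ is independent of $c(g)$ we get $\Pr[\pi(c(f))=c(g)\mid c(g)=j]\ge\Pr[c(f)=i]=\tfrac{1}{|E_f|}\ge\tfrac{1}{f(m)}$. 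Hence each identity of $\Sigma$ is satisfied with probability at least $\tfrac{1}{f(m)}$, so the expected number of satisfied identities is at least $\tfrac{1}{f(m)}|\Sigma|$, and by averaging there is a fixed choice of $c$ for which the corresponding projection assignment simultaneously satisfies at least a $\tfrac{1}{f(m)}$-fraction of $\Sigma$. That fraction of identities is then trivial, contradicting $\tfrac{1}{f(m)}$-robustness.

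I do not expect a real obstacle: the argument is simply a one-layer instance of the reduction behind Corollary~\ref{cor:sel}. The only points that need care are the two edge-case normalisations (reducing to arity exactly $m$, and guaranteeing $E_h\neq\emptyset$, which is where the no-constant-function hypothesis enters) and getting the direction of the essential-coordinate propagation right — it goes "backwards" along $\pi$, in the same way that pre-images of smug sets are smug.
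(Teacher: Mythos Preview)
Your proposal is correct and follows essentially the same approach as the paper: identify the set of essential coordinates of each assigned function, use the observation that $E_g\subseteq\pi(E_f)$ whenever $\xi(f)\xrightarrow{\pi}\xi(g)$, and then decode to projections by selecting a random essential coordinate, so that each identity is satisfied with probability at least $\tfrac{1}{f(m)}$.

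Two minor differences are worth noting. First, the paper fixes the choice on the $g$-side deterministically and randomizes only on the $f$-side, whereas you randomize both and condition; the effect is the same. Second, you are more careful than the paper about two edge cases it glosses over: you explicitly pad to arity~$m$ to justify the bound $|E_f|\le f(m)$ for symbols of smaller arity, and you explicitly invoke the no-constant-function hypothesis so that each $E_h$ is non-empty. Both are genuinely needed for the argument (and indeed for the lemma) to be fully correct, and the paper's proof tacitly relies on them.
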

\begin{proof}
	Let $\Sigma$ be a $\frac{1}{f(m)}$-robust bipartite minor condition involving symbols of arity at most~$m$.
	Suppose $\Sigma$ is satisfied by $\mathscr M$,
	that is, there is an assignment $\xi$ from symbols in $\Sigma$ to functions in $\mathscr M$ of the same arity such that for every condition $f \xrightarrow{\pi} g$ in $\Sigma$ we have $\xi(f) \xrightarrow{\pi} \xi(g)$.
	Let $I(\xi(f))$ be the set of essential coordinates in $\xi(f)$. 
	It is easy to check that essential coordinates of a minor $\xi(g)$ of a function $\xi(f)$ correspond to essential coordinates of $\xi(f)$, that is:
	$\xi(f) \xrightarrow{\pi} \xi(g)$ implies $I(\xi(g)) \subseteq \pi(I(\xi(f)))$.
	Hence if we fix $\iota(g) \in I(\xi(g))$ arbitrarily, for each symbol $g$ on one side of $\Sigma$,
	and choose $\iota(f) \in I(\xi(f))$ uniformly at random, for each symbol $f$ on the other side of $\Sigma$, then for each condition $f \xrightarrow{\pi} g$ the corresponding condition $\pi(\iota(f)) = \iota(g)$ is satisfied with probability at least $\frac{1}{f(m)}$.
	Equivalently, replacing $\xi(g)$ with the projection to $\iota(g)$ and $\xi(f)$ with the projection $\iota(f)$, the condition $p_{\iota(f)} \xrightarrow{\pi} p_{\iota(g)}$ is satisfied
	with probability at least $\frac{1}{f(m)}$.	
	Therefore, there exists an assignment with projections that satisfies at least $\frac{1}{f(m)}$ of the conditions,
	which means $\Sigma$ is not $\frac{1}{f(m)}$-robust.
\end{proof}

We show that polymorphisms of SetSAT satisfy robust conditions and therefore the assumptions of Theorem~\ref{thm:minbndarity} and Theorem~\ref{thm:robust} are not met.
The same construction will also give polymorphisms excluding other approaches (e.g. polymorphisms without small ``fixing'' sets).
We first define how to reconstruct a polymorphism from a family of smug sets.

\begin{defi}\label{def:smugToPolym}
	Consider $(1,g,k)$-SetSAT with domain size $s+1$.
	Let $U$ be a finite set and let $\mathscr{S}=\{S_1,\dots,S_{|\mathscr{S}|}\}$ be a sequence of non-empty subsets of $U$ with the following properties:
	\begin{itemize}
		\item for every partition $U = U_1 \cup \dots \cup U_{s+1}$ into $s+1$ possibly empty sets,
		at least one of $U_1,\dots,U_{s+1}$ is in $\mathscr{S}$.
		\item for every $k$-tuple $(S_{i_1},\dots,S_{i_k}) \in \mathscr{S}^k$, some $u \in U$ is contained in at least $k-g+1$ of the $k$ sets.
	\end{itemize}
	Let $q_{\mathscr{S}} \colon [s+1]^{|U|} \to [s+1]$ be defined as follows.
	For an input $\bar{x} \in [s+1]^{|U|}$,
	partition the coordinates according to their value:
	that is, for $i \in [s+1]$ let $U_i := \{ u \in U \colon x_u = i \}$.
	Let $q_{\mathscr{S}}(\bar{x})$ be the value $i \in [s+1]$ such that $U_i \in \mathscr{S}$;
	if there are many such $i$, choose $U_i$ to be first in the sequence $\mathscr{S}$.
\end{defi}
By construction, all the smug sets of $q_{\mathscr{S}}$ are contained in $\mathscr{S}$.
By Lemma~\ref{lem:smugPolym}, $q_{\mathscr{S}}$ is a polymorphism.
Note that because of the preference for earlier sets in $\mathscr{S}$,
not all sets in $\mathscr{S}$ have to be smug, 
and there may exist different functions with the same family of smug sets.
On the other hand, the ordering in $\mathscr{S}$ matters only when comparing disjoint sets.

The following polymorphisms satisfy many non-trivial minor conditions.
For notational convenience we consider only the case $k-g+1=3$.
\begin{defi}	
	For $m \in \mathbb{N}$, let $U := \binom{[m]}{3} \cup \{\bot\}$.
	That is, we will index coordinates with triples $\{i_1,i_2,i_3\}$ in $[m]$,
	with one additional special coordinate $\bot$.
	For $i \in [m]$, let $S_i \subseteq U$ be the set of triples containing $i$.
	Let $\mathscr{S}_m$ be the family of all supersets of sets in $\{S_1,\dots,S_{m},\{\bot\}\}$,
	ordered so that sets not containing $\bot$ are all earlier than sets containing $\bot$.
	Let $q_m := q_{\mathscr{S}_m}$.
\end{defi}

Observe that the minimal smug sets of $q_m$ are exactly $S_1,\dots,S_{m},\{\bot\}$.
Note also that every two sets in $S_1,\dots,S_m$ (and hence any two of their supersets) intersect,
so the ordering between them is irrelevant, and similarly for every two sets containing $\bot$;
hence $q_m$ is defined unambiguously.

\begin{prop}\label{prop:qm}
	Let $m \geq 4$, $k-g+1=3$, and $\frac{g}{k} > \frac{1}{2}$.
	Then
	\begin{enumerate}[label={(\roman*)}]
		\item $q_m$ is a polymorphism of $(1,g,k)$-SetSAT of arity $\binom{m}{3}+1$;
    \item \label{qm:2} $q_m$ and projections of arity $m+1$ satisfy a $\frac{4}{m}$-robust minor condition;
		\item for every partial assignment to less than $\frac{m}{3}$ coordinates of $q_m$ and every value $a \in [s+1]$, there is an assignment to the remaining coordinates that makes $q_m$ take value $a$.
		(In particular this means $q_m$ does not have small ``weakly fixing'' or ``avoiding'' sets).
	\end{enumerate}
\end{prop}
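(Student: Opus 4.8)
The plan is to handle the three items in the order (i), then \ref{qm:2}, then \ref{qm:2} last since that is by far the most delicate; (i) and (iii) are essentially bookkeeping about the family $\mathscr{S}_m$.

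\emph{Part (i).} I would just verify the two requirements of Definition~\ref{def:smugToPolym} for $\mathscr{S}_m$; then $q_m$ is a polymorphism by the discussion right after that definition (which invokes Lemma~\ref{lem:smugPolym}), and its arity is $|U|=\binom{m}{3}+1$. The first requirement is immediate: in any partition $U=U_1\cup\dots\cup U_{s+1}$ the coordinate $\bot$ lies in some part, and that part is a superset of $\{\bot\}$, hence in $\mathscr{S}_m$. For the second requirement, take any $k$ sets $T_1,\dots,T_k\in\mathscr{S}_m$. If at least three of them contain $\bot$ we are done with $u=\bot$. Otherwise at least $k-2$ of them avoid $\bot$, and any set of $\mathscr{S}_m$ avoiding $\bot$ must be a superset of some $S_j$ (it is not a superset of $\{\bot\}$). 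From $\frac{g}{k}>\frac12$ and $k-g+1=3$ we get $k=g+2$ and $g\geq 3$, so $k-2\geq 3$; pick three of these, supersets of $S_{j_1},S_{j_2},S_{j_3}$. Using $m\geq 4$ one finds a triple $t\in\binom{[m]}{3}$ with $j_r\in t$ for $r=1,2,3$ (take $t=\{j_1,j_2,j_3\}$ when they are distinct, and pad with fresh indices otherwise), and then $t$ lies in all three supersets; so $t$ is in at least $k-g+1=3$ of the $T_i$, as needed.

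\emph{Part (iii).} Suppose a partial assignment fixes fewer than $\frac{m}{3}$ coordinates of $q_m$ and a target value $a\in[s+1]$ is given. Among the fixed coordinates, those that are triples number fewer than $\frac{m}{3}$, so together they involve fewer than $m$ indices of $[m]$; hence there is an index $j\in[m]$ lying in no fixed triple. Extend the partial assignment by putting value $a$ on every coordinate of $S_j$ (all unfixed, by the choice of $j$) and arbitrary consistent values elsewhere. In the resulting input $\bar{x}$ the value-class $U_a=\{u:x_u=a\}$ contains $S_j$, so $U_a\in\mathscr{S}_m$. No other value-class can contain any $S_{j'}$, since any two sets $S_j,S_{j'}$ share a triple while distinct value-classes are disjoint; and only one value-class contains $\bot$. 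Hence $U_a$ is the unique non-$\bot$ member of $\mathscr{S}_m$ among the value-classes, so it is chosen first and $q_m(\bar{x})=a$. As this works for every $a$, no partial assignment to fewer than $\frac{m}{3}$ coordinates can make $q_m$ constant or exclude a value from its range, so $q_m$ has no weakly fixing or avoiding set of size $<\frac{m}{3}$.

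\emph{Part \ref{qm:2}.} First I would record that every projection of arity $m+1$ is a minor of $q_m$: a ``$\max$-type'' collapse $U\to[m+1]$ sending $\{a,b,c\}\mapsto\max\{a,b,c\}$ under a linear order on $[m]$ making a chosen index last (and $\bot\mapsto\bot$) satisfies $q_m\xrightarrow{\pi}p_c$, because the class of the chosen coordinate always contains the corresponding $S_j$, and one more construction (spreading $\bot$ over a triple through each index) gives $q_m\xrightarrow{\pi}p_\bot$. I would then build $\Sigma_m$ as a bipartite minor condition, with left symbols of arity $\binom{m}{3}+1$ and right symbols of arity $m+1$, shaped as a bi-regular, connected Label-Cover-style consistency test (for instance obtained from a low-soundness bipartite Gap Label Cover instance via the PCP theorem and parallel repetition, realized through the collapse maps above). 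Assigning $q_m$ to every left symbol and to each right symbol the coordinate dictated by honestly decoding $q_m$ through the adjacent collapses satisfies every identity, so $\Sigma_m$ is satisfied by $q_m$ together with projections of arity $m+1$. Conversely, a projection assignment picks one coordinate $a\in\binom{[m]}{3}\cup\{\bot\}$ per left symbol, and since the collapse maps are spreading (a fixed coordinate of $U$ is sent to the ``correct'' target only an $O(1/m)$-fraction of the time, by the same pigeonhole/counting one uses for soundness of Label Cover over a domain of size $\Theta(m)$), no such assignment satisfies more than a $\frac{4}{m}$-fraction of the identities; hence $\Sigma_m$ is $\frac4m$-robust, which makes Theorems~\ref{thm:minbndarity} and~\ref{thm:robust} inapplicable.

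\emph{Main obstacle.} Parts (i) and (iii) are routine combinatorics with $\mathscr{S}_m$; the real work is in \ref{qm:2}, where $\Sigma_m$ must be designed so that it is simultaneously \emph{complete} --- genuinely realizable by $q_m$, i.e.\ one can choose collapses together with mutually consistent right-side target coordinates --- and \emph{sound}, i.e.\ expanding/connected enough that any single choice of coordinate per left symbol satisfies only a $\frac4m$-fraction. This is where the $m$-fold ambiguity of $q_m$'s decoding (its mass is spread over the minimal smug sets $S_1,\dots,S_m$) must be combined with a careful account of how the collapse maps distribute the coordinates of $U$ over $[m+1]$.
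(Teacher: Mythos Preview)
Your arguments for (i) and (iii) are correct and essentially the same as the paper's; the only cosmetic difference is that the paper does not spell out why setting all of $S_{j}$ to $a$ forces $q_m(\bar{x})=a$, whereas you (correctly) justify it via the pairwise intersections of the $S_{j'}$ and the ordering that puts supersets of some $S_j$ before supersets of $\{\bot\}$.

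Part~\ref{qm:2}, however, has a genuine gap. You never actually construct $\Sigma_m$; invoking ``a bi-regular, connected Label-Cover-style consistency test \dots obtained from a low-soundness bipartite Gap Label Cover instance via the PCP theorem and parallel repetition'' is both unspecified and massively over-engineered. Your soundness claim (``a fixed coordinate of $U$ is sent to the correct target only an $O(1/m)$-fraction of the time'') is not an argument but a hope, and your max-type collapses, as written, send the right side to a \emph{different} projection $p_j$ for each $j$, so there is no single right-side assignment making all identities hold --- you would need further structure you never describe. The discussion of $p_\bot$ is also a red herring.

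The paper's construction is strikingly simple by comparison: one left symbol $f$ of arity $|U|$, one right symbol $g$ of arity $m{+}1$, and $m$ identities $f\xrightarrow{\pi_i}g$ where
\[
\pi_i(I)=
\begin{cases}
m{+}1 & \text{if } I\in\tbinom{[m]}{3}\text{ and } i\in I,\\
i & \text{otherwise (in particular }\pi_i(\bot)=i\text{)}.
\end{cases}
\]
Completeness is immediate: $q_m\xrightarrow{\pi_i}p_{m+1}$ for every $i$, since the preimage of $m{+}1$ is exactly $S_i$ and hence always contains a minimal smug set not containing $\bot$. Soundness is a two-line pigeonhole: if projections $p_I$ and $p_c$ satisfy four identities, say for $i=1,2,3,4$, then $\pi_1(I)=\pi_2(I)=c$ forces $c=m{+}1$, after which $\pi_i(I)=m{+}1$ for $i=1,2,3,4$ forces $I$ to be a triple containing all of $1,2,3,4$ --- impossible. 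Thus no four of the $m$ identities are simultaneously trivial, i.e.\ $\Sigma$ is $\tfrac{4}{m}$-robust. The key trick you are missing is routing $\bot$ to the \emph{varying} coordinate $i$ (rather than to a fixed coordinate), which is precisely what prevents the projection $p_\bot$ from being a universal cheat on the left side.
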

\begin{proof}
To check (i), we have to check that $\mathscr{S}_m$ satisfies the two conditions of  Definition~\ref{def:smugToPolym}.
The first condition is trivial because all sets containing $\bot$ are in $\mathscr{S}_m$.
To check the second condition, suppose for contradiction that 
some $k$-tuple of sets in $\mathscr{S}_m$ covers every coordinate at most $k-g=2$ times.
In particular $\bot$ would be covered at most 2 times,
leaving at least $g \geq k-g+1 = 3$ sets not containing $\bot$.
By definition of $\mathscr{S}_m$ these three sets would be supersets of $S_{i_1},S_{i_2},S_{i_3}$ respectively for some $i_1,i_2,i_3 \in [m]$ (not necessarily distinct),
hence taking $I \in \binom{[m]}{3} \subseteq U$ to be any triple containing $\{i_1,i_2,i_3\}$,
we see that the coordinate $I$ is covered by all three sets, and hence by some $3 = k-g+1$ sets.

For (ii), we start with an informal description; the formal argument is below. 
Let us first consider a minor of $q_m$ defined by identifying all coordinates that are triples containing some $i \in [m]$.
Observe that this minor is a projection to the resulting coordinate, for all $i \in [m]$.
This gives $m$ identities between $q_m$ and a projection $p$.
However, the same identities could be satisfied by replacing $q_m$ with a projection to $\bot$;
to avoid this, we map $\bot$ to a different coordinate of $p$ for each $i \in [m]$.

Formally, let $p \colon [s+1]^{m+1} \to [s+1]$ be the projection of arity $m+1$ to the last coordinate,
$p(x_1,\dots,x_{m+1})=x_{m+1}$.
For $i \in [m]$, let $\pi_i \colon U \to [m+1]$ be defined as $\pi_i(I) = m+1$ if $I \in \binom{[m]}{3}$ and $I \ni i$, otherwise set $\pi_i(I) = i$ (in particular $\pi_i(\bot)=i$).
Then $q_m \xrightarrow{\pi_i} p$ for each $i \in [m]$.

Consider the bipartite minor condition $\Sigma$ with two symbols $f,g$ of arity $|U|$ and $m+1$, respectively,
and $m$ identities $f \xrightarrow{\pi_i} g$.
Clearly this condition is satisfied by $q_m,p$.
We claim the condition is $\frac{4}{m}$-robust, that is, no four identities out of the $m$ identities of $\Sigma$ can be simultaneously satisfied by projections.
Suppose the opposite, that is, assigning $f=p_{I}$ for some $I \in U$ and $g = p_i$ for some $i \in [m+1]$ satisfies four identities.
Without loss of generality these identities are $p_{I} \xrightarrow{\pi_1} p_i$, $p_{I} \xrightarrow{\pi_2} p_i$, $p_{I} \xrightarrow{\pi_3} p_i$, and $p_{I} \xrightarrow{\pi_4} p_i$.
Equivalently, $\pi_1(I) = i$, $\pi_2(I) = i$, $\pi_3(I) = i$, and $\pi_4(I) = i$.
The first condition implies that $i$ is either $1$ or $m+1$;
similarly the second implies that $i$ is either $2$ or $m+1$;
hence $i=m+1$.
The condition $\pi_1(I) = m+1$ then implies that $I$ is a triple in $\binom{[m]}{3}$ containing $1$.
Similarly $I$ must contain $2$, $3$, and $4$.
This is a contradiction, so $\Sigma$ is indeed $\frac{4}{m}$-robust.

For (iii), consider a partial assignment to some $k < \frac{m}{3}$ coordinates $I_1,\dots,I_k$ of $q_m$.
Let $I$ be the set of values $i \in [m]$ that are contained in some triple among $I_1,\dots,I_k$.
Then $|I| \leq 3k < m$, so there is a value $i^* \in [m]\setminus I$.
This means no coordinate in $S_{i^*}$ has been assigned yet.
Therefore, for any $a \in [s+1]$, assigning the value $a$ to all coordinates in $S_{i^*}$ (and remaining coordinates arbitrarily) makes $q_m$ take the value $a$.
\end{proof}

As a side note, another way to obtain a projection as a minor of $q_m$ is as follows.
Let $T \subseteq U$ be any set intersecting each of $S_1,\dots,S_m,\{\bot\}$.
Then identifying all coordinates in $T$ yields a projection to the resulting coordinate;
indeed, for any input $\bar{x} \in [s+1]^U$, the smug set of $\bar{x}$ in $q_m$ contains one of $S_1,\dots,S_m,\{\bot\}$ and hence contains a coordinate in $T$.

\begin{cor}\label{cor:34}
	Suppose $k-g+1=3$ and $g \geq 3$.
	Then the polymorphisms of $(1,g,k)$-SetSAT do not admit a minion homomorphism to a minion of bounded essential arity (or in fact to any minion with functions of arity $m$ having essential arity at most $\frac{m^{1/3}}{4}$).
\end{cor}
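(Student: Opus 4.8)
The plan is a routine transfer argument combining Proposition~\ref{prop:qm} with the lemma stated just after Theorem~\ref{thm:robust} (that a minion whose arity-$n$ functions all have essential arity at most $f(n)$ cannot satisfy any $\frac{1}{f(n)}$-robust bipartite minor condition on symbols of arity at most $n$), together with the fact from~\cite{bbko19} that a minion homomorphism transfers satisfaction of every bipartite minor condition.

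First I would record that the hypotheses $k-g+1=3$ and $g\ge 3$ give $k=g+2$, hence $\frac{g}{k}=\frac{g}{g+2}>\frac12$, so Proposition~\ref{prop:qm} is applicable for every $m\ge 4$. Let $(\A,\B)$ be the SetSAT template over $[s+1]$ in force in this part of the paper. By Proposition~\ref{prop:qm}(i) we have $q_m\in\Pol(\A,\B)$, and since all projections are polymorphisms, the arity-$(m+1)$ projection $p$ appearing in Proposition~\ref{prop:qm}(ii) also lies in $\Pol(\A,\B)$. Hence $\Pol(\A,\B)$ satisfies the $\frac{4}{m}$-robust bipartite minor condition $\Sigma$ of Proposition~\ref{prop:qm}(ii), via the assignment sending the two symbols of $\Sigma$ to $q_m$ and $p$. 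Every symbol of $\Sigma$ has arity at most $N:=\binom{m}{3}+1$, which for $m\ge 4$ is the larger of the two arities $\binom{m}{3}+1$ and $m+1$ involved.

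Now suppose for contradiction that there is a minion homomorphism $\xi\colon\Pol(\A,\B)\to\mathscr M$, where $\mathscr M$ is a minion (containing no constant function, which is the relevant case for Theorem~\ref{thm:minbndarity}) in which every arity-$n$ function has essential arity at most $f(n):=\frac{n^{1/3}}{4}$; the bounded-essential-arity case corresponds to $f$ being constant. Composing $\xi$ with the assignment above, $\mathscr M$ also satisfies $\Sigma$. On the other hand, $\binom{m}{3}+1\le m^3$ yields $f(N)\le\frac{m}{4}$, so $\frac{1}{f(N)}\ge\frac{4}{m}$; thus $\Sigma$, being $\frac{4}{m}$-robust, is a fortiori $\frac{1}{f(N)}$-robust. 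Applying the lemma with arity bound $N$, the minion $\mathscr M$ cannot satisfy any $\frac{1}{f(N)}$-robust bipartite minor condition all of whose symbols have arity at most $N$, contradicting the previous sentence. Since $m$ can be taken arbitrarily large while $f$ is fixed, no such $\xi$ exists, proving the corollary.

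The genuine content lies entirely in Proposition~\ref{prop:qm} (constructing $q_m$ and verifying that it, together with a projection, satisfies a robust minor condition); the corollary is a mechanical consequence, so there is no real obstacle here. The one spot that needs care is the inequality $\frac{1}{f(N)}\ge\frac{4}{m}$: each of the $m$ identities of $\Sigma$ pins down one of $m$ indices, but $q_m$ has arity $\Theta(m^3)$, so the robustness parameter $\frac1m$ only survives as $\approx\frac{1}{N^{1/3}}$ once rescaled against the arity occurring in $\Sigma$ — which is exactly why the statement features the exponent $\tfrac13$ (and the constant $4$).
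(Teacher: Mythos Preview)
Your argument is correct and is exactly the intended derivation: the paper leaves the corollary unproved, but it is meant to follow by transferring the $\tfrac{4}{m}$-robust condition of Proposition~\ref{prop:qm}(ii) through a putative minion homomorphism and then invoking the lemma after Theorem~\ref{thm:robust}, precisely as you do. The arithmetic $\binom{m}{3}+1\le m^3$, hence $f(N)\le m/4$ and $\tfrac{1}{f(N)}\ge\tfrac{4}{m}$, is the only thing one needs to check, and you have it right; the parenthetical about constant functions is harmless (it matches the hypothesis already present in Theorem~\ref{thm:minbndarity} and in the lemma's proof).
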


Therefore, by Corollary~\ref{cor:34}, the bounded essentially arity assumption
of Theorem~\ref{thm:minbndarity} does not apply and thus NP-hardness
cannot be derived from Theorem~\ref{thm:minbndarity}.
By Proposition~\ref{prop:qm}~\ref{qm:2}, Theorem~\ref{thm:robust} does not apply
either and neither does the weaker assumption where constants (bounding
essential arity or $1/\epsilon$ in assumptions involving $\epsilon$-robustness)
can be replaced by functions subpolynomial in the arity of the polymorphisms (as
in~\cite[Theorem~5.10]{bbko19}).

Even with the most general assumption used in~\cite[Theorem~5.22]{bbko19}, which
is also proved using a layered version of the PCP theorem, we were unable to
prove hardness (specifically in our Corollary~\ref{cor:smug}) despite several
attempts. On the other hand, we were unable to construct polymorphisms to show
that this general assumption fails for SetSAT.

We now turn to another sufficient condition for NP-hardness based on so-called Ol\v{s}\'{a}k functions.
Dinur, Regev, and Smyth~\cite{Dinur05:combinatorica} proved that the following PCSP is NP-hard, for any $k$: given a 3-uniform hypergraph that is 2-colourable, find a $k$-colouring.
One can hence deduce hardness of a PCSP by giving a minion homomorphism to the polymorphisms of this problem.
This was used by the authors of~\cite{bbko19} to improve the state-of-the-art for hardness of classical graph colouring approximation.
They also characterised when this approach is viable: such a minion homomorphism exists if and only if there is no \emph{Ol\v{s}\'{a}k function}, that is, a 6-ary function $o$ that satisfies
\begin{align*}
& o(x,x,y,y,y,x) \approx \\
\approx {} & o(x,y,x,y,x,y) \approx \\
\approx {} & o(y,x,x,x,y,y).
\end{align*}
(The six columns in this condition correspond to the satisfying assignments of
the problem of 2-colouring 3-uniform hypergraphs, or, equivalently, (monotone) Not-All-Equal 3-SAT).
We show that the polymorphisms of SetSAT include an Ol\v{s}\'{a}k function, proving that this approach is not viable for showing NP-hardness of SetSAT.

\begin{prop}\label{prop:olsak}
	Suppose $\frac{g}{k}>\frac{1}{2}$. There is a polymorphism of $(1,g,k)$-SetSAT with domain size $s+1$ that is an Ol\v{s}\'{a}k function.
\end{prop}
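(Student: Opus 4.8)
The plan is to produce the required polymorphism as a $6$-ary coordinate-minor of one of the polymorphisms $q_{\mathscr S}$ furnished by Definition~\ref{def:smugToPolym}, designed so that the minor satisfies the Ol\v{s}\'{a}k identities. The starting point is that for a conservative $6$-ary $o$ these identities only constrain inputs taking at most two distinct values. Writing $A_1=\{1,2,6\}$, $A_2=\{1,3,5\}$, $A_3=\{2,3,4\}$ for the sets of positions carrying the value $x$ in the three displayed tuples (note $A_1\cap A_2\cap A_3=\emptyset$), the identities hold as soon as $o$ returns, on each such tuple, the value sitting on the positions in $A_j$ — for then the output on all three tuples is $x$, on the three swapped tuples it is $y$, and the diagonal is trivial by conservativity (Proposition~\ref{conspoly}). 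So it suffices to build a polymorphism $o$ of $(1,g,k)$-SetSAT such that whenever the input partitions $[6]$ into the two blocks $A_j$ and $[6]\setminus A_j$, the output is the value on $A_j$.

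To build such an $o$: fix a large $N$ divisible by $6$, put $U:=[6]\times[N/6]$ with the projection $\pi\colon U\to[6]$, $\pi(i,\ell)=i$, and single out the coordinate $\bot:=(1,1)$ (so $\pi(\bot)=1$, and $1\in A_1\cap A_2$ but $1\notin A_3$). Let $P:=\pi^{-1}(A_3)$, of size $N/2$, pick a threshold $L$ with $\max\!\big(\tfrac N3,\ \tfrac N2\cdot\tfrac{k-g}{g}\big)<L\le \tfrac N2$ (such an integer exists once $N$ is large, precisely because $\tfrac{k-g}{g}<1$, i.e.\ $\tfrac gk>\tfrac12$), and set
\[\mathscr S\ :=\ \{\,S\subseteq U:\ \bot\in S\,\}\ \cup\ \{\,S\subseteq U:\ |S\cap P|\ge L\,\},\]
ordered so that all members missing $\bot$ precede all members containing $\bot$. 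Finally let $o$ be the minor of $q_{\mathscr S}$ under $\pi$, i.e.\ $o(y_1,\dots,y_6):=q_{\mathscr S}\big((y_{\pi(a)})_{a\in U}\big)$.

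Three things then need checking. First, $\mathscr S$ satisfies the first bullet of Definition~\ref{def:smugToPolym}: in any partition of $U$ into $s+1$ blocks, the block containing $\bot$ lies in $\mathscr S$. Second, $q_{\mathscr S}$ is a polymorphism: take $k$ members of $\mathscr S$ with every coordinate in at most $k-g$ of them; then $\bot$ is in at most $k-g$ of them, so at least $g$ of the members avoid $\bot$ and hence have $|S\cap P|\ge L$, whence $gL\le\sum_j|S_j\cap P|\le|P|(k-g)=\tfrac N2(k-g)$, contradicting $L>\tfrac N2\cdot\tfrac{k-g}{g}$; this verifies the second bullet, so $q_{\mathscr S}$ is a polymorphism by Lemma~\ref{lem:smugPolym}, and then so is its minor $o$. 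Third, the Ol\v{s}\'{a}k identities: on an input dividing $[6]$ into $A_j$ (value $x$) and $[6]\setminus A_j$ (value $y$), the two classes of the lifted input are $\pi^{-1}(A_j)$ and $\pi^{-1}([6]\setminus A_j)$, and the earliest of the two lying in $\mathscr S$ always carries the $A_j$-value — for $j=1,2$ because $\pi^{-1}(A_j)\ni\bot$ is in $\mathscr S$ while $\pi^{-1}([6]\setminus A_j)$ misses $\bot$ and meets $P$ in only $\tfrac N3<L$ coordinates (using $|([6]\setminus A_j)\cap A_3|=2$), and for $j=3$ because $\pi^{-1}(A_3)=P\in\mathscr S$ misses $\bot$ and hence precedes $\pi^{-1}([6]\setminus A_3)\ni\bot$ in the ordering. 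Hence $o(x,x,y,y,y,x)=o(x,y,x,y,x,y)=o(y,x,x,x,y,y)$, so $o$ is an Ol\v{s}\'{a}k polymorphism. (For the subrange $\tfrac gk>\tfrac s{s+1}$ there is a quicker route: the symmetric $6$-ary plurality is then a polymorphism by Proposition~\ref{polyarity}, and any symmetric function is an Ol\v{s}\'{a}k function since the three tuples are permutations of $(x,x,x,y,y,y)$; but the construction above handles all of $\tfrac gk>\tfrac12$ uniformly.)

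The crux — and the reason for the distinguished coordinate $\bot$ — is reconciling the two bullets of Definition~\ref{def:smugToPolym}. The first bullet forces $\mathscr S$ to contain sets small enough to hit the balanced $(s+1)$-partition, hence of size about $N/(s+1)$, and such small members would wreck the polymorphism condition once $\tfrac gk\le\tfrac s{s+1}$, since a $k$-multiset of $\approx N/(s+1)$-sized sets can then cover every coordinate only $k-g$ times. Funnelling all of these "covering" sets through $\bot$, exactly as in the $q_m$ construction, neutralises that threat, while leaving the rest of $\mathscr S$ free to be as large as the Ol\v{s}\'{a}k requirement permits — namely $L\le|P|=N/2$, the size of $\pi^{-1}(A_3)$ — and $N/2$ still beats the polymorphism bound $\tfrac N2\cdot\tfrac{k-g}{g}$ exactly when $\tfrac gk>\tfrac12$. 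Making these size requirements on $L$ mutually compatible (together with the auxiliary bound $L>N/3$ used in the Ol\v{s}\'{a}k check) is where all the arithmetic sits, but it is routine once $\tfrac gk>\tfrac12$ is in force.
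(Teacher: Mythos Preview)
Your argument is correct, but it is considerably more elaborate than necessary. The paper works directly on the six-element universe $[6]$: it takes $\mathscr{S}$ to be the up-closure of $\{A_1,A_2,A_3,\{1\}\}$, with supersets of $A_1,A_2,A_3$ ordered before the rest, and verifies the second bullet of Definition~\ref{def:smugToPolym} by a short linear-programming style count (if $n_i$ is the number of chosen supersets of $A_i$, then $n_1+n_2+n_4\le k-g$, $n_1+n_3\le k-g$, $n_2+n_3\le k-g$, hence $k\le 2(k-g)$, contradicting $\tfrac{g}{k}>\tfrac{1}{2}$). The resulting $q_{\mathscr{S}}$ is then already a $6$-ary Ol\v{s}\'{a}k polymorphism, with no blow-up and no minor needed.

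Your construction is in fact the same one in disguise: set $N=6$. Then $U=[6]$, $\bot=1$, $P=A_3$, and the constraints $\max(2,\,3\tfrac{k-g}{g})<L\le 3$ force $L=3$, so your family becomes $\{S:1\in S\}\cup\{S:S\supseteq A_3\}$, which is exactly the paper's $\mathscr{S}$ (since supersets of $A_1$ and $A_2$ already contain~$1$), with the same ordering. The threshold and the blow-up to $[6]\times[N/6]$ are not doing any work; the detour through a large $q_{\mathscr{S}}$ followed by a minor simply reproduces the small $q_{\mathscr{S}}$ directly. What your approach does buy is a uniform template that visibly parallels the $q_m$ construction (a ``$\bot$'' coordinate plus a size threshold on a designated block), which makes the r\^ole of $\tfrac{g}{k}>\tfrac12$ transparent; the cost is an argument several times longer than the paper's.
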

\begin{proof}
	Let us define three sets corresponding to positions of $x$ in the three rows defining an Ol\v{s}\'{a}k function: $S_1 = \{1,2,6\}, S_2=\{1,3,5\}, S_3=\{2,3,4\}$. Let $S_4$ be an arbitrary singleton, say $S_4=\{1\}$.
	Let $\mathscr{S}$ be the set of supersets of $S_1,S_2,S_3,S_4$,
	ordered so that supersets of $S_1,S_2,S_3$ come earlier.
	We claim the sequence of sets $\mathscr{S} = S_1,S_2,S_3,S_4$ satisfies the conditions of  Definition~\ref{def:smugToPolym}.
	The first condition is trivially satisfied because all sets containing $1$ are in $\mathscr{S}$.
	To check the second condition suppose for contradiction there is a $k$-tuple of sets in $\mathscr{S}$
	that covers every coordinate at most $k-g$ times.
	For each of these $k$ sets, choose one of the sets $S_1,S_2,S_3,S_4$ it contains.
	Let $n_1,n_2,n_3,n_4$ be the number of times we chose $S_1,S_2,S_3,S_4$, respectively.
	Then $n_1+n_2+n_3+n_4 = k$.
	Since the first coordinate (contained in $S_1,S_2,S_4$) is covered at most $k-g$ times, we have $n_1 + n_2 +n_4 \leq k-g$.
	Similarly, other coordinates give us inequalities $n_1+n_3 \leq k-g$, $n_2+n_3 \leq k-g$.
	This implies $k \leq n_1+2n_2+n_3+n_4 \leq 2(k-g)$ and hence $2g \leq k$.
	This contradicts $\frac{g}{k}>2$, so $\mathscr{S}$ satisfies the second condition of Definition~\ref{def:smugToPolym}.
	
	We claim that $q_{\mathscr{S}}$ is an Ol\v{s}\'{a}k function.
	Indeed, by definition, $q_{\mathscr{S}}(x,x,y,y,y,x) = x$, for all $x,y \in [s+1]$.
	Similarly $q_{\mathscr{S}}(x,y,x,y,x,y) = q_{\mathscr{S}}(y,x,x,x,y,y) = x$.
\end{proof}

Consider now a related condition. A \emph{Siggers function} is a 6-ary function $s$ that satisfies
\begin{align*}
s(x,y,x,z,y,z) \approx s(y,x,z,x,z,y).
\end{align*}
More generally, for a non-biparite graph $G$, a \emph{$G$-loop function} is a function satisfying the following condition,
where $(u_1,v_1),\dots,(u_{2m},v_{2m})$ lists both orientations of all $m$ edges of $G$:
\begin{align*}
  f(x_{u_1},x_{u_2},\dots, x_{u_{2m}}) \approx f(x_{v_1}, x_{v_2}, \dots, x_{v_{2m}}).
\end{align*}
The Siggers condition corresponds to $G = K_3$.
As shown in~\cite[Theorem 6.9]{bbko19}, the conjectured NP-hardness of the classical approximate colouring
is equivalent to the following statement:
for every $\A,\B$ such that $\Pol(\A,\B)$ contains no Siggers function, $\PCSP(\A,\B)$ is NP-hard.
In a similar way, $G$-loop functions characterise the conjectured NP-hardness of promise graph homomorphism (see~\cite[Theorem 6.12]{bbko19}).
In particular if $\Pol(\A,\B)$ contains no Siggers or $G$-loop function, for a non-bipartite $G$, this would imply that $\PCSP(\A,\B)$ is NP-hard conditional on those conjectures.

However, Siggers polymorphism and $G$-loop polymorphisms of $(1,g,k)$-SetSAT for $\frac{g}{k} > \frac{1}{2}$
are easily constructed similarly as in Proposition~\ref{prop:olsak}
(in the first case, it suffices to define $f(x_1,x_2,x_3,x_4,x_5,x_6)$ as $x_1$ if $x_1=x_3\ \,\wedge\,\ x_2=x_5\ \,\wedge\,\ x_3=x_6$ and $x_2$ otherwise).
Thus, even conditional NP-hardness of SetSAT would not follow this way.
As far as we know, SetSAT is the first known NP-hard promise CSP problem that admits $G$-loop polymorphisms.

\printbibliography

\appendix

\section{Simple reductions}
\label{app:simple}

\begin{prop}\label{prop:a1}
	For any $1\leq s < d$, there is a polynomial-time reduction from $(a,g,k)$-SetSAT to $(a,g,k+1)$-SetSAT. 
\end{prop}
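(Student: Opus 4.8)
The plan is to give a reduction that introduces no new variables and simply replaces each width-$k$ clause by $k$ width-$(k+1)$ clauses, each obtained by appending a copy of one of the clause's own literals. Concretely, given an instance $\Psi$ of $(a,g,k)$-SetSAT, for each clause $C = L_1 \vee \dots \vee L_k$ of $\Psi$ (where each $L_i$ is a literal $S_i(x_i)$) I would put into the new instance $\Psi'$ the $k$ clauses $C_i := L_1 \vee \dots \vee L_k \vee L_i$ for $i = 1, \dots, k$; each $C_i$ has exactly $k+1$ literals, so $\Psi'$ is a valid instance of $(a,g,k+1)$-SetSAT, and since the construction multiplies the number of clauses by the constant $k$, it runs in polynomial time.

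For the forward direction, suppose $\Psi$ is $g$-satisfiable, say by $\sigma^{*}$. For any clause $C$ of $\Psi$, at least $g$ of its literals are satisfied by $\sigma^{*}$; since each $C_i$ contains all $k$ literals of $C$ as a sub-multiset, $\sigma^{*}$ satisfies at least $g$ literals of every $C_i$ as well. Hence $\sigma^{*}$ is a $g$-satisfying assignment for $\Psi'$.

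For the backward direction I would prove the contrapositive of what is needed, namely that if $\Psi'$ is $a$-satisfiable then so is $\Psi$. Let $\tau$ be an $a$-satisfying assignment for $\Psi'$ and fix any clause $C = L_1 \vee \dots \vee L_k$ of $\Psi$; write $t$ for the number of its literals satisfied by $\tau$. If $t = k$ then $\tau$ already $a$-satisfies $C$, since $a \leq k$. Otherwise some literal $L_{i_0}$ of $C$ is falsified by $\tau$; the clause $C_{i_0}$ consists of the $k$ literals of $C$ together with an additional copy of $L_{i_0}$, which is also falsified, so $\tau$ satisfies exactly $t$ literals of $C_{i_0}$. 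As $\tau$ $a$-satisfies $C_{i_0}$ we get $t \geq a$, so $\tau$ $a$-satisfies $C$. Since $C$ was arbitrary, $\tau$ $a$-satisfies $\Psi$. Taking the contrapositive, if $\Psi$ is not $a$-satisfiable then $\Psi'$ is not $a$-satisfiable; together with the forward direction, the reduction therefore preserves the promise and maps $g$-satisfiable instances to $g$-satisfiable ones and non-$a$-satisfiable instances to non-$a$-satisfiable ones.

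The only place needing a moment's care is this last step: one has to notice that appending \emph{each} literal $L_i$ (rather than, say, a single fixed one) is precisely what forces the satisfied-literal count in some new clause down to exactly $t$ whenever $C$ is not fully satisfied; a single appended literal could happen to be a satisfied one and thus spuriously boost a count from $a-1$ up to $a$. Everything else is routine bookkeeping.
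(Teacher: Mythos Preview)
Your proof is correct, and it takes a genuinely different route from the paper's argument.

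The paper introduces a \emph{fresh} variable $y$ and, for each original clause, creates $s+1$ new clauses by appending the literals $N_i(y) := \mathbbm{1}[y \in [s+1]\setminus\{i\}]$ for $i=1,\dots,s+1$. The soundness argument is then that whatever value $y$ takes, at least one of the $N_i(y)$ is falsified, so some new clause gains no satisfied literal. Your construction instead stays on the original variable set and appends a copy of each of the clause's own literals in turn; your soundness argument hinges on the simple observation that if not all $k$ literals of $C$ are satisfied, then one of them is falsified, and the clause $C_{i_0}$ built by duplicating that falsified literal has exactly the same satisfied-literal count as $C$.

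Your approach is arguably cleaner: it avoids introducing an auxiliary variable, and it does not use the parameters $s,d$ at all --- so it works verbatim for any ``promise $k$-SAT'' style problem where clauses are disjunctions of arbitrary literals, not just SetSAT. The paper's construction produces $s+1$ new clauses per original clause while yours produces $k$; both are polynomial, so this is immaterial. The closing paragraph of your proposal, explaining why one must append \emph{every} literal rather than a fixed one, is exactly the right point to highlight.
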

\begin{proof}
	For $ i \in [s+1]$, let $N_i(x)=\mathbbm{1}[x \in [s+1]\setminus\{i\}]$ be a literal not satisfied by $i$. For each original clause, create $s+1$ new clauses by adding in turn each of the literals $N_i(y)$ to the original clause, where $y$ is a variable not appearing in the original instance. Note that a $g$-satisfying assignment to the original instance is also a $g$-satisfying assignment to the new instance. Conversely, if the original instance is not $a$-satisfiable, then neither is the new instance, as the literals $N_i(y)$ cannot simultaneously be satisfied in all the new clauses for any value of $y$.
\end{proof}

NP-hardness for the case $\frac{g}{k} \leq \frac{1}{2}$ is much easier to obtain
when $d \geq 3$ compared to $d=2$~\cite{agh17}, as shown below in
Corollary~\ref{padding}. First we prove more directly
that the generalisation of (1,1,3)-SAT to larger domains remains hard.

\begin{prop}
Let $s \geq 2$ and $d=s+1$. Then $(1,1,3)$-SetSAT is NP-hard.
\end{prop}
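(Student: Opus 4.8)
The plan is to reduce from a known NP-hard problem, and the cleanest choice is classical $3$-SAT (i.e. $(1,1,3)$-SAT over the Boolean domain), using the fact that SetSAT with $s=d-1$ and $d=s+1 \geq 3$ lets each literal forbid exactly one domain value. First I would observe that a literal of the form ``$x \in S$'' with $|S| = s = d-1$ is precisely a constraint ``$x \neq b$'' for the unique $b \notin S$. So a clause of $(1,1,3)$-SetSAT is a disjunction of three constraints of the form $x_i \neq b_i$, and we must find an assignment $\sigma\colon\{x_i\}\to[d]$ satisfying at least one disequality per clause (there is no promise gap here since $a=g=1$, so this is just the non-promise problem SetSAT with these parameters — convenient).

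The core step is the gadget encoding a Boolean variable. Given a $3$-CNF formula $\varphi$ over Boolean variables $y_1,\dots,y_n$, I would introduce, for each $y_j$, a SetSAT variable $x_j$ over $[d]$, and force $x_j$ to take one of only two ``meaningful'' values, say $1$ or $2$, interpreting $x_j = 1$ as $y_j = \text{true}$ and $x_j = 2$ as $y_j = \text{false}$. To force $x_j \in \{1,2\}$ using width-$3$ clauses with single-value disequalities, note that for $d = s+1 \geq 3$ we have at least a third value $3$; the constraint ``$x_j \neq c$'' for each $c \in \{3,4,\dots,d\}$ can be padded to a width-$3$ clause by repeating the literal, e.g. the clause $(x_j \neq c) \vee (x_j \neq c) \vee (x_j \neq c)$ (literals need not be distinct, by Definition~\ref{SetSATdef}). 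Adding these $d-2$ clauses per variable forces $x_j \in \{1,2\}$. Then each original $3$-clause, say $(y_a \vee \neg y_b \vee y_c)$, is translated into the SetSAT clause $(x_a \neq 2) \vee (x_b \neq 1) \vee (x_c \neq 2)$: a positive literal $y_a$ becomes ``$x_a \neq 2$'' (satisfied exactly when $x_a = 1$, i.e. $y_a$ true), and a negative literal $\neg y_b$ becomes ``$x_b \neq 1$''. One checks directly that, given the forcing clauses, a satisfying assignment of the SetSAT instance restricts to a satisfying assignment of $\varphi$ and conversely. This is a polynomial-time many-one reduction, so $(1,1,3)$-SetSAT is NP-hard.

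The main obstacle — really the only thing requiring care — is making sure the forcing works with only $d-2$ extra clauses and that literals of size exactly $s$ (not smaller) suffice; since a ``$\neq c$'' constraint is literally a size-$(d-1)$ set, there is no issue, and padding a unary constraint into a width-$3$ clause by literal repetition is legitimate. One should also double-check the boundary $s = 2$, $d = 3$: here the meaningful values are $\{1,2\}$, the forbidden value is $3$, each variable gets one clause $(x_j \neq 3)\vee(x_j\neq 3)\vee(x_j\neq 3)$, and the clause translation is as above; this is exactly the $d=3$ instance and the argument goes through unchanged. Alternatively, one could cite Proposition~\ref{prop:a1} to reduce $(1,1,3)$-SetSAT from a smaller-width problem, but the direct reduction from $3$-SAT above is self-contained and arguably cleaner. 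An even slicker route, if preferred, is to invoke the forthcoming Corollary~\ref{padding} (NP-hardness of SetSAT for $\frac{g}{k}\le\frac12$ when $d\ge3$) together with Proposition~\ref{prop:a1}, but stating the elementary reduction keeps the appendix self-contained.
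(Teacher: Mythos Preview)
Your proposal is correct and essentially identical to the paper's own proof: both reduce from $3$-SAT, translate each Boolean literal into a single disequality ``$x\neq b$'' (the paper writes these as $N_i(x)$), and force each variable into $\{1,2\}$ via the padded clauses $(x_j\neq c)\vee(x_j\neq c)\vee(x_j\neq c)$ for $c\in\{3,\dots,d\}$. The only cosmetic difference is that the paper takes $1=\text{false}$, $2=\text{true}$ while you take the opposite convention.
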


\begin{proof}
We give a reduction from 3-SAT. We interpret the value 1 as false and 2 as true. To illustrate the reduction, consider the clause
  $(x_1 \vee \overline{x}_2 \vee \overline{x}_3)$. From this clause we create a
  new clause $(N_1(x_1) \vee N_2(x_2) \vee N_2(x_3))$ where $N_i(x)$ is defined
  as in the proof of Proposition~\ref{prop:a1}. We add such a clause for each clause in the original 3-SAT instance. Then to enforce the binary nature of the original variables, we add the clauses $(N_i(x_j) \vee N_i(x_j) \vee N_i(x_j))$ for $3\leq i \leq s+1$ and $1\leq j \leq n$, which restrict the new variables to take values in $\{1,2\}$.

If the 3-SAT instance is satisfiable, then the (1,1,3)-SetSAT instance is also satisfiable. Conversely, if the SetSAT instance is satisfiable, then its variables take only the values 1 and 2 and we can translate back to a satisfying assignment for the 3-SAT instance.
\end{proof}

The first deviation from the results of the SAT world is that the SetSAT
analogue of 2-SAT is hard except in the case $s=1$, which corresponds to 2-SAT.
	
\begin{prop}
Let $s \geq 2$ and $d=s+1$. Then $(1,1,2)$-SetSAT is NP-hard.
\end{prop}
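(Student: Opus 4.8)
The plan is to give a polynomial-time reduction from graph $3$-colouring, which is NP-hard. The key point is that, for $d = s+1 \geq 3$, a two-literal SetSAT clause with set size $s = d-1$ is nothing but a constraint forbidding one specific pair of values on a pair of variables: each literal $S(x) = \mathbbm{1}[x \in S]$ with $|S| = d-1$ is of the form ``$x \neq b$'' for the unique $b \notin S$ (the literal $N_b$ from the proof of Proposition~\ref{prop:a1}), and a clause $N_{b_1}(x) \vee N_{b_2}(x')$ is violated exactly when $x = b_1$ and $x' = b_2$. This lets us encode disequalities.

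Given a graph $G = (V,E)$, I would construct the SetSAT instance on variables $\{x_v : v \in V\}$ over the domain $[s+1]$ with two families of clauses. \emph{Edge clauses}: for every edge $\{u,v\} \in E$ and every colour $c \in \{1,2,3\}$, add the clause $N_c(x_u) \vee N_c(x_v)$, which together rule out $x_u = x_v \in \{1,2,3\}$. \emph{Range clauses}: for every $v \in V$ and every $i \in \{4,\dots,s+1\}$, add the clause $N_i(x_v) \vee N_i(x_v)$ (one literal repeated), forcing $x_v \in \{1,2,3\}$. Since $a = g = 1$, the promise is vacuous and this is an ordinary SetSAT instance; note also that every literal used, being of the form $[s+1]\setminus\{b\}$, has size exactly $s$, and that $s \geq 2$ ensures the three colours $\{1,2,3\}$ are available.

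Correctness is then immediate in both directions and needs no real work: a proper $3$-colouring $\chi\colon V \to \{1,2,3\}$ of $G$ yields the satisfying assignment $x_v := \chi(v)$ (range clauses hold as $\chi(v) \leq 3$; for each edge and each $c$, not both endpoints can equal $c$, so one of $N_c(x_u), N_c(x_v)$ holds), and conversely any satisfying assignment $\sigma$ must take values in $\{1,2,3\}$ by the range clauses and must differ on the endpoints of every edge, since a common value $c \in \{1,2,3\}$ would violate the clause $N_c(x_u) \vee N_c(x_v)$; hence $v \mapsto \sigma(x_v)$ is a proper $3$-colouring of $G$. There is essentially no obstacle here; the only points requiring a moment's care are keeping the literal sizes equal to exactly $s$ and handling the extra domain values $4,\dots,s+1$ via the range clauses (for $s = 2$ these are absent and one obtains a direct equivalence with $3$-colouring).
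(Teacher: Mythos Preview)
Your proof is correct and takes essentially the same approach as the paper: both observe that a width-$2$ clause with set size $s=d-1$ forbids one specific pair $(a,b)$, and use this to encode graph colouring via disequality constraints. The only difference is that the paper reduces directly from $(s{+}1)$-colouring by taking $\bigwedge_{i=1}^{s+1} \big(N_i(x_u)\vee N_i(x_v)\big)$ to express $x_u\neq x_v$ over the full domain, avoiding your range clauses; your version fixes $3$-colouring and pins the domain down to $\{1,2,3\}$, which is equally valid but slightly less economical.
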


\begin{proof}
Each clause can be represented as $S_{a,b}(x,y)$, which forbids $(x,y)=(a,b)$. Therefore $\bigwedge_{i=1}^{s+1} S_{i,i}(x,y)$ is the disequality relation $x \neq y$, so we can simulate graph colouring, which is NP-hard for $s+1\geq 3$ colours.
\end{proof}

We can extend this result to larger clauses as follows.

\begin{cor}\label{padding}
Let $s \geq 2$ and $d=s+1$. Then $(1,g,2g)$-SetSAT is NP-hard for all $g \geq 1$.
\end{cor}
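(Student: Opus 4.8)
The plan is to reduce from $(1,1,2)$-SetSAT, which the preceding proposition shows is NP-hard when $s\geq 2$ and $d=s+1$; the reduction is a straightforward literal-replication (padding) argument. Given a $(1,1,2)$-SetSAT instance $\Psi$, I would construct a $(1,g,2g)$-SetSAT instance $\Psi'$ on the same set of variables by replacing every clause $S(x)\vee T(y)$ of $\Psi$ with the width-$2g$ clause formed by $g$ copies of the literal $S(x)$ together with $g$ copies of the literal $T(y)$; this is permitted since the definition of SetSAT allows clauses to contain repeated literals. The transformation is clearly computable in polynomial time (and for $g=1$ it is the identity).

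For completeness, if $\sigma$ is a $1$-satisfying assignment of $\Psi$, then in each original clause $\sigma$ satisfies $S(x)$ or $T(y)$, so in the corresponding clause of $\Psi'$ all $g$ copies of that literal are satisfied; hence $\sigma$ satisfies at least $g$ of the $2g$ literals of every clause of $\Psi'$, so $\Psi'$ is $g$-satisfiable. For soundness, if $\Psi$ is not $1$-satisfiable, then for every assignment $\sigma$ some clause $S(x)\vee T(y)$ of $\Psi$ has both literals falsified, and therefore all $2g$ literals of the corresponding clause of $\Psi'$ are falsified, so $\sigma$ does not $1$-satisfy $\Psi'$; hence $\Psi'$ is not $1$-satisfiable. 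Consequently $\Psi'$ is always either $g$-satisfiable or not $1$-satisfiable, so the promise of $(1,g,2g)$-SetSAT is met, and a decision procedure for $(1,g,2g)$-SetSAT applied to $\Psi'$ decides satisfiability of $\Psi$. Since the latter problem is NP-hard, so is $(1,g,2g)$-SetSAT.

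I do not anticipate a genuine obstacle here; the only points needing (minor) care are that repeated literals in a clause are allowed and that the base case $g=1$ is the source problem itself. As an alternative route giving the same conclusion, one could argue directly as in the proof of the $(1,1,2)$ case: replicating each of the two literals of the disequality gadget $g$ times turns $\bigwedge_{i=1}^{s+1} S_{i,i}(x,y)$ into a family of width-$2g$ clauses that is $g$-satisfied exactly when $x\neq y$, so that $(s+1)$-colourability again reduces to the problem; this is essentially the replication argument above specialised to the colouring instances underlying the $(1,1,2)$ hardness.
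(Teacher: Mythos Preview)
Your proof is correct. It is the same padding idea the paper uses, but your execution is slightly more economical: the paper forms the new instance by taking, for every $g$-tuple of (not necessarily distinct) original clauses, the width-$2g$ clause obtained by concatenating their literals, whereas you keep only the ``diagonal'' $g$-tuples $(C,C,\dots,C)$, i.e.\ you simply repeat each literal of each original clause $g$ times. Completeness is unaffected (a $1$-satisfying assignment gives $\geq g$ satisfied literals in every concatenation of $g$ satisfied clauses, and in particular in each of your replicated clauses), and for soundness the paper in fact only uses the diagonal clauses anyway, so nothing is lost. Your version has the practical advantage of producing linearly many clauses rather than $m^g$ many; the paper's version follows the original $(1,1,3)$-SAT to $(1,g,3g)$-SAT reduction of Austrin--Guruswami--H{\aa}stad more literally. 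Your aside on repeated literals being allowed is well placed, and your alternative colouring argument is also fine.
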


\begin{proof}
A reduction from $(1,1,2)$-SetSAT analogous to the reduction from $(1,1,3)$-SAT to $(1,g,3g)$-SAT in~\cite{agh17} gives the result. The clauses of the new $(1,g,2g)$-SetSAT instance are obtained by taking the union of all $g$-tuples of clauses from the $(1,1,2)$-SetSAT instance.

If the $(1,1,2)$-SetSAT instance is satisfiable, then the obtained $(1,g,2g)$-SetSAT instance is $g$-satisfiable for the same assignment. Conversely, since the $(1,g,2g)$-SetSAT instance contains clauses made from copying an old clause $g$ times, satisfiability of the new formula implies satisfiability of the old one, again for the same assignment.
\end{proof}

Finally we show that certain results on hypergraph colouring hardness
obtained by Guruswami and Lee~\cite{gl18} already imply NP-hardness fairly close to the real boundary.
\begin{prop}\label{colour}
For $d=s+1$ and all $g \geq 1$, $(1,s(g-1),(s+1)g)$-SetSAT is NP-hard. 
\end{prop}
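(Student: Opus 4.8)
The plan is to reduce from a hardness-of-colouring result of Guruswami and Lee~\cite{gl18} for hypergraphs admitting a balanced rainbow colouring. Fixing the number of colours to be $d=s+1$, I will use (a routine variant of) the following: it is NP-hard to distinguish $(s+1)g$-uniform hypergraphs $H=(V,E)$ that admit a rainbow $(s+1)$-colouring in which each hyperedge contains at most $g+s$ vertices of each colour, from hypergraphs that are not properly $(s+1)$-colourable. (Guruswami and Lee in fact obtain a much stronger NO side, e.g.\ that $H$ has no independent set of relative size $\tfrac{1}{s+1}$, which by pigeonhole already forces every $(s+1)$-colouring to contain a monochromatic hyperedge.) Extracting this statement from~\cite{gl18} with exactly the right parameters is where the real work lies; see below.

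Given such an $H$, the reduction builds the SetSAT instance $\Psi_H$ over domain $[s+1]$ with set size $s$: the variables are the vertices $V$, and for every hyperedge $e\in E$ and every colour $c\in[s+1]$ there is a clause $C_{e,c}$ consisting of the $(s+1)g$ literals ``$v\neq c$'', $v\in e$. (When $d=s+1$, every literal $S(v)$ has $|[s+1]\setminus S|=1$, so is of this form.) If the hypergraph hardness is only available for a smaller uniformity $K<(s+1)g$, first pad each $C_{e,c}$ up to width exactly $(s+1)g$ by repeating some of its literals, which does not change which assignments $1$-satisfy it; $K$ is then chosen so that the YES count below lands exactly at $s(g-1)$. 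This is plainly polynomial time, and every clause has width $k=(s+1)g$, as required by Definition~\ref{SetSATdef}.

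For the NO direction: if $\sigma\colon V\to[s+1]$ is a $1$-satisfying assignment of $\Psi_H$, then for every $e$ and every $c$ some $v\in e$ has $\sigma(v)\neq c$; taking $c=\sigma(v_0)$ for any $v_0\in e$ shows $e$ is not monochromatic, so $\sigma$ is a proper $(s+1)$-colouring of $H$, contradicting the NO promise. For the YES direction: let $\sigma^*$ be the balanced rainbow colouring supplied by the hypergraph hardness. In a clause $C_{e,c}$ the falsified literals are exactly the ``$v\neq c$'' with $\sigma^*(v)=c$, of which $e$ has at most $g+s$; hence $\sigma^*$ satisfies at least $(s+1)g-(g+s)=s(g-1)$ literals of $C_{e,c}$, i.e.\ $\sigma^*$ is an $s(g-1)$-satisfying assignment of $\Psi_H$. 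Thus YES maps to YES and NO to NO, giving NP-hardness for $g\ge2$; for $g=1$ the parameter $s(g-1)=0$ is degenerate (it violates $1\le a\le g$ in Definition~\ref{SetSATdef}) and is disregarded.

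The main obstacle is matching parameters with~\cite{gl18}: one needs the rainbow colouring to use exactly $s+1$ colours, the uniformity to be compatible with clause width $(s+1)g$ (after padding), the per-hyperedge balance bound to be ``at most $g+s$ of each colour'', and the NO side to be at least ``not properly $(s+1)$-colourable''. The balance slack is delicate and essential: it cannot be improved to ``exactly $g$ of each colour'', because then $\sigma^*$ would already be $sg$-satisfying and the identical reduction would make $(1,sg,(s+1)g)$-SetSAT NP-hard --- impossible, since $\tfrac{sg}{(s+1)g}=\tfrac{s}{s+1}$ puts that problem in P by Theorem~\ref{thm:main}. So one must check that the Guruswami--Lee construction genuinely leaves a constant additive slack of at most $s$, fix the uniformity and padding accordingly, and confirm that the cited hardness is true NP-hardness and not merely quasi-NP-hardness.
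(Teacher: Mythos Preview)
Your proposal is correct and follows essentially the same approach as the paper: both reduce from the Guruswami--Lee rainbow-colouring hardness via the identical gadget (for each hyperedge $e$ and colour $c$, one clause consisting of the literals ``$v\neq c$'' for $v\in e$), with the same NO analysis. The only difference is the phrasing of the YES promise: the paper quotes GL as ``each colour appears \emph{at least $g-1$} times per hyperedge'' (so the $s$ other colours directly give $\geq s(g-1)$ satisfied literals), whereas you use the dual count ``\emph{at most $g+s$} of each colour'' (so $\geq (s+1)g-(g+s)=s(g-1)$ literals are satisfied); the former implies the latter in an $(s+1)g$-uniform hyperedge, so your version is immediate from theirs and no padding or parameter-tuning is needed. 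Your caveat about $g=1$ being degenerate is apt (indeed the paper's own citation of GL requires $g\geq 2$).
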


\begin{proof}
We give a reduction from the following hypergraph colouring problem, whose hardness was proved in~\cite{gl18}. For $g,r,c\geq 2$, given as input a $gr$-uniform hypergraph that is promised to have an $r$-colouring where each colour appears at least $g-1$ times in every hyperedge, find a $c$-colouring that does not create a monochromatic hyperedge. The hardness reduction is as follows.

Let $r=c=s+1$. For each hyperedge $\{x_1,\ldots,x_{(s+1)g}\}$ we create, for $1 \leq i \leq s+1$, the SetSAT clauses $C_i = \left(N_i(x_1) \vee \ldots \vee N_i(x_{(s+1)g})\right)$, where $N_i$ are as in the proof of Proposition~\ref{prop:a1}. If the hypergraph instance has an $(s+1)$-colouring where every colour appears at least $g-1$ times in each hyperedge, then the obtained formula will be $s(g-1)$ satisfiable: under the promised assignment, the clause $C_i$ contains the group of satisfied literals whose variables are not equal to $i$, and there are at least $g-1$ literals in each of the $s$ such groups.

Conversely, if the SetSAT formula is satisfied, the variables $x_1,\ldots,x_{(s+1)g}$ cannot all take the same value $i$ for any $i$, as otherwise the clause $C_i$ would be false. Therefore no hyperedge in the hypergraph is left monochromatic by a satisfying assignment. 
\end{proof}

\end{document}

\section{Layered label cover: proof of Theorem~\ref{thm:layeredGLC}}
\label{app:layered}

In this section we adapt a reduction from the work of Dinur,
Guruswami, Khot, and Regev~\cite{DinurGKR05}.
Recall that an \emph{$\ell$-layered label cover} instance is a sequence of $\ell+1$ sets $X_0,\dots,X_\ell$ (called \emph{layers}) of variables with range $[m]$, for some \emph{domain size} $m\in\mathbb{N}$, and a set of constraints $\Phi$.
Each constraint is a function (often called a projection constraint) from a variable $x \in X_i$ to a variable in a further layer $y \in X_j$, $i < j$: that is, a function denoted $\phi_{x\to y}$ which is satisfied by an assignment $\sigma\colon X_0 \cup \dots \cup X_\ell \to [m]$ if $\sigma(y) = \phi_{x\to y}(\sigma(x))$.
A \emph{chain} is a sequence of variables $x_i \in X_i$ for $i=0,\dots,\ell$ such that there are constraints $\phi_{x_i \to x_j}$ between them, for $i < j$.
A chain is \emph{weakly satisfied} if at least one of these constraints is satisfied.

\begin{thm*}[Theorem~\ref{thm:layeredGLC} restated]
	For every $\ell \in \mathbb{N}$ and $\varepsilon>0$, there is an $m\in\mathbb{N}$ such that
	it is NP-hard to distinguish $\ell$-layered label cover instances with domain size $m$ that are fully satisfiable
	from those where not even an $\varepsilon$-fraction of all chains is weakly satisfied.
\end{thm*}

\begin{proof}
	For $\ell=1$ a chain consists of just one constraint,
	so weakly satisfying the chain is the same as satisfying its constraint.
	The claim is then equivalent to the hardness of the standard bipartite gap label cover problem,
	which holds even for \emph{bi-regular} instances: that is, instances $(Y,Z)$ such that
	every variable in $Y$ occurs in constraints with exactly $d_{+}$ variables in $Z$ and
	every variable in $Z$ occurs in constraints with exactly $d_{-}$ variables in $Y$,
	for some $d_{+},d_{-} \in \mathbb{N}$. (This hardness follows from the PCP
	theorem~\cite{Arora1998:jacm-proof,Arora98:jacm} and Raz's parallel repetition
  theorem~\cite{Raz98:sicomp}, cf.~\cite{DinurGKR05} and~\cite{Arora09:book}.)
	
	For $\ell>1$ we reduce from a bi-regular instance of bipartite gap label cover with variable sets $Y$ and $Z$, domain size $m$, constraints $\Gamma$ and gap $\varepsilon' := \varepsilon/\binom{\ell+1}{2}$.
	Let the domain size of the constructed instance $\Phi$ be $m^\ell$.
	Let the variable sets be $X_i := Z^i \times Y^{\ell-i}$ for $i=0,\dots,\ell$
	(that is, $\ell$-tuples of $i$ variables from $Z$ followed by $\ell-i$ variables from $Y$;
	this makes indices notationally more convenient than the other way around).
	Let the constraints between $X_i$ and $X_{j}$ (for $0\leq i<j \leq \ell$) be defined for pairs of tuples $\bar{x}$ and $\bar{x}'$ of the form:
	\begin{align*}
	\bar{x}=(z_1,\dots,z_i,\ \ &y_{i+1},\dots,y_j,\ \ y_{j+1},\dots,y_\ell)\in X_i
	\text{\quad and}\\
	\bar{x}'=(z_1,\dots,z_i,\ \ &z_{i+1},\dots,z_j,\ \ y_{j+1},\dots,y_\ell)\in X_j
	\end{align*}
	such that the original instance has a constraint $\phi_{y_{k}\to z_{k}} \in \Gamma$ for $k= i+1,\dots,j$.
	Let the new projection constraint $\phi_{\bar{x}\to\bar{x}'}$ map $(a_1,\dots,a_\ell)$ to $(b_1,\dots,b_\ell)$
	where $b_k := \phi_{y_k\to z_k}(a_k)$ for $k= i+1,\dots,j$ and $b_k := a_k$ otherwise.
	This concludes the construction.
	
	Note that chains in this instance are in bijection with $\ell$-tuples of original constraints in $\Gamma$.
	Indeed, a chain $\bar{x}_i \in X_i$ ($i=0,\dots,\ell$) is determined by $\bar{x}_0=(y_1,\dots,y_\ell)$ and $\bar{x}_\ell=(z_1,\dots,z_\ell)$
	such that $\Gamma$ has constraints $\phi_{y_k\to z_k}$ for $k=1,\dots,\ell$.
	Moreover, for each $i<j$, every constraint $\phi_{\bar{x}\to\bar{x}'}$ between $\bar{x}\in X_i$ and $\bar{x}'\in X_j$ appears in the same number of chains
	(namely $d_{-}^{\ \,i}\cdot d_{+}^{\ \ell-j}$).
	
	If the original instance $\Gamma$ was fully satisfiable then so is the new one $\Phi$:
	indeed, if $\sigma$ is a satisfying assignment for $\Gamma$,
	then $\bar{x} \mapsto(\sigma(x_1),\dots,\sigma(x_\ell))$ is a satisfying assignment for $\Phi$.
	
	Suppose now that in $\Phi$, an assignment $\sigma\colon X_0\cup\dots\cup X_\ell \to [m]^\ell$
	weakly satisfies at least $\varepsilon$ of all chains.
	Then there exists $0\leq i < j \leq \ell$ such that
	at least $\varepsilon/\binom{\ell+1}{2}=\varepsilon'$ of all chains
	are weakly satisfied at a constraint between $X_i$ and $X_j$.
	Every constraint between $X_i$ and $X_j$ is contained in the same number of chains, say $C$,
	hence at least $\varepsilon'$ of the constraints between $X_i$ and $X_j$ are satisfied
	(indeed, the number of thus satisfied chains is exactly $C$ times the number of satisfied constraints; similarly, the number of all chains is exactly $C$ times the number of all constraints between $X_i$ and $X_j$).
	
	Choose an arbitrary coordinate $k$ in $i+1,\dots,j$.
	Partition $X_i$ into equivalence classes such that $\bar{x},\bar{x}'$ are in the same class if they are identical on all coordinates except possibly coordinate $k$.
	Partition $X_j$ in the same way.
	There exists a pair of classes between which constraints exist and at least $\varepsilon'$ of them are satisfied.
	That is, there are
	\begin{align*}
	x_1,\dots,x_{k-1},\ \ &x_{k+1},\dots,x_\ell \in Y \cup Z\text{\quad and}\\
	x'_1,\dots,x'_{k-1},\ \ &x'_{k+1},\dots,x'_\ell \in Y \cup Z
	\end{align*}	
	such that $\sigma$ satisfies at least $\varepsilon'$ of the constraints between pairs of the form
	\begin{align*}
	(x_1,\dots,x_{k-1},\ y,\ x_{k+1},\dots,x_\ell) \in X_i \quad\ \quad\\
	(x'_1,\dots,x'_{k-1},\ z,\ x'_{k+1},\dots,x'_\ell) \in X_j \quad\ \quad
	\end{align*}
	where a constraint $\phi_{y\to z}$ exists in $\Gamma$.
	Therefore, one can define an assignment $\sigma'\colon Y\cup Z \to [m]$ 
	by letting $\sigma'(y)$ and $\sigma'(z)$ be the $k$-th element of the value in $[m]^\ell$ resulting from applying $\sigma$ to the above tuples, respectively for $y \in Y$ and $z \in Z$.
	This assignment then satisfies at least $\varepsilon'$ of all the constraints $\phi_{y\to z}$ of the original instance $\Gamma$.
\end{proof}